\newtheorem{theorem}{Theorem}
\newtheorem{lemma}[theorem]{Lemma}
\newtheorem{definition}[theorem]{Definition}
\newcommand{\eps}{\epsilon}
\newcommand{\mcT}{\mathcal{T}} 
\newcommand{\mcS}{\mathcal{S}} 
\newcommand{\mcG}{\mathcal{G}} 
\newcommand{\WCT}{\mathcal{WCT}} 
\newcommand{\poly}{\mathrm{poly}} 
\newcommand{\NC}{\text{NC}} 
    {
      \begin{enumerate}[label=\emph{\roman*)}]%
    }
    {%
    \end{enumerate}%
    }
\newenvironment{enumRomanHor}
    {%
      \begin{enumerate*}[label=\emph{\roman*)}]%
    }
    {%
    \end{enumerate*}%
    }
\newcommand{\E}{\mathop{{}\mathbb{E}}}
\newcommand{\FullOrShort}{full}
  \newcommand{\fullOnly}[1]{#1}
  \newcommand{\shortOnly}[1]{}
  \newcommand{\fullOnly}[1]{}
  \newcommand{\shortOnly}[1]{#1}
\begin{document}

\title{Broadcasting in Noisy Radio Networks}

\newcommand{\ns}{\normalsize}

\author{\ns Keren Censor-Hillel\footnotemark[1]\\ \ns Technion\\ \ns \texttt{ckeren@cs.technion.ac.il} \and
                                \ns Bernhard Haeupler\footnotemark[2]\\ \ns Carnegie Mellon University\\ \ns \texttt{haeupler@cs.cmu.edu} \and
                                \ns D. Ellis Hershkowitz\footnotemark[2]\\ \ns Carnegie Mellon University\\ \ns \texttt{dhershko@cs.cmu.edu} \and
                                \ns Goran Zuzic\footnotemark[2]\\ \ns Carnegie Mellon University\\ \ns \texttt{gzuzic@cs.cmu.edu}}

\date{}

\renewcommand{\thefootnote}{\fnsymbol{footnote}}

\footnotetext[1]{Supported in part by the Israel Science Foundation (grant 1696/14) and the Binational Science Foundation (grant 2015803).}
\footnotetext[2]{Supported in part by the National Science Foundation through grants CCF-1527110 and CCF-1618280.}

\renewcommand{\thefootnote}{\arabic{footnote}}
\setcounter{footnote}{0}

\maketitle

\begin{abstract}
The widely-studied radio network model [Chlamtac and Kutten, 1985] is a graph-based description that captures the inherent impact of collisions in wireless communication. In this model, the strong assumption is made that node $v$ receives a message from a neighbor if and only if exactly one of its neighbors broadcasts.

We relax this assumption by introducing a new \emph{noisy radio network model} in which random faults occur at senders or receivers. Specifically, for a constant noise parameter $p \in [0,1)$, either every sender has probability $p$ of transmitting noise or every receiver of a single transmission in its neighborhood has probability $p$ of receiving noise.

We first study \emph{single-message broadcast} algorithms in noisy radio networks and show that the Decay algorithm [Bar-Yehuda et al., 1992] remains robust in the noisy model while the diameter-linear algorithm of Gasieniec et al., 2007 does not. We give a modified version of the algorithm of Gasieniec et al., 2007 that is robust to sender and receiver faults, and extend both this modified algorithm and the Decay algorithm to robust \emph{multi-message broadcast} algorithms, broadcasting $\Omega\left(\frac{1}{\log n \log \log n}\right)$ and $\Omega\left(\frac{1}{\log n}\right)$ messages per round, respectively.

We next investigate the extent to which (network) coding improves throughput in noisy radio networks. In particular, we study the coding cap -- the ratio of the throughput of coding to that of routing -- in noisy radio networks. We address the previously perplexing result of Alon~et~al.~2014 that worst case coding throughput is no better than worst case routing throughput up to constants: we show that the worst case throughput performance of coding is, in fact, superior to that of routing -- by a $\Theta(\log(n))$ gap -- provided receiver faults are introduced. However, we show that sender faults have little effect on throughput. In particular, we show that any coding or routing scheme for the noiseless setting can be transformed to be robust to sender faults with only a constant throughput overhead. These transformations imply that the results of Alon et al., 2014 carry over to noisy radio networks with sender faults as well. As a result, if sender faults are introduced then there exist topologies for which there is a $\Theta(\log\log{n})$ gap, but the worst case throughput across all topologies is $\Theta\left(\frac{1}{\log{n}}\right)$ for both coding and routing.
\end{abstract}

\vfill
\vfill
\vfill

\setcounter{page}{0}
\thispagestyle{empty}

\newpage

\section{Introduction}
Broadcasting messages throughout a network is one the most important network communication primitives.
The classic radio network model of \citet{onBroadChlamtac} was designed as a mathematical model to study broadcasting in a wireless setting. In this model, a radio network is represented by an undirected graph of $n$ nodes that communicate by transmitting messages during synchronized rounds. During each round, every node can either listen to the radio channel or transmit a message. A listening node $v$ receives a message if and only if exactly one of its neighbors transmits during that round. If more than a single neighbor transmits in a certain round, a collision occurs and node $v$ does not receive any of the transmitted messages.

The broadcast task in the classic radio network model -- in which a single message or multiple messages need to be disseminated throughout the network -- has been studied extensively~\cite{bar1992time, alon1991lower, Czumaj2006115,Kowalski2005,Peleg2007,Yehuda1989}. Classic algorithms \cite{bar1992time,gkasieniec2007faster} typically employ \textit{routing}. That is, nodes only transmit one of the messages the algorithm is disseminating. Alternatively, \textit{(network) coding} approaches, in which nodes transmit multiple messages coded together, have been of recent interest~\cite{Ahlswede2006,haeuplersodap1843,Khabbazian2011,ghaffari2015randomized,Li2009,Halperin2003,Censor-Hillel2014}.

Despite having been extensively studied, a notable deficit of the classic radio network model is its inability to model random noise. Specifically, the classic model assumes that a message that is sent without collisions is correctly received. However, this assumption is overly optimistic for real environments, in which noise may impede communication.

\paragraph{Our contribution.}
In this paper, we introduce the \emph{noisy radio network model}, in which the classic graph-based model of \citet{onBroadChlamtac} is augmented with random faults. In particular, for a constant fault parameter $p \in [0,1)$, every transmission may be noisy with probability $p$ (\emph{sender fault}), or a node $v$ that would otherwise receive a message with probability $p$ receives noise instead (\emph{receiver fault}). The faults occur independently at each node.

We begin by studying the extent to which the performance of existing single-message broadcasting algorithms deteriorates: we show that while the Decay algorithm of Bar-Yehuda, Goldreich and Itai ~\cite{bar1992time} is robust to faults, the diameter-linear algorithm of G\k{a}sieniec, Peleg and Xin~\cite{gkasieniec2007faster} (which we call FASTBC) deteriorates considerably. We then develop a new single-message, diameter-linear algorithm for the noisy radio network model. Moreover, we describe how to extend both Decay and our modified algorithm to multi-message broadcasting algorithms, achieving throughputs of $\Omega\left(\frac{1}{\log n}\right)$ and $\Omega\left(\frac{1}{\log n \log\log n}\right)$ messages per round, respectively.

The main challenge that arises when designing fault-robust algorithms is avoiding careful deterministic round synchronization; that is, random faults prevent nodes from knowing exactly which other nodes have a message in any given round. Prior algorithms used round synchronization of this nature but can be made fault-robust by repeating certain otherwise fragile subroutines.

Additionally, we study the power of (network) coding in the new noisy radio network model. In particular, we examine the \emph{coding gap} in this model: roughly the ratio between the throughput of coding to the throughput of routing. For the classic radio network model, recent work of \citet{haeuplersodap1843} demonstrated an $\Omega(\log \log n)$ coding gap on certain networks. However, the same work shows that up to constant factors, in worst case topologies, coding performs no better than routing, when one broadcasts a very large number of messages. This runs contrary to the intuition that coding ought to improve the throughput of communication.

We resolve these counterintuitive results by showing that coding is, in fact, much more powerful than routing, \emph{provided receiver faults occur}. Not only do we show that with receiver faults coding throughput is superior to routing by a $\Theta(\log n)$ gap on certain topologies, but we prove that the worst case performance of coding is superior to that of routing by a $\Theta(\log n)$ gap. These results emphasize that in practical settings coding can, in fact, significantly improve broadcast efficiency.

Lastly, we show that any algorithm for the classical radio network model can be made robust to sender faults at the price of a constant factor in throughput. This implies that the counterintuitive results of \citet{haeuplersodap1843} carry over to the noisy radio network model \emph{with sender faults}.

\section{Related Work}
The radio network model was introduced more than 30 years ago in the pioneering work of \citet{onBroadChlamtac}. Since then computation in radio networks has been extensively studied. An excellent survey is given by~\citet{Peleg2007}. Below, we discuss the most related work.

\textbf{Single-message broadcasting in radio networks:} In~\citet{bar1992time}, it was shown that single-message broadcast in a known topology can be achieved in $O(D\log n + \log^2 n)$ rounds, where $D$ is the diameter of the network. This was improved by~\citet{gkasieniec2007faster} and ~\citet{Kowalski2007} who showed that if the topology is known broadcast can be completed in $O(D + \log^2 n)$ rounds. If the topology is not known, \citet{Czumaj2006115} show that $O(D \log (n/D) + \log^2 n)$ rounds suffice. By the $\Omega(\log^2{n})$ and $\Omega(D \log (n/D))$ lower bounds of~\citet{alon1991lower} and~\citet{Kushilevitz1993}, respectively, the above complexities are optimal. Lastly, if \emph{collision detection} is available -- a stronger assumption than known topology -- a $O(D + \poly\log n)$-round algorithm is achievable as shown by~\citet{ghaffari2015randomized}.

\textbf{Multi-message broadcasting in radio networks:} The earliest work for broadcasting $k$ messages is~\citet{Yehuda1989}, who gave an algorithm that used $O((n + (k+D)\log n)\log \Delta)$ rounds, where $\Delta$ is the maximum node degree. A deterministic algorithm that works in $O(n \log^4 n + k\log^3 n)$ rounds was given by~\citet{chlebus2011efficient}. The first algorithm to beat the performance of~\citet{Yehuda1989} and to achieve an optimal $O(k \log n)$ dependence on the number of messages was given by \citet{ghaffari2013fast}. This algorithm completes in $O(k\log n + D\log n/D + \poly\log n)$ rounds. The $\Omega(k \log n)$ lower bound for the number of rounds is given by \citet{haeuplersodap1843}. \citet{ghaffari2015randomized} provided an algorithm that completes in $O(k\log n + D + \log^2 n)$ rounds if the topology is known. Both of the latter algorithms crucially rely on (network) coding.

\textbf{Network coding and coding gap:}
Network coding was first studied outside of the radio broadcast model by \citet{Ahlswede2006}.
For multiple-message broadcast, network coding algorithms by \citet{ghaffari2015randomized} and~\citet{Khabbazian2011} achieved a throughput of $\Omega(1/ \log n)$. A network coding gap of $\Omega(\log \log n)$ on certain topologies for the radio network model was demonstrated by~\citet{haeuplersodap1843} as was a $\Theta(1)$ worst case gap.

In wired undirected networks, \citet{Li2009} show the network coding gap to be at most $2$. For directed wired networks, \citet{Halperin2003} shows that the coding gap corresponds to the integrality gap of directed Steiner-tree LP. For the wired vertex-congest model, in which messages sent at the same time do not collide, a tight coding gap of $\Theta(\log n)$ was given by~\citet{Censor-Hillel2014}.

\textbf{Robust communication models:} The noisy broadcast model, introduced by \citet{ElGamal1984}, resembles our model as it assumes random errors. This model was mostly studied in the context of computing functions of the inputs of nodes~\cite{Gallager88, Newman04, GoyalKS08, kushilevitz1998computation}. However, this model assumes a complete communication network and single-bit transmissions. An extension of this line of work for random planar networks was also studied~\cite{KanoriaM2007,YingSD06,DuttaR08,DuttaKMR08}.
 Unlike our own model, in this model a node can receive a message from multiple neighbors in a single round.

Another model that captures unreliability in radio networks is the \emph{dual graph} model~\cite{KuhnLN11,Censor-HillelGKLN14,KuhnLNOR10,GhaffariHLN12,GhaffariLN13}. In this graph-based model, there is a static set of edges that form the network graph $G$ as well as an additional graph $G'$ which consists of edges that may or may not be present in each round, as chosen by an adversary. While an excellent way to capture an environment where some links are reliable and others are not, the dual graph model fails to model random noise.

\section{Preliminaries}
In this section, we review the classic radio network model, define our generalization of it, define throughput, explain how we quantify the coding gap and describe existing broadcast algorithms.

\subsection{The (Noisy) Radio Broadcast Model and $k$-Message Broadcast}
The classic radio network model consists of an undirected graph $G=(V, E)$ with $n$ nodes and diameter $D$. Nodes communicate in synchronized time steps (rounds) by either remaining silent (listening) or locally broadcasting a \textbf{packet} (each neighbor gets the same packet). A node $u$ receives a packet from a neighbor in round $r$ if and only if exactly one of its neighbors broadcasts in $r$ and $u$ remains silent. We term this model the \textbf{faultless} model.

We build on the classic model and introduce a \textbf{noisy radio network} model. Our model is the same as the classic radio broadcast model but with one of two modifications. In the \textbf{receiver faults} model, a node that is listening and has only one broadcasting neighbor receives noise with constant probability (independently of other nodes). In the \textbf{sender faults} model, a broadcasting node has a constant probability of transmitting noise (independently of other senders failing). We use $p \in [0, 1)$ to denote the fault probability and expose it in results where appropriate. In both variations of the model, we assume that if a node $u$ receives noise due to collisions, faults, or none of its neighbors broadcasting, the node does not mistake this noise for a legitimate packet from a neighbor.

The most commonly studied algorithmic problem in the radio network model is the $k$-message broadcast problem. This problem consists of a \textbf{source} $s \in V$ and $k$ \textbf{messages}. $(G,s)$ is often referred to as the \textbf{topology}. Each message is of size $O(\log (nk))$ and each packet is of size $O(\log (nk))$.\footnote{Papers more commonly define the message and packet size to be $O(\log n)$. However, this choice leads to tedious technical difficulties in the noisy setting. Our choice of message and packet size as $O(\log (nk))$ is reasonable for two reasons: (1) $k = \poly(n)$ is the most practical setting and (2) without using $O(\log k)$ bits one cannot even send a message identifier. This assumption will become particularly important for us when providing coding schedules, as it is what enables us to use Reed-Solomon coding to create packets that can be broadcast in a single round.} The source $s$ begins in the first round as the only node that knows the messages. The problem is solved once all nodes have all $k$ messages.

A \textbf{schedule} is defined by the assignment of a function $b_u^r(\cdot)$ to each node $u$ in each round $r$, which, intuitively, governs the behavior of node $u$ in round $r$. The schedule is static and does not change during rounds. The domain and co-domain of the function depend on the setting as follows. In a \textbf{routing} setting, the function $b_u^r$ takes no input (except the implicit arguments $u$ and $r$) and outputs either a \emph{stay silent token} or the index in $\{1, 2, ..., k\}$ of the message to broadcast. If the output of $b_u^r$ is the index of a message which $u$ has not received by round $r$, the node remains silent. In a \textbf{(network) coding} setting, the function $b_u^r$ takes as input the entire history of packets node $u$ received and outputs either a \emph{stay silent token} or an arbitrary packet of length $O(\log nk)$. We informally refer to an $\textbf{algorithm}$ as a procedure that outputs a schedule.

\subsection{Throughput}
The throughput of a given topology is roughly the maximum number of messages that can be transmitted per round as the number of messages, $k$, goes to infinity.
\begin{definition}[Topology Throughput]
The routing throughput of a topology $(G, s)$ is defined as
\begin{align*}
\tau(G, s) = \limsup_{k \rightarrow \infty} \frac{k}{\min_{\mcS_k}|\mcS_k|}
\end{align*}
where $|\mcS_k|$ is the number of rounds taken by the schedule $\mcS_k$. In the faultless setting, the minimum is taken over schedules that broadcast $k$ messages from $s$ to all nodes in $G$. In the faulty setting, the minimum is taken over schedules that succeed with probability at least $1 - \frac{1}{k}$.\footnote{In the faulty setting $\min_{\mcS_k}|\mcS_k|$ could be replaced by the minimum of the \emph{expected number of rounds} taken by any schedule that succeeds with probability 1. While our results can be adapted to this definition, this definition introduces various technical issues which we wish to avoid in order to simplify the presentation.}
\end{definition}
Given a set of schedules, we define the \textbf{throughput of the set of schedules} as above but where we minimize over the set (rather than all possible schedules).

In general, we show topology throughput lower bounds of $\tau$ in the noisy model by showing that, for any $\eps > 0$ and any $k$, there is a $k_0 > k$ such that there exists a schedule that broadcasts $k_0$ messages in $k_0X$ rounds, where $|\frac{1}{X} - \tau| \leq \eps$. Similarly, to show a throughput upper bound of $\tau$ in the noisy model we demonstrate that, for sufficiently large $k$, any schedule that broadcasts with a probability of failure of at most $1/k$ requires at least $k\tau^{-1}$ rounds.

\subsection{Comparing Routing and Network Coding}
There are a couple of natural ways to quantify the advantage that coding offers over routing. First, one might be interested in finding a fixed topology where the ratio of coding throughput to routing throughput is largest. Second, one might be interested in the worst case performance of coding compared to the worst case performance of routing. We formalize these two quantities as the \textit{shared topology gap} and the \textit{worst case topology gap} respectively. Let $\tau^R(G, s)$ and $\tau^\NC(G, s)$ be the throughputs of $(G,s)$ when using routing schedules and when using network coding schedules respectively.

\begin{definition}[Shared Topology Gap]
The shared topology gap is:
\begin{align*}
\mcG_s = \max_{G, s} \left[ \tau^\NC(G, s) / \tau^R(G, s) \right]
\end{align*}
\end{definition}

\begin{definition}[Worst Case Topology Gap]
\label{def:WCTGap}
The worst case topology gap is defined as
\begin{align*}
\mcG_w = \left [ \min_{G,s} \tau^\NC(G, s) \right ] / \left [ \min_{G,s} \tau^R(G, s) \right ]
\end{align*}
\end{definition}

We will also sometimes refer to the coding gap of a fixed network $(G,s)$ defined as $\tau^\NC(G, s) / \tau^R(G, s)$. Note that any lower bound on the coding gap of a network is a lower bound on $\mcG_s$. 
The two quantities above are related as shown by the following claim\shortOnly{, whose proof is deferred to \Cref{subsec:gapCompareProof}}.
\begin{restatable}{lemma}{sharedWorstCaseGapCompare}
\label{obs:sharedWorstCaseGapCompare}
The worst case topology gap is at most the shared topology gap, i.e. $\mcG_w \leq \mcG_s$.
\end{restatable}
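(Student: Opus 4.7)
The plan is to prove the inequality by exhibiting a single topology that simultaneously witnesses both sides. Specifically, I will let $(G^*, s^*)$ be any topology that attains $\min_{G,s} \tau^R(G, s)$ (I will assume this minimum is achieved; otherwise one passes to a topology approaching the infimum and takes limits). Since $\mathcal{G}_s$ is defined as a maximum of $\tau^{\NC}/\tau^R$ over all topologies, the particular topology $(G^*, s^*)$ must satisfy $\tau^{\NC}(G^*, s^*) \leq \mathcal{G}_s \cdot \tau^R(G^*, s^*)$.

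From there the rest is a short chain of inequalities. The minimum over all topologies of $\tau^{\NC}$ can only be smaller than the specific value at $(G^*, s^*)$, giving $\min_{G,s} \tau^{\NC}(G, s) \leq \tau^{\NC}(G^*, s^*) \leq \mathcal{G}_s \cdot \tau^R(G^*, s^*) = \mathcal{G}_s \cdot \min_{G,s} \tau^R(G, s)$. Dividing by $\min_{G,s} \tau^R(G, s)$ (which is strictly positive on any connected topology since broadcast is solvable in finite rounds) yields exactly $\mathcal{G}_w \leq \mathcal{G}_s$.

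There isn't really a hard step here — the claim follows purely from the definitions and the pointwise relationship $\tau^{\NC}(G,s) \leq \mathcal{G}_s \cdot \tau^R(G,s)$ that $\mathcal{G}_s$ imposes on every topology. The only care needed is the existence-of-minimizers issue (which is minor, since one can always argue with an $\varepsilon$-approximate minimizer and take $\varepsilon \to 0$) and the positivity of $\tau^R$ to legitimize the division. Intuitively, the statement just says that dividing the min of a larger function by the min of a smaller function cannot exceed the worst-case pointwise ratio, which is a standard arithmetic observation.
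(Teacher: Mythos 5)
Your proof is correct and is essentially the same argument as the paper's: both evaluate the ratio at the topology minimizing $\tau^R$ and use that $\mcG_s$ bounds the ratio pointwise. The extra remarks on existence of minimizers and positivity of $\tau^R$ are fine but not a substantive departure.
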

\gdef\ProofsharedWorstCaseGapCompare{
  \begin{proof}
    Let $(G',s')$ be the minimizing assignment to $\min_{G, s} [\tau^R(G, s)]$.
    Thus,
    \begin{align*}
      \mcG_w & = \left[ \min_{G, s}  \tau^\NC(G, s) \right] / \left[ \min_{G, s}  \tau^R(G, s) \right] \\
             & \leq \tau^\NC(G', s') / \left[ \min_{G, s} \tau^R(G, s) \right] \\
             & = \tau^\NC(G', s') / \tau^R(G', s') \\
             & \leq\max_{G, s} [\tau^\NC(G, s) / \tau^R(G, s)] = \mcG_s.
    \end{align*}
  \end{proof}
}

\fullOnly{\ProofsharedWorstCaseGapCompare}
\shortOnly{}

Interestingly, we later prove that this observation holds with equality in the receiver faults setting but not in the sender faults setting.

\subsection{Broadcast Algorithms for Faultless Radio Networks}
In this section, for completeness, we present the broadcast algorithms for the faultless setting whose performance in the noisy setting are addressed in Section~\ref{sec:RobustBroadcast}.

\subsubsection{Decay}\label{sec:prereq:decay}
Here we describe the classic Decay algorithm~\cite{bar1992time} for broadcasting \textbf{a single message} from the source $s$ to every other node. For each round, we define the set of \textbf{informed nodes} as all the nodes that received the message up to that round.

\textbf{Algorithm:} Divide the rounds into phases of $O(\log n)$ rounds. During the $i^{th}$ round of each phase, where  $i \le O(\log n)$, each \textbf{informed} node broadcasts the message independently with probability $2^{-i}$. A simple calculation yields the following.

\begin{lemma}
  \label{prop:decay-spread-one-phase}
  If a node $v$ has an informed neighbor at the start of the phase, it becomes informed by the end of the phase with constant probability.
\end{lemma}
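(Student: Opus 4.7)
The strategy is to identify an adaptively-chosen round $i^*$ within the phase such that, with constant probability, exactly one of $v$'s informed neighbors broadcasts in round $i^*$ --- which, assuming $v$ has not already been informed earlier, delivers the message to $v$. Let $D_i$ denote the (random) number of informed neighbors of $v$ at the start of round $i$ of the phase. By hypothesis $D_1 \geq 1$, the sequence $(D_i)_i$ is non-decreasing, and $D_i \leq n$ throughout. I would set
$$i^* = \min\{i \geq 1 : D_i \leq 2^i\}.$$
This is well-defined because $D_{\lceil \log_2 n \rceil + 1} \leq n \leq 2^{\lceil \log_2 n \rceil + 1}$, and it falls inside the $O(\log n)$-round phase. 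Moreover, either $i^* = 1$, in which case $D_{i^*} \geq 1 = 2^{i^*-1}$, or by minimality $D_{i^*} \geq D_{i^*-1} > 2^{i^*-1}$; in either case,
$$\tfrac{1}{2} \leq D_{i^*} \cdot 2^{-i^*} \leq 1.$$

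Next I would compute, conditioning on the history through round $i^*-1$ (which determines both $i^*$ and $D_{i^*}$), the probability that exactly one of the $D_{i^*}$ informed neighbors of $v$ broadcasts in round $i^*$. Each of them broadcasts independently with probability $p = 2^{-i^*}$, so this conditional probability equals
$$D_{i^*} \cdot p \cdot (1-p)^{D_{i^*} - 1} \;\geq\; \tfrac{1}{2} \cdot \bigl(1 - 2^{-i^*}\bigr)^{2^{i^*} - 1} \;\geq\; \tfrac{1}{2e},$$
where the first step uses the bound on $D_{i^*}\cdot 2^{-i^*}$ together with $D_{i^*} \leq 2^{i^*}$, and the second uses $(1 - 1/m)^{m-1} \geq 1/e$ with $m = 2^{i^*}$. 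Taking expectation over the history gives the same unconditional lower bound. Since non-informed nodes never broadcast, this event coincides with ``exactly one neighbor of $v$ broadcasts in round $i^*$''; if $v$ is listening it then receives the message, and if it is not listening it was already informed before round $i^*$. Either way $v$ is informed by the end of the phase.

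The main subtlety requiring care is that $D_i$ is itself random and may grow as the phase progresses, so one cannot simply commit in advance to $i^* = \lceil \log_2 D_1\rceil$: many newly informed neighbors joining before that round could inflate $D_{i^*}$ and collapse the ``exactly one broadcaster'' probability. Defining $i^*$ as the above stopping time sidesteps this by forcing $D_{i^*}$ to lie in the sweet spot $(2^{i^*-1}, 2^{i^*}]$, where the Poisson-like expression $D_{i^*}\cdot 2^{-i^*}(1-2^{-i^*})^{D_{i^*}-1}$ is bounded below by a constant regardless of how fast neighbors accumulate --- and this constant bound survives the subsequent averaging over the history precisely because it holds pointwise on every realization.
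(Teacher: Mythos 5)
Your proof is correct and follows essentially the same approach as the paper's (two-sentence) sketch: find a round $i$ in which the number of informed neighbors $D_i$ is within a constant factor of $2^{i}$, so that the ``exactly one broadcaster'' probability $D_i 2^{-i}(1-2^{-i})^{D_i-1}$ is bounded below by a constant. Your stopping-time definition of $i^*$ and the explicit conditioning on the history make rigorous the paper's claim that such a round always exists, and correctly handle the boundary case $D_1=1$.
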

\begin{proof}
  If the number of informed neighbors that $v$ has in the $i^{th}$ round of a phase is in $[2^i, 2^{i+1})$, a simple calculation shows that it becomes informed by the end of the round with constant probability. Since the number of informed neighbors is nondecreasing, there is always a round in a phase in which the above condition holds.
\end{proof}

The round complexity of Decay then follows \cite{bar1992time}.
\begin{lemma}[\citet{bar1992time}] \label{lem:decayRuntime}
In the faultless setting, Decay spreads a single message in $O(D \log n + \log n(\log n + \log \frac{1}{\delta}))$ rounds with a probability of failure of at most $\delta$.
\end{lemma}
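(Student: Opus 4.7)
The plan is to combine the single-phase success probability from \Cref{prop:decay-spread-one-phase} with a standard BFS-wave argument along shortest paths, and then finish with a Chernoff-plus-union-bound step to obtain the high-probability guarantee.

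First I would partition the execution into consecutive \emph{phases} of $O(\log n)$ rounds, exactly as in the algorithm's description. Fix an arbitrary target node $v$ and a shortest $s$-to-$v$ path $s = v_0, v_1, \ldots, v_d = v$ with $d \leq D$. Define the \emph{front} on this path after phase $t$ as the largest index $j$ such that $v_j$ is informed. Because the front is monotone nondecreasing and because the node $v_{j+1}$ has an informed neighbor (namely $v_j$) whenever the front is at $v_j$, \Cref{prop:decay-spread-one-phase} implies that in each phase, conditioned on the entire history so far, either the front already reached $v$ or it advances by at least one with some absolute constant probability $c>0$. Hence the number of phases needed for the front to reach $v$ is stochastically dominated by the waiting time for $d$ successes in independent Bernoulli$(c)$ trials.

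Next I would apply a Chernoff bound to this coupled sum of independent Bernoullis: there exists a constant $C$ such that the number of phases required to obtain $d$ successes is at most $C\bigl(D + \log(n/\delta)\bigr)$ except with probability at most $\delta/n$. Taking a union bound over the (at most $n$) target nodes shows that every node is informed within $C(D + \log(n/\delta))$ phases except with probability at most $\delta$. Multiplying by the $O(\log n)$ rounds per phase yields a total round complexity of
\[
O(\log n)\cdot O\!\left(D + \log(n/\delta)\right) \;=\; O\!\left(D\log n + \log n(\log n + \log\tfrac{1}{\delta})\right),
\]
which matches the claim.

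The only mildly delicate step is the stochastic-dominance argument in the second paragraph: the actual per-phase events are not literally independent (the set of informed nodes is a complicated random variable), so one must be careful to apply \Cref{prop:decay-spread-one-phase} conditionally on the history and to couple the true front advances with genuinely independent Bernoulli$(c)$ trials before invoking Chernoff. I expect this coupling, rather than any combinatorial or probabilistic subtlety, to be the main place where one needs to be precise.
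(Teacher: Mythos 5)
Your proposal is correct and follows essentially the same route as the paper's proof: fix a shortest path, track the furthest informed node along it, invoke \Cref{prop:decay-spread-one-phase} to get constant per-phase progress, then apply a Chernoff bound and a union bound over the $n$ targets. If anything, your explicit attention to the conditioning/coupling needed before invoking Chernoff is more careful than the paper, which applies the bound without comment.
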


\begin{proof}
Fix a path $s = u_0, u_1, ..., u_l = v$ from the source $s$ to any node $v$ (the length $l$ of the path is at most the diameter $D$). At round $t$, let $\phi$ be the largest $i$ such that $u_i$ knows the message (initially, $\phi = 0$). After one phase of $O(\log n)$ rounds $\phi$ either remains the same or increases by 1 with constant probability, by \Cref{prop:decay-spread-one-phase}. Hence, after $O(D + \log n + \log \frac{1}{\delta})$ phases, the probability of failure can be bounded via a Chernoff bound:
\begin{align*}
  Pr[\Phi < l] < \exp\left(-\Omega\left(\log n + \log \frac{1}{\delta}\right)\right) .
\end{align*}
Applying a union bound over all $n$ nodes gives that the failure probability is at most $n \cdot \exp(-\Omega(\log n + \log \frac{1}{\delta})) < \exp(-\Omega(\log \frac{1}{\delta})) < \delta$.

\end{proof}

\subsubsection{FASTBC}\label{sec:prereq:fastbc}
We next describe an optimal algorithm for single-message broadcast when all the nodes \textbf{know the topology} beforehand, given by~\citet{gkasieniec2007faster}, which we refer to as the FASTBC algorithm. FASTBC works by first decomposing the graph into a \textbf{gathering-broadcasting spanning tree} (GBST) and then utilizing this structure to broadcast a message in $D + O(\log^2 n)$ rounds.

We need the following notion in order to describe the algorithm. A \textbf{ranked breadth-first search (BFS) tree} $\mcT$ for a graph $G$ is a BFS tree rooted at a source node $s$ where every node in the tree is assigned an integral rank. The ranks are assigned inductively, as follows. Every leaf node has rank $1$. A non-leaf node $v$ with a maximum child rank of $r$ is assigned a rank of $r$ if exactly one child of $v$ has rank $r$. Otherwise, if two or more children have rank $r$, then $v$ is assigned a rank of $r+1$.
Additionally, we define the level of node $v$ in a BFS tree as the distance from $s$ to $v$. The maximum rank can be bounded as follows.

\begin{lemma}[\citet{GaberCentralized}]\label{lem:rankBound}
The largest rank $r_{\max}$ in a ranked BFS tree of size $n$ is no greater than $\lceil \log_2 n \rceil$.
\end{lemma}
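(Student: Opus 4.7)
My plan is to prove the stronger invariant that every node $v$ of rank $r$ in a ranked BFS tree has at least $2^r - 1$ descendants (counting $v$ itself), by induction on the size (equivalently, the height) of the subtree rooted at $v$. The lemma then follows by applying the invariant to any node achieving the maximum rank: we get $n \geq 2^{r_{\max}} - 1$, hence $r_{\max} \leq \log_2(n+1)$, which simplifies to $r_{\max} \leq \lceil \log_2 n \rceil$ for $n \geq 2$ (the edge case $n = 1$ gives a single leaf of rank $1$ and is typically excluded).

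The base case is immediate: a leaf has rank $1$ and a subtree of size $1 = 2^1 - 1$. For the inductive step, I consider an internal node $v$ of rank $r \geq 2$. By the recursive definition of rank, exactly one of two structural cases applies. In the first case, $v$ has a unique child $c$ of rank $r$ (with all other children of strictly smaller rank); since $c$'s subtree is strictly smaller than $v$'s, the inductive hypothesis gives $c$ at least $2^r - 1$ descendants, and including $v$ itself already exceeds the target. In the second case, $v$ has at least two children of rank $r - 1$, and by induction each contributes at least $2^{r-1} - 1$ descendants, so the subtree of $v$ has size at least $1 + 2(2^{r-1} - 1) = 2^r - 1$.

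The chief subtlety that one must confront is the first case, where $v$ and one of its children share the same rank $r$. A naive induction on the rank $r$ alone runs into trouble here because the inductive hypothesis at rank $r$ would not yet be available when it is invoked. Inducting on subtree size (or height) sidesteps this neatly, since the child's subtree is a strict subset of the parent's. Modulo this point, the argument is a clean structural induction, and the final passage from $n \geq 2^{r_{\max}} - 1$ to $r_{\max} \leq \lceil \log_2 n \rceil$ reduces to verifying $\lfloor \log_2(n+1) \rfloor \leq \lceil \log_2 n \rceil$ for $n \geq 2$, which is routine by separating the cases where $n$ is or is not a power of $2$.
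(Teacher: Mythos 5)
Your proof is correct. Note that the paper itself gives no argument for this lemma at all: it is imported verbatim from \citet{GaberCentralized} as a black-box fact about ranked BFS trees, so there is no in-paper proof to compare against. Your argument supplies the standard justification: the invariant that a rank-$r$ node has at least $2^r-1$ descendants is exactly the right strengthening, your two cases exhaust the rank recurrence (either a unique maximum-rank child of rank $r$, or at least two children of rank $r-1$), and the counts $2^r-1$ (from $c$'s subtree alone, plus $v$) and $1+2(2^{r-1}-1)=2^r-1$ check out. You are also right to flag that induction on the rank alone breaks in the first case and that inducting on subtree size (or height) repairs it -- that is the one genuine subtlety here. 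Your handling of the arithmetic tail is careful: since $r_{\max}$ is an integer, $n\geq 2^{r_{\max}}-1$ gives $r_{\max}\leq\lfloor\log_2(n+1)\rfloor\leq\lceil\log_2 n\rceil$ for $n\geq 2$, and the degenerate $n=1$ case (a single leaf of rank $1$, where the stated bound $\lceil\log_2 1\rceil=0$ literally fails) is correctly set aside as outside the lemma's intended scope.
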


\textbf{GBST:} A ranked BFS tree $\mcT$ of a graph $G$ is a \textbf{gathering-broadcasting spanning tree} (GBST) if and only if no two distinct nodes on the same level and of the same rank $r$ have two distinct $\mcT$-parents both with rank $r$. See \Cref{fig:rankedBFSVSGBST} for a comparison of a ranked BFS tree that is not a GBST with a ranked BFS tree that is a GBST. Note that we assume nodes agree on a common GBST before the start of the algorithm because of the known topology assumption.

\begin{figure}
\centering
\subfigure[A ranked BFS Tree that is not a GBST.]{\label{subfig:rankedBFS}\includegraphics[scale=.2]{./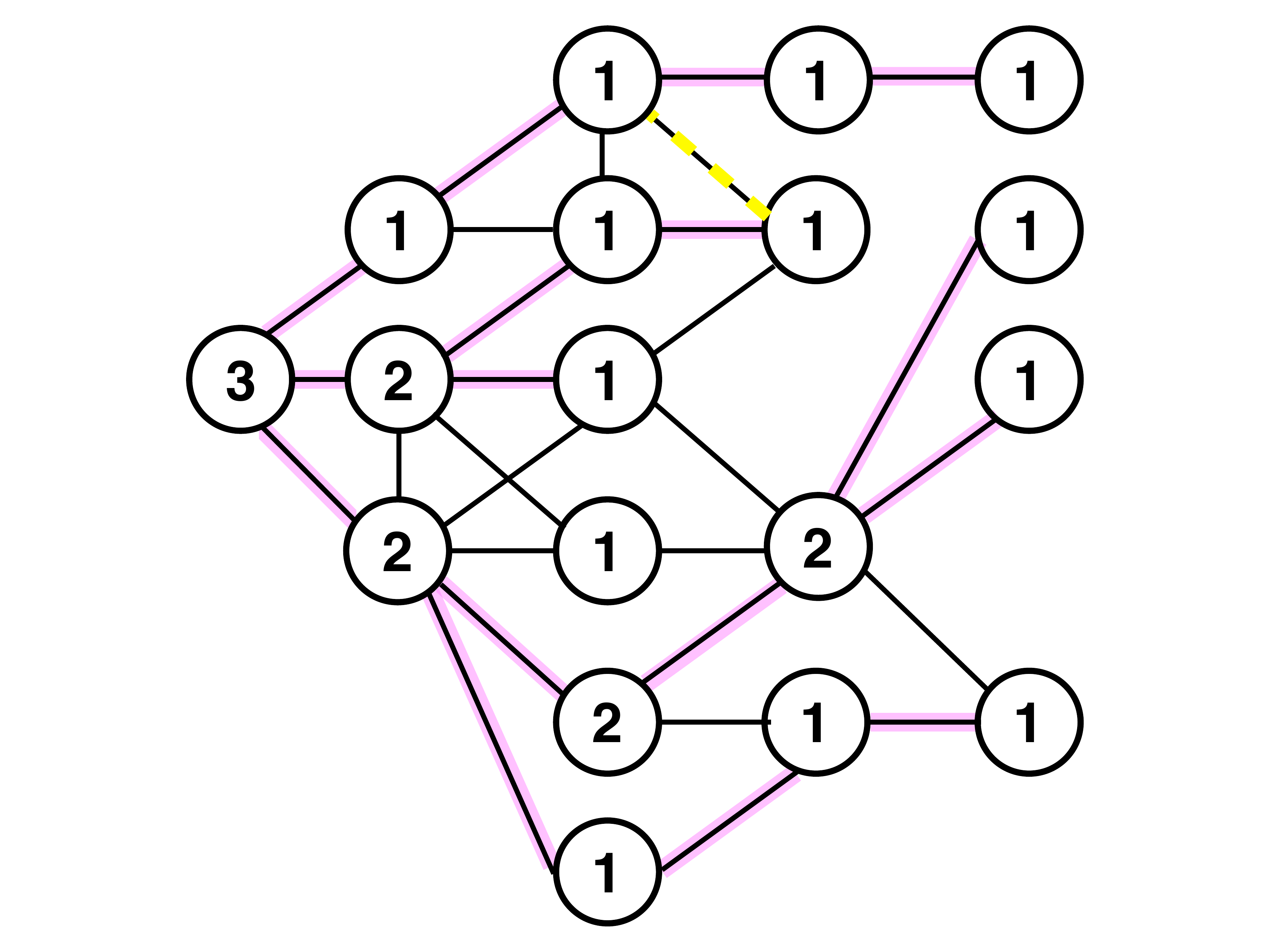}}
\subfigure[A ranked BFS Tree that is a GBST.]{\label{subfig:GBST}\includegraphics[scale=.2]{./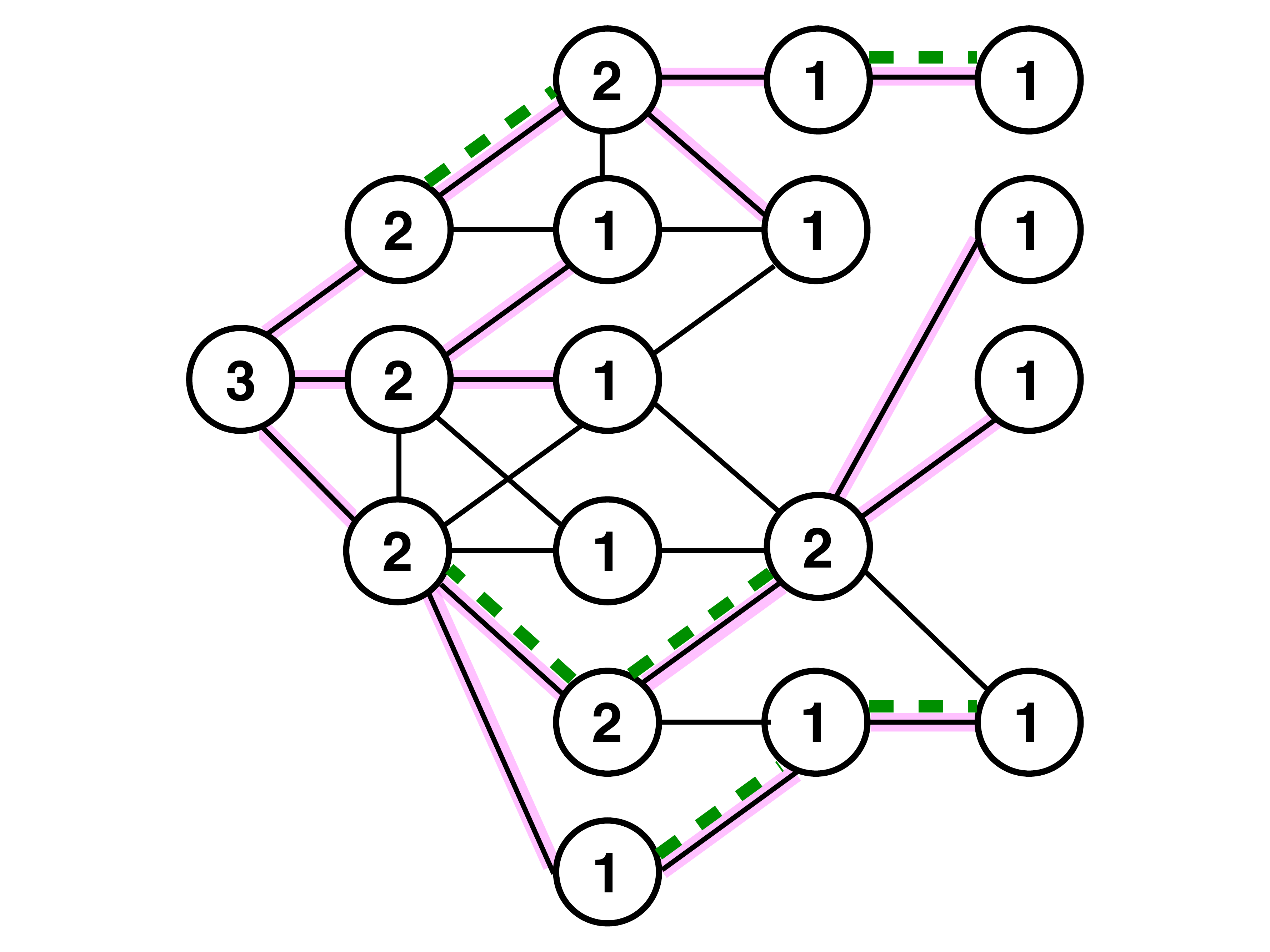}}
\caption{Black lines indicate edges in $G$; pink lines indicated edges in $\mcT$; the dashed yellow line indicates the edge which makes \Cref{subfig:rankedBFS} not a GBST; dashed green lines indicate a fast node and the corresponding child in the GBST. Example taken from~\citet{ghaffari2015randomized}.}
\label{fig:rankedBFSVSGBST}
\end{figure}

\textbf{Fast nodes in a GBST:} We call node $u$ \textbf{fast} if one of its GBST-children has the same rank as $u$, and that tree-edge is called a \textbf{fast edge}. Any GBST-path from the source to a node has nonincreasing ranks. Hence it is composed of at most $O(\log n)$ \textbf{fast stretches} of consecutive fast nodes of the same rank connected by fast edges.

\textbf{The FASTBC Algorithm \cite{gkasieniec2007faster}:}
At a high level, FASTBC divides the rounds into odd and even numbered rounds. On odd-numbered rounds (called \textbf{slow transmission rounds}) it performs a standard Decay algorithm that pushes a message from a higher ranked node to a lower ranked one. During even-numbered rounds (called \textbf{fast transmission rounds}) it pushes a message along fast stretches.

More formally, during a fast transmission round $2t$, only \textbf{fast} nodes on level $l$ and rank $r$ broadcast if $t \equiv l - 6r \pmod{6r_{\max}}$. During a slow transmission round $2t+1$, all nodes broadcast with probability $2^{-(t \mod O(\log n))}$.

Intuitively, messages on a fast stretch ride a \textbf{wave} from the start of the stretch to its tail. More formally, when a messages gets transmitted along a fast stretch for the first time, it never gets interrupted until it reaches the tail. This non-interference along fast stretches is ensured by the properties of the GBST. The round complexity of FASTBC is given as follows.

\begin{lemma}[\citet{gkasieniec2007faster}]
In the faultless setting, FASTBC spreads a single message in $D + O(\log n (\log n + \log \frac{1}{\delta}))$ rounds with a probability of failure of at most $\delta$.
\end{lemma}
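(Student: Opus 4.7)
My plan is to mirror the style of the Decay proof given earlier (\Cref{lem:decayRuntime}): fix an arbitrary target node $v$, analyze the propagation along a canonical path in the GBST from $s$ to $v$, and then union-bound over all $n$ nodes. The decisive structural fact I would use is that any root-to-node path in the GBST has nonincreasing ranks, and since ranks lie in $\{1, \dots, r_{\max}\}$ with $r_{\max} = O(\log n)$ by \Cref{lem:rankBound}, such a path decomposes into at most $O(\log n)$ maximal \emph{fast stretches} (runs of same-rank fast edges) separated by at most $O(\log n)$ \emph{rank-drop transitions}. I would then bound the time required to traverse each of the two kinds of segments separately.

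For the fast stretches I would argue that once a message reaches the head of a stretch, it rides a wave to the tail using only fast transmission rounds. Formally, if the stretch has level-$l$ head, rank $r$, and length $\ell_i$, then the wave needs at most $O(r_{\max})$ fast rounds to reach the scheduling slot $t \equiv l - 6r \pmod{6 r_{\max}}$ and then one fast round per edge thereafter, the GBST property guaranteeing no two same-level same-rank fast nodes share a parent and hence preventing collisions during the wave. Summing over the $O(\log n)$ stretches on the path contributes a total of at most $D + O(r_{\max} \log n) = D + O(\log^2 n)$ fast rounds deterministically.

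For the rank-drop transitions I would reuse the Decay analysis almost verbatim: each transition is a single edge along which an informed higher-ranked node must inform a lower-ranked neighbor, and odd rounds run exactly the Decay schedule of \Cref{sec:prereq:decay}, so by \Cref{prop:decay-spread-one-phase} one phase of $O(\log n)$ slow rounds succeeds with constant probability. Amplifying to $\Theta(\log n + \log(1/\delta))$ phases makes a single transition succeed with probability at least $1 - \frac{\delta}{n \log n}$, contributing $O(\log n(\log n + \log(1/\delta)))$ slow rounds in total, and then a union bound over the at most $O(\log n)$ transitions on this path and over all $n$ target nodes gives overall failure probability at most $\delta$.

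Putting the two contributions together yields the claimed $D + O(\log n(\log n + \log(1/\delta)))$ bound. The main subtlety I would need to handle carefully is the interface between the two kinds of rounds: I must argue that odd rounds executed during a fast-stretch traversal do not derail the wave (handled by the fact that the Decay probabilities are independent and the wave is defined purely in terms of which fast nodes broadcast on even rounds), and conversely that the fast-round schedule does not interfere with a Decay transition in progress (similarly immediate). Once these non-interference observations are in place, the randomness analysis is isolated to the Decay portion, and the proof reduces to the two separate budget calculations sketched above.
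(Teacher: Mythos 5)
Your proposal is correct and follows essentially the same route as the paper's proof: decompose the $s$--$v$ path into $O(\log n)$ fast stretches and $O(\log n)$ non-fast (rank-drop) edges, charge the stretches a deterministic $D_i + O(\log n)$ via the wave/non-interference argument, handle the non-fast edges with the Decay analysis in $O(\log n(\log n + \log\frac{1}{\delta}))$ slow rounds, and union-bound over all $n$ targets. The only differences are cosmetic — you amplify each transition individually and union-bound over transitions rather than applying a single Chernoff bound to the total progress, and you make the odd/even round non-interference explicit where the paper leaves it implicit.
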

\begin{proof}
Fix any path from the source $s$ to a node $v$. The path can be decomposed into at most $r_{\max} = O(\log n)$ fast stretches and $O(\log n)$ non-fast edges. By \Cref{lem:decayRuntime}, the message successfully traverses all of the non-fast edges in $O(\log n ( \log n + \log \frac{1}{\delta} ) )$ rounds with a probability of at least $1 - \frac{\delta}{n}$.

Next, we analyze the fast stretches during fast transmission rounds. Let the length of the $i^{th}$ fast stretch be $D_i$. The time for a message to reach the tail from the start of the fast stretch is $D_i + O(\log n)$ (with probability $1$) since it has to wait at most $6r_{\max} = O(\log n)$ rounds for the first transmission and it is not interrupted until it reaches the tail of the stretch.

  This non-interference is ensured because fast nodes of different ranks that transmit during the same round must be at least 6 levels apart (and since a GBST is a BFS tree, they do not interfere). Nodes of the same rank that transmit in the same round will not interfere because of the GBST construction property.

  Putting together the behaviors of the fast stretches and non-fast edges, we get that a message is transmitted along a path from $s$ to $v$ in $\sum_i \left (D_i + O(\log n) \right ) + O(\log n (\log n + \log \frac{1}{\delta})) = D + O(\log n (\log n + \log \frac{1}{\delta}))$ rounds with a probability of at least $1 - \frac{\delta}{n}$. Applying the union bound over all nodes completes the proof.
\end{proof}

\section{Robust Broadcast Algorithms}
\label{sec:RobustBroadcast}
In this section we describe how we adapt algorithms from the faultless setting to the sender or receiver faults setting in order to obtain robust broadcast algorithms.

\subsection{Robust Algorithms for Single-Message Broadcast}
Decay ~\cite{bar1992time} -- which spreads a message in $O(D\log n + \log^2 n)$ rounds with high probability\footnote{We denote ``with high probability'' when the probability of success is at least $1 - \frac{1}{n}$.} in the faultless setting -- is a classic faultless broadcasting algorithm. \fullOnly{We now prove that the Decay algorithm works as-is with faults.}\shortOnly{In \Cref{subsec:RobustDecay}, we prove that the Decay algorithm works as-is with faults.}

\begin{restatable}{lemma}{RobustDecay}
\label{RobustDecay}
With sender or receiver faults with fault probability $p$, the Decay algorithm spreads a single message in $O \left(\frac{\log n}{1 - p} (D + \log n + \log \frac{1}{\delta}) \right)$ rounds with a probability of failure of at most $\delta$.
\end{restatable}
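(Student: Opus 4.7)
The plan is to follow the structure of the faultless Decay analysis (Lemma 4 above) but modify each phase to compensate for the independent per-transmission failures of probability $p$. The crucial observation is that sender and receiver faults both reduce to the same effective behavior from the receiver's viewpoint: in any round in which exactly one neighbor of $v$ broadcasts, $v$ receives the message with probability exactly $1-p$ rather than with probability $1$. Hence a single ``critical round'' per phase, as in the noiseless proof, is no longer sufficient, and I need to amplify it.

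Concretely, I redefine a phase to last $O(\log n / (1-p))$ rounds, partitioned into $O(\log n)$ sub-phases of $\Theta(1/(1-p))$ rounds each. In the $i$-th sub-phase, every informed node independently broadcasts in each round with probability $2^{-i}$. The core step is a noisy analogue of Lemma 2: any node $v$ with a nonempty set of informed neighbors at the start of a phase becomes informed by the end with constant probability. As in the noiseless proof, there is a sub-phase index $i^*$ for which $v$'s informed-neighbor count lies in $[2^{i^*}, 2^{i^*+1})$, and in each round of that sub-phase the probability that exactly one neighbor of $v$ broadcasts is some absolute constant $c > 0$; combined with the independent reception success probability $1-p$, each round of the sub-phase delivers the message to $v$ with probability at least $c(1-p)$. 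Over $k = \Theta(1/(1-p))$ independent such rounds, the probability of at least one success is $1 - (1 - c(1-p))^k = \Omega(1)$ for a suitably large hidden constant. Since the set of informed neighbors only grows, some sub-phase must match $v$'s count and thus supply this constant per-phase success probability.

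With the phase lemma in hand, I finish exactly as in the faultless proof. Fix an arbitrary shortest path from $s$ to a node $v$ of length $l \leq D$, and track the index of the farthest informed node along it; by the phase lemma this index advances by at least $1$ with constant probability each phase (while it is not yet $l$). A Chernoff bound then shows $O(D + \log n + \log(1/\delta))$ phases suffice for $v$ to be informed with probability at least $1 - \delta/n$, and a union bound over all $n$ destinations yields overall failure probability at most $\delta$. Multiplying the phase count by the phase length $O(\log n/(1-p))$ gives the claimed bound of $O\!\left(\frac{\log n}{1-p}\bigl(D + \log n + \log(1/\delta)\bigr)\right)$.

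The main obstacle I anticipate is the sub-phase amplification calculation: the hidden constant inside $\Theta(1/(1-p))$ must be chosen so that $(1 - c(1-p))^{\Theta(1/(1-p))}$ is bounded away from $1$ uniformly in $p \in [0,1)$, which I will justify via $(1-x)^{1/x} \to 1/e$ as $x \to 0^+$. Once that calibration is in place the rest is a direct transcription of the noiseless proof, and the equivalence of the sender and receiver fault models for a single ``exactly one neighbor broadcasts'' event makes the two faulty settings interchangeable throughout the argument.
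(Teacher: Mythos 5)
Your proof is correct and the core analysis is the same as the paper's (a noisy phase lemma in which the single "critical" round succeeds with probability $c(1-p)$ instead of $c$, followed by the identical Chernoff-plus-union-bound pipeline), but you place the $\frac{1}{1-p}$ amplification in a different spot, and that difference matters for what is actually being proved. You stretch each phase to $O(\log n/(1-p))$ rounds by repeating each broadcast-probability level $\Theta(1/(1-p))$ times, restoring a constant per-phase success probability; your calibration via $(1-c(1-p))^{\Theta(1/(1-p))} \to e^{-\Omega(1)}$ is fine. The paper instead leaves the algorithm completely untouched -- phases remain $O(\log n)$ rounds -- accepts that each phase now advances the frontier only with probability $c(1-p)$, and simply runs $O(\frac{1}{1-p}(D+\log n+\log\frac{1}{\delta}))$ phases, letting the Chernoff bound on the number of successful phases absorb the $(1-p)$ factor. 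Both routes give the identical round count, but the lemma is a statement about \emph{the} Decay algorithm, and the surrounding text's entire point is that Decay is robust \emph{as-is} (in deliberate contrast to FASTBC, which genuinely must be modified); your argument as written establishes the bound only for a modified, sub-phase version of Decay. The fix is trivial -- move your $1/(1-p)$ repetition factor from the phase length to the phase count and your phase lemma with success probability $c(1-p)$ plugs directly into the paper's argument -- but as stated your proof slightly misses the statement's target. Your observation that sender and receiver faults both reduce to a $(1-p)$ reception probability conditioned on a unique broadcaster is correct and is exactly the reduction the paper uses implicitly.
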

\gdef\ProofRobustDecay{
\begin{proof}
Fix a path $s = u_0, u_1, ..., u_l = v$ from the source $s$ to any node $v$ (the length $l$ of the path is at most the diameter $D$). At round $t$, let $\phi$ be the largest $i$ such that $u_i$ knows the message (initially, $\phi = 0$). After one phase of $O(\log n)$ rounds $\phi$ either remains the same or increases by 1 with probability $c(1 - p)$ for a particular constant $c > 0$ by an analogue of \Cref{prop:decay-spread-one-phase} which appears in \Cref{sec:prereq:decay}. Hence, after $O(\frac{1}{1-p}(D + \log n + \log \frac{1}{\delta}))$ phases, the probability of failure can be bounded via a Chernoff bound:
\begin{align*}
  Pr[\Phi < l] < \exp\left(-\Omega\left(\log n + \log \frac{1}{\delta}\right)\right) .
\end{align*}
Applying a union bound over all $n$ nodes gives that the failure probability is at most $n \cdot \exp(-\Omega(\log n + \log \frac{1}{\delta})) < \exp(-\Omega(\log \frac{1}{\delta})) < \delta$.
\end{proof}
}

\fullOnly{\ProofRobustDecay}
\shortOnly{}

While the Decay algorithm is very simple and does not require nodes to know the topology in advance, the number of rounds it takes depends super-linearly on the diameter. However, if the topology is known in advance, it is possible to achieve linear dependence on the diameter in the faultless setting, as shown by the FASTBC algorithm~\cite{gkasieniec2007faster}. FASTBC succeeds with high probability in $O(D + \log^2 n)$ rounds. 

However, in contrast to the Decay algorithm, the performance of FASTBC deteriorates in the faulty setting. In particular, FASTBC uses a fragile round counting mechanism to synchronize message passing along its fast stretches which does not translate well to the faulty setting. The following result quantifies this performance deterioration.

\begin{restatable}{lemma}{fastbcBadFaults}\label{prop:fastbcBadFaults}
FASTBC takes $\Theta\left(\frac{p}{1-p} D \log n + \frac{1}{1-p}D \right)$ rounds in expectation to broadcast a message along a single path of length $D$.
\end{restatable}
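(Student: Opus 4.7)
\begin{proofsketch}
The plan is to analyze the progress of the message along the path one level at a time. Let $X_i$ denote the number of rounds between the message arriving at level $i$ and at level $i{+}1$ along the path, so that the total expected time is $\sum_{i=0}^{D-1} \E[X_i]$. Because the neighborhood along the path contains only one informed neighbor of level $i{+}1$ (namely level $i$) while the frontier is at level $i$, each broadcast by level $i$ independently delivers the message to level $i{+}1$ with probability $1-p$, and the number of attempts until success is a geometric random variable $N_i$ with $\E[N_i] = 1/(1-p)$.

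For the upper bound, I would leverage FASTBC's fast-wave structure from \Cref{sec:prereq:fastbc}. Once the wave brings the message to level $i$, level $i$ is scheduled to broadcast one fast round later, and each subsequent scheduled broadcast occurs $T = 6 r_{\max} = O(\log n)$ fast rounds later. A direct calculation gives $\E[X_i] \le 1 + (\E[N_i] - 1) T = 1 + \frac{pT}{1-p}$; summing over the $D$ steps yields the upper half of the claimed $\Theta$ expression.

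For the lower bound, I would split the target into its two additive terms. The $\Omega\bigl(\frac{D}{1-p}\bigr)$ term is immediate, since each step requires $\E[N_i] = 1/(1-p)$ attempts, and because a node can broadcast at most once per round, these attempts must occupy distinct rounds. The $\Omega\bigl(\frac{pD\log n}{1-p}\bigr)$ term rests on showing that the total broadcast rate at any fixed level of the path is $O(1/\log n)$ per round: fast rounds contribute $1/T = O(1/\log n)$, and one Decay phase of $O(\log n)$ slow rounds contributes $\sum_k 2^{-k} = O(1)$ broadcasts in expectation. A Markov-type argument on the broadcast-counting process then forces the expected time between consecutive attempts to be $\Omega(\log n)$, so conditional on the first attempt failing (probability $p$), the additional wait until success contributes an expected $\Omega\bigl(\frac{\log n}{1-p}\bigr)$, giving $\E[X_i] \ge \Omega\bigl(\frac{p \log n}{1-p}\bigr)$. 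Summing over $D$ steps yields the matching lower bound.

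The main obstacle I anticipate is the lower bound, and specifically arguing that the slow Decay rounds cannot accelerate the algorithm substantially beyond what the fast wave provides. The key step is the broadcast-rate bound at a single level: by capping the rate at $O(1/\log n)$ per round I force the expected inter-arrival time between consecutive attempts to be $\Omega(\log n)$, which translates each failed attempt into an expected $\Omega(\log n)$-round delay and produces the $\log n$ factor in the lower bound.
\end{proofsketch}
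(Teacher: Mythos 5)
Your argument is correct and is essentially the paper's proof: the paper performs the same per-step accounting via the recurrence $f(x) = 1 + (1-p)f(x-1) + p\left[f(x) + \Theta(\log n)\right]$, whose unrolling is exactly your sum $\sum_i \E[X_i]$ with $\E[X_i] = \frac{1}{1-p} + \frac{p}{1-p}\Theta(\log n)$, the $\Theta(\log n)$ coming from the $6r_{\max}$-round period of the fast-wave schedule. If anything, your sketch is more careful than the paper's on the lower-bound direction, since the paper's recurrence silently assumes the only retransmission opportunities are the $\Theta(\log n)$-spaced fast slots, whereas you explicitly bound the additional $O(1/\log n)$ per-round attempt rate contributed by the slow Decay rounds.
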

\gdef\ProofFastbcBadFaults{
\begin{proof}
  Consider a path of length $D$ with a message on its left endpoint. Let $f(x)$ be the expected number of rounds to traverse the path of length $x$ with FASTBC under the condition that the leftmost node broadcasts in the next round. With probability $1-p$ the broadcast is successful, leading to a subproblem with length $x-1$. Alternatively, with probability $p$, it can be unsuccessful -- leading to a waiting time of $\Theta(\log n)$ until the leftmost node broadcasts again and without reducing the path length $x$. The resulting recurrence is $f(x) = 1 + (1-p) f(x-1) + p [f(x) + \Theta(log n)]$ which can be rewritten as $f(x) = f(x-1) + \frac{1}{1-p} + \frac{p}{1-p} \Theta(\log n)$, leading to $f(x) = \Theta(\frac{p}{1-p} x \log n + \frac{1}{1-p}x)$.
\end{proof}
}
\fullOnly{\ProofFastbcBadFaults}
\shortOnly{}

One simple way to construct a robust broadcast algorithm is to perform FASTBC but repeat each round $\Theta(\log n)$ times for a total of $O(D\log n + \log^{O(1)} n)$ rounds. This approach works because the probability that any message gets dropped is at most $\frac{1}{n^{\Omega(1)}}$, which allows one to apply a union bound over the entire algorithm (of length at most $\poly(n)$). However, this approach loses the linear dependence on $D$ and therefore performs no better than Decay. A better approach, inspired by \Cref{prop:fastbcBadFaults}, would be to repeat each message $\Theta(\log \log n)$ times, giving $p = \frac{1}{\log^{\Omega(1)} n}$ and a $O(D \log \log n + \poly\log n)$-round algorithm.

We further refine this approach in the following \textbf{Robust FASTBC} algorithm, which gives a linear dependence on $D$ in the noisy setting. 
We first provide the high-level idea of Robust FASTBC to give intuition and then provide a more formal, terse definition.

\textbf{The High-Level Idea of Robust FASTBC:} As in FASTBC, a GBST is constructed from the source node $s$. Consider any GBST-path from $s$ to another node $v$ and note that it has at most $r_{\max} = O(\log n)$ fast stretches (consecutive fast nodes of equal rank) because the ranks on the path are non-increasing.

The algorithm works by partitioning the rounds into odd and even-numbered rounds. During odd-numbered rounds, a standard Decay algorithm is performed on all nodes. These rounds are meant to push the message from one fast stretch to the next.

During even-numbered rounds, a different procedure makes progress along fast stretches. First, partition the nodes of each fast stretch into blocks of size $S := \Theta(\log \log n)$.\footnote{All the blocks have size $\Theta(\log\log n)$, except possibly the last one.} We define the procedure of \textbf{broadcasting on a block} in the following way: at round $2t$ (only even numbered rounds can be performing this procedure) a node with level $l$ broadcasts if it is in the block and $t \equiv l \pmod{3}$. The procedure continues on for $c \cdot S = \Theta(\log \log n)$ rounds for some sufficiently large constant $c$. We work modulo $3$ to prevent collisions between nodes at consecutive levels since we are working on a BFS tree (and the GBST properties prevents interferences from nodes at the same level). Note that the probability that a message that is in a broadcasting block in the beginning fails to exit the block is at most $\frac{1}{\log^{c'} n}$ for a constant $c'$ which we can set as large as needed (by increasing the round multiplier $c$).

Now imagine contracting the nodes in a block into one \textbf{supernode}. A broadcast on this supernode corresponds to the block-broadcast procedure described in the last paragraph and \textbf{superrounds} on this graph correspond to $\Theta(\log \log n)$ rounds in the original graph. We define ranks, levels and fast supernodes in the same way as in the original graph (in particular, the entire supernode gets the same level and rank).

The algorithm on the contracted graph is as follows: at round $t$, a fast supernode with level $l$ and rank $r$ broadcasts if $t \equiv l - 6r \pmod{6r_{\max}}$. In other words, a \textbf{wave} propagates the message from one end of the fast stretch to another end in a number of superrounds that is linear with respect to the length of the fast stretch. Broadcasts at the same superround of nodes of different ranks are displaced by at least 6 levels, hence they do not interfere with each other. Two consecutive waves on the same block are spread $6r_{\max} = O(\log n)$ superrounds apart.

\textbf{Formal Robust FASTBC Algorithm:} As in FASTBC, a gathering-broadcasting spanning tree is constructed from the source node. Let the round number of the algorithm be $t$. When $t$ is odd, each node will perform a standard Decay step (a node broadcasts with probability $2^{-(t-1)/2 \mod O(\log n)}$). Let $S = \Theta(\log \log n)$ and $c$ be a sufficiently large constant. At even-numbered rounds $t$, a node in the fast set with level $l$ and rank $r$ will broadcast if $\lfloor \frac{l}{S} \rfloor - 6r \equiv \lfloor \frac{t/2}{cS} \rfloor \pmod{6r_{\max}}$ and $l \equiv t \pmod{3}$.

The following is our main result for this section.
\begin{restatable}{theorem}{RobustFASTBC}
\label{RobustFASTBC}
Robust FASTBC spreads a single message in $O(D + \log n\log\log n(\log n + \log \frac{1}{\delta}))$ rounds with a probability of failure of at most $\delta$ if sender or receiver faults occur with probability $p$.
\end{restatable}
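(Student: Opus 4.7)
The plan is to mimic the faultless FASTBC analysis, with the extra concentration needed to absorb random faults. Fix any destination node $v$ and consider the (unique) $s$-$v$ path in the common GBST: it has length at most $D$ and decomposes into $k \leq r_{\max} = O(\log n)$ fast stretches of lengths $D_1, \ldots, D_k$ (with $\sum D_i \leq D$) interleaved with $O(\log n)$ non-fast edges. The Decay subroutine on the odd rounds handles all non-fast edges in $O(\log n (\log n + \log \frac{1}{\delta}))$ rounds with probability $\geq 1 - \delta/(2n)$ by \Cref{RobustDecay} applied to the length-$O(\log n)$ sub-path of non-fast edges. The remaining task is to bound the time the even-round wave machinery needs to push the message across the fast stretches on the path, and then union-bound over the $n$ destinations.

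The key technical step is a \emph{block-broadcast lemma}: consider one block of $S = \Theta(\log \log n)$ contiguous BFS levels in a fast stretch, and suppose the first level of the block is informed at the start of the dedicated $2cS$-round super-round during which only this block is active (under the mod-$3$ level schedule). Then the first level of the \emph{next} block becomes informed by the end of the super-round with probability at least $1 - 1/\log^{c'} n$, where $c' = c'(c)$ can be made an arbitrarily large constant by taking $c$ large. To prove it I would track the internal frontier $\phi \in \{0, 1, \ldots, S\}$: the mod-$3$ schedule together with the GBST/BFS structure ensures that whenever $t \equiv \phi \pmod 3$ the unique candidate sender within the BFS $2$-neighbourhood of level $\phi + 1$ is level $\phi$ itself (the wave places different-rank blocks at block-offset $6$, i.e.\ at least $6S$ levels apart, comfortably beyond the BFS common-neighbour distance of $2$, while same-rank non-interference is exactly what the GBST definition was designed to guarantee). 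Hence the advance $\phi \to \phi + 1$ succeeds at each opportunity independently with probability $1-p$, and the number of rounds needed to reach $\phi = S$ is $Y = \sum_{j=1}^{S} Y_j$ with $Y_j \sim 3 \cdot \mathrm{Geom}(1-p)$ of mean $3/(1-p)$; choosing $c \geq 6/(1-p)$ and applying a Chernoff bound yields $P[Y > cS] \leq \exp(-\Omega(S)) \leq 1/\log^{c'} n$ for $S = \Theta(\log \log n)$ with a sufficiently large hidden constant.

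Given the block-broadcast lemma, the wave schedule reduces the fast-stretch analysis to a one-dimensional recurrence. Index the blocks on the path by $1, 2, \ldots, M$ with $M = \sum_i \lceil D_i/S \rceil = O(D / \log \log n + \log n)$, and let $\tau_k$ be the super-round at which block $k$'s first level is informed. Because consecutive same-rank blocks are active in consecutive super-rounds and a given block is revisited every $6 r_{\max}$ super-rounds, upon success $\tau_k = \tau_{k-1} + 1$ and upon failure one waits $6 r_{\max}$ super-rounds; letting $X_k \sim \mathrm{Geom}(1-q)$ with $q = 1/\log^{c'} n$ index the successful attempt of block $k$ gives $\tau_k = \tau_{k-1} + 1 + 6 r_{\max}(X_k - 1)$, and telescoping produces $\tau_M = M + 6 r_{\max} F$ with $F = \sum_{k=1}^{M}(X_k - 1)$. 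A Bernstein-type tail bound on this sum of $M$ i.i.d.\ shifted geometrics (each of mean $q/(1-q)$) gives $F = O(Mq + \log(n/\delta))$ with probability $\geq 1 - \delta/(2n)$; for $c' \geq 2$ the term $6 r_{\max} M q$ is dominated by $M$, so $\tau_M = O(M + r_{\max} \log(n/\delta)) = O(D/\log \log n + \log n (\log n + \log \frac{1}{\delta}))$ super-rounds. Multiplying by the super-round length $2cS = \Theta(\log \log n)$ delivers the claimed $O(D + \log n \log \log n (\log n + \log \frac{1}{\delta}))$-round bound for the even-round part; since the matching odd-round Decay bound runs in parallel and a final union bound over the $n$ destinations turns $\delta/n$ into $\delta$, the theorem follows.

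The main obstacle I expect is the non-interference clause of the block-broadcast lemma, i.e.\ rigorously checking that when a given block is scheduled in a super-round, no other simultaneously-scheduled block (same rank in a different fast stretch, or different rank elsewhere) contains a broadcasting node within the BFS $2$-neighbourhood of any listener in the current block. Both the $6$-offset between ranks in the wave cycle and the GBST same-rank separation must be cross-checked against the mod-$3$ sub-schedule; everything else -- the geometric-sum Chernoff, the telescoping of the $\tau_k$ recurrence, the parallel Decay analysis -- is routine.
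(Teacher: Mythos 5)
Your proposal is correct and follows essentially the same route as the paper's proof: the same decomposition into non-fast edges (handled by the odd-round Decay analysis) and fast stretches cut into $\Theta(\log\log n)$-level blocks, the same per-block failure probability of $1/\mathrm{poly}\log n$ with a $6r_{\max}$-super-round penalty on failure, and the same final union bound over destinations. Your explicit shifted-geometric recurrence for $\tau_M$ is a cleaner way to do the accounting than the paper's self-referential bound on the total inactive time in terms of the protocol length $T$, but it is not a different argument.
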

\gdef\ProofRobustFASTBC{
\begin{proof}
  Fix a GBST path $P = (s = u_0, u_1, u_1, ..., u_l = v)$ from the source $s$ to a node $v$. Partition the edges on the path into fast stretches (consecutive edges connecting two nodes of the same rank) and non-fast edges. There can be at most $O(\log n)$ non-fast edges interconnecting $O(\log n)$ fast stretches.

  Assuming a message is on a non-fast edge, during the next $\Theta(\log n)$ rounds it is transmitted along that edge with constant probability. Given that there are only $O(\log n)$ such edges, a Chernoff bound gives us that after $O(\log n(\log n + \log \frac{1}{\delta}))$ such rounds (where a message is waiting to be transmitted along a non-fast edge), the message is transmitted along all the non-fast edges on $P$ with probability at least $1 - \frac{\delta}{2n}$.

  Next, we turn to counting the number of rounds that a message spends on fast stretches (during even-numbered rounds). Note that from the design of the algorithm and the properties of the GBST no two broadcasting nodes ever interfere with each other (hence the only failures come from constant probability faults).

  Call a fast node from the path $P$ a \textbf{barrier} if its level is divisible by $S$ and call a message \textbf{active} if it is on a fast stretch and the node it is currently at is broadcasting. Note that a message that enters a fast stretch has to wait $r_{\max} cS = O(\log n \log\log n)$ rounds until it becomes active. Once it is active, consider its behavior during the next $cS = O(\log\log n)$ rounds. The message can either exit the stretch, remain active (meaning it reaches the next barrier) or become inactive (by failing $(c-1)S$ out of $cS$ transmissions). The probability of becoming inactive is at most $\frac{1}{\log^3 n}$ by Chernoff with an appropriate constant $c$. Every time a message becomes inactive, it waits $O(\log n \log\log n)$ rounds before it becomes active.

  Let $d_1, d_2, ..., d_q$ (note that $q \le O(\log n)$) be the lengths of the fast stretches in the path $P$. When a message is active, it traverses the paths in at most $\sum_{i=1}^q \lceil d_i/S \rceil \cdot cS = \sum_{i=1}^q O(d_i + 1) = O(D + \log n)$ rounds. 
  The number of rounds it takes for a message to become active is at most
  \begin{align*}
    & \left (q + \frac{T}{cS} Pr[\text{msg inactive in $cS$ rounds}] \right ) \cdot \\
    & \quad \quad \cdot O(\log n \log\log n) \\
    \le\ & O(\log^2 n\log\log n) + \frac{T}{\Theta(\log \log n)} \frac{1}{\log^3 n} O(\log n\log\log n) \\
    =\ & O(\log^2 n\log\log n) + \Theta \left (\frac{T}{\log^2 n} \right ),
  \end{align*}
  where $T$ is the length of the entire protocol. The $q$ term comes from becoming active each time a message enters a fast stretch. The $\frac{T}{cS}$ accounts for the possibility of a message becoming inactive in between barriers.  A Chernoff bound proves that if $T = \Theta(D + \log n \log\log n(\log n + \log \frac{1}{\delta}))$, the message gets passed along the path with a probability of at least $1 - \frac{\delta }{2n}$.

  Putting together the behavior during the odd-numbered and even-numbered rounds and applying a union bound over all $n$ nodes gives that the protocol forwards the message from the source to all other nodes in the claimed number of rounds with probability at least $1 - \delta$.
\end{proof}
}
\fullOnly{\ProofRobustFASTBC}
\shortOnly{}
\subsection{Robust Algorithms for Multi-Message Broadcast}
A pleasant feature of single-message broadcasting algorithms that are robust to sender failures is that they can be used in a black-box manner to transmit $k$ messages with random linear network coding~\cite{haeupler2011analyzing}, provided some minor technical conditions are satisfied. For instance, a node cannot change its behavior based on whether it receives a message or not. However, all of our algorithms can be made to satisfy these conditions using methods similar to~\cite{ghaffari2015randomized}. We state the results that can be achieved and refer the reader to~\citet{ghaffari2015randomized} and \citet{haeupler2011analyzing} for details.

\begin{lemma}\label{prop:DecayTPut}
Decay with random linear network coding can broadcast $k$ messages in $O(D\log n + k\log n + \log^2 n)$ rounds if sender or receiver faults occur with constant probability. It follows that any topology has a coding throughput of $\Omega\left(\frac{1}{\log n}\right)$.
\end{lemma}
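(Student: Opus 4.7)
The plan is to invoke the random linear network coding (RLNC) framework of \citet{haeupler2011analyzing} on top of the Decay schedule of \Cref{RobustDecay}. Each node maintains a \emph{subspace} of coded packets over a finite field $\mathbb{F}$ of size $\poly(n,k)$, which fits into the packet budget of $O(\log(nk))$ bits along with the $O(k)$-entry coefficient vector. The source $s$ starts with rank $k$ (one coded packet per native message); any other node has decoded all $k$ messages as soon as its subspace has rank $k$. Whenever the underlying Decay schedule instructs a node to transmit, it sends a uniformly random $\mathbb{F}$-linear combination of the coded packets it currently stores. We run the schedule for $T = \Theta(D + k + \log n)$ Decay phases, for a total of $T \cdot \Theta(\log n) = O(D\log n + k\log n + \log^2 n)$ rounds.

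The analysis rests on a single rank-progress lemma: in each Decay phase, any node $v$ whose subspace is strictly dominated by that of some neighbor gains a linearly independent packet with constant probability. I would prove this in two steps. First, by the same local argument used in \Cref{prop:decay-spread-one-phase} (dampened by the constant factor $(1-p)$ as in the proof of \Cref{RobustDecay}), $v$ actually receives a packet from some dominating neighbor in each phase with constant probability. This argument is purely local to $v$, so it is indifferent to whether the fault occurs at the sender or at the receiver. Second, conditional on such a reception, the large-field RLNC property implies that the received packet lies outside $v$'s subspace with probability $1 - 1/|\mathbb{F}| \ge 1 - 1/\poly(n,k)$.

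Given the rank-progress lemma, the rest is a standard pipelining argument mirroring the proof of \Cref{lem:decayRuntime}. For any target node $v$ and any shortest path $s = u_0, u_1, \ldots, u_l = v$ with $l \le D$, let $\phi_i$ denote the rank of $u_i$'s subspace, and consider the lexicographically-first index $i$ with $\phi_i < \min(k, \phi_{i-1})$. Such an index gains one unit of rank per phase with constant probability, so the potential $\sum_i (k - \phi_i) \le (D+1)k$ drains under a biased random walk. A Chernoff bound shows that $\phi_l$ reaches $k$ after $O(D + k + \log n + \log(1/\delta))$ phases with failure probability at most $\delta$; taking $\delta = 1/n^2$ and union-bounding over all $n$ targets gives the claimed $O(D\log n + k\log n + \log^2 n)$ round bound with high probability. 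The throughput claim $\tau \ge \Omega(1/\log n)$ then follows by letting $k$ dominate $D + \log n$, in which case the runtime is $\Theta(k\log n)$.

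The main obstacle is checking that the pipelining potential argument, typically written for deterministic routing schedules, genuinely carries through for the randomized Decay schedule under both fault models. This is exactly what the black-box reduction of \citet{haeupler2011analyzing} provides: its only hypothesis is a per-receiver constant-probability reception guarantee per phase, and that is precisely our rank-progress lemma. The minor technical conditions on the underlying schedule (e.g.\ that a node's broadcast behavior does not change based on whether it has received a message in a given round) are handled as in \citet{ghaffari2015randomized} and add only constant-factor overhead.
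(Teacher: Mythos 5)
The paper itself gives no proof of this lemma --- it states the result and defers entirely to \citet{haeupler2011analyzing} and \citet{ghaffari2015randomized} --- so your top-level strategy (run Decay as the underlying transmission schedule, have each transmitting node send a random linear combination of its received packets, and reduce correctness to a per-phase constant-probability reception guarantee) is exactly the intended argument. Your last paragraph correctly identifies the only hypothesis the black-box reduction needs, and your observation that the reception guarantee of \Cref{prop:decay-spread-one-phase} is local to the receiver and hence indifferent to sender versus receiver faults is the right reason the lemma holds in both fault models.

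Two of the details you fill in are off, however. First, a $k$-entry coefficient vector over a field of size $\poly(n,k)$ occupies $\Theta(k\log(nk))$ bits and does \emph{not} fit in an $O(\log(nk))$-bit packet; this header overhead is a standard issue that the cited framework addresses separately, but your parenthetical claim that it ``fits into the packet budget'' is false as written. Second, and more substantively, your pipelining step does not yield the claimed bound: if the potential $\sum_i (k-\phi_i)$, which starts at $\Theta(Dk)$, drains by one unit per phase with constant probability (which is all that tracking the lexicographically-first lagging index gives you), you get $O(Dk)$ phases, i.e.\ $O(Dk\log n)$ rounds --- not $O(D+k+\log n)$ phases. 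Getting the additive $D+k$ bound requires arguing that all lagging indices advance \emph{simultaneously} each phase, and the clean way to do this is the projection analysis of \citet{haeupler2011analyzing}: for each nonzero $\mu\in\mathbb{F}^k$, say a node ``knows $\mu$'' if its subspace contains a vector not orthogonal to $\mu$; knowledge of each fixed $\mu$ then spreads exactly like a single Decay message, so \Cref{RobustDecay} with $\delta=|\mathbb{F}|^{-\Theta(k)}/n$ applies, and a union bound over the $|\mathbb{F}|^k$ choices of $\mu$ shows every node reaches full rank. Replacing your rank-potential random walk with that projection argument closes the gap; the rest of your proposal stands.
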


\begin{lemma}
Robust FASTBC with random linear network coding can broadcast $k$ messages in $O(D + k\log n\log\log n + \log^2 n\log\log n)$ rounds if sender or receiver faults occur with constant probability. It follows that any topology has a coding throughput of $\Omega\left(\frac{1}{\log n \log\log n}\right)$.
\end{lemma}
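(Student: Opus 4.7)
The plan is to use the single-message algorithm from Theorem~\ref{RobustFASTBC} as a black box and compose it with random linear network coding (RLNC) in the spirit of~\cite{haeupler2011analyzing,ghaffari2015randomized}. In this composition every node runs the \emph{same} schedule of broadcast/listen decisions dictated by Robust FASTBC (determined purely by the GBST, the round index, and the Decay schedule on odd rounds), but instead of transmitting one fixed message a broadcasting node sends a fresh uniformly random $\mathbb{F}$-linear combination of every message vector it has successfully decoded so far, where $\mathbb{F}$ is a field of size $\poly(nk)$. A node considers itself done once it has collected $k$ linearly independent combinations and can then recover all messages by Gaussian elimination. Packets remain of size $O(\log(nk))$, matching the packet size set up in the preliminaries, since only the coefficient vector and the combined payload need to be sent.

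First I would verify that Robust FASTBC is \emph{oblivious} in the sense required by the RLNC framework: its broadcast decisions do not depend on the content of received packets, only on the round index, the node's GBST level and rank, and independent coin flips. Inspection of the algorithm confirms this, so no modification beyond the content of transmitted packets is required. Next I would lift the single-message guarantee to an RLNC rank-spreading statement along every GBST path. The analysis of Theorem~\ref{RobustFASTBC} shows that along any fixed root-to-$v$ path, a single token of information advances one hop (either crossing a non-fast edge via Decay or riding a fast-stretch wave) per constant-probability event, and these events accumulate to fully traverse the path within $T_1 = O(D + \log n \log\log n(\log n + \log\frac{1}{\delta}))$ rounds with probability $1-\delta$. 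Under RLNC, each such information-advancing event instead advances the \emph{rank} of the linear combinations held by $v$ by one, with constant probability, except on a Schwartz--Zippel bad event of probability $\frac{1}{\poly(nk)}$ per round, provided the source of the hop has itself accumulated enough independent combinations to supply a new one.

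Combining these, running the RLNC variant of Robust FASTBC for $T = \Theta(D + k\log n\log\log n + \log^2 n \log\log n)$ rounds ensures that (i) every node's rank-advancing sequence is long enough, by the single-message analysis with $\delta = 1/\poly(n,k)$, to incorporate $k$ independent combinations, and (ii) a union bound over the $O(Tn)$ potential Schwartz--Zippel events still fails with probability at most $1/k$. Hence $k$ messages are delivered to all $n$ nodes in the claimed number of rounds with the failure probability required by the throughput definition, and taking the limit as $k \to \infty$ of $k/T$ yields the $\Omega(1/(\log n \log\log n))$ throughput. The main obstacle is justifying step (ii) in the noisy setting: faulty transmissions must be shown to only \emph{remove} combinations from a receiver's view rather than introduce adversarially correlated ones, so that the freshness of the random coefficients drawn each round is preserved and the standard RLNC analysis of~\cite{haeupler2011analyzing} transfers verbatim. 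This holds because both sender and receiver faults replace a packet with a ``noise'' symbol that the receiver explicitly recognizes as non-informative (per the model in Section~\ref{sec:RobustBroadcast}), so faults only monotonically decrease the collected rank and leave the coefficient distribution on successful receptions untouched.
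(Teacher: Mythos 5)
The paper does not actually prove this lemma: it states it as a corollary of Theorem~\ref{RobustFASTBC} plus the black-box RLNC composition of \citet{haeupler2011analyzing} and \citet{ghaffari2015randomized}, noting only that the schedule must be oblivious (``a node cannot change its behavior based on whether it receives a message or not''). Your proposal is exactly that composition with the details partially filled in --- verifying obliviousness, lifting the single-token traversal analysis to rank spreading, and observing that faults only delete packets rather than corrupt the coefficient distribution --- so in approach it matches the paper, and in fact supplies more justification than the paper does. One caveat: your claim that packets stay within $O(\log(nk))$ bits because ``only the coefficient vector and the combined payload need to be sent'' is false as written --- a coefficient vector for $k$ messages over a field of size $\poly(nk)$ occupies $\Theta(k\log(nk))$ bits, which is precisely one of the ``minor technical conditions'' the paper defers to \citet{ghaffari2015randomized} to resolve; this should be acknowledged rather than asserted away.
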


We leave as an open problem the existence of an algorithm that is robust to sender and receiver faults and can broadcast $k$ messages in $O(D + k\log n + \poly \log(n))$ -- this would be optimal up to additive $\poly \log$ factors.
\section{Throughput Gaps with Noisy Broadcast}

We now study the gaps between (network) coding and routing in the noisy radio network model. In the faultless setting, by \Cref{prop:DecayTPut}, coding can send $k$ messages in $O(D\log n + k\log n + \log^2 n)$ rounds. Moreover, in the faultless setting, no routing scheme with $k$ polynomial in $n$ is known to send $k$ messages in fewer than $\Omega \left(k \log ^2 n \right)$ rounds. The apparent disparity of what is achievable with coding versus routing demands a formal explanation.

Previous work of \cite{haeuplersodap1843} attempts to formalize this gap. However, although this work shows a shared topology gap of $\Omega(\log \log n)$ in the faultless setting, it also shows the counterintuitive result of a worst case topology gap of $\Theta(1)$.

Our new model and results give a more satisfactory explanation. We formally show in what sense the high throughput routing schemes of \cite{haeuplersodap1843} are not robust; namely, they cease to be efficient for random receiver failures. Moreover, in the noisy radio network model we prove that coding is indeed necessary for high throughput broadcasting by exhibiting a worst case topology gap of $\Theta(\log n)$ and a shared topology gap of $\Omega(\log n)$ in the receiver fault setting. All of our routing lower bounds, and therefore all of our gaps, are particularly strong as we prove them in a setting where routing is allowed to be \textbf{adaptive}, as we later define.

Additionally, we show that with coding or adaptive routing and sender faults, very little differs from the faultless setting in terms of throughput: every coding (resp. routing) schedule that assumes no faults can be transformed into a coding (resp. adaptive routing) schedule in the sender faults setting with the same throughput up to constants. These transformations allow us to conclude the following: (1) interestingly, the worst case topology gap is highly sensitive to whether sender or receiver faults are examined -- we show it to be $\Theta(1)$ for sender faults with adaptive routing; (2) the $\Omega(\log \log n)$ shared topology gap of the faultless setting also exists in the sender fault setting.

We now define an adaptive routing schedule wherein all nodes are allowed to adapt to all faults.
\begin{definition}[Adaptive Routing Schedule]
An adaptive routing schedule is a sequence of functions $b_r(\cdot)$ -- one each every round $r$ -- that takes as input \begin{enumRomanHor}\item the entire topology $(G, s)$ and \item all tuples $(u, i)$ where node $u$ received message $m_i$ in some round $r' < r$
\end{enumRomanHor}. Each $b_r(\cdot)$ outputs a sequence of length $n$ directing each node to either remain silent or to broadcast a message they know.
\end{definition}
\noindent In practice, a distributed routing algorithm might receive some feedback as to when faults occur but it will certainly not receive as much as our adaptive routing schedules. Thus, our definition is sufficiently strong for proving meaningful gaps. Moreover, if routing is non-adaptive, a $\Theta(\log k)$ gap can be proved on a single-link topology as shown in \Cref{sec:single-link-gap}. However, with adaptive routing the gap is $\Theta(1)$ on the single-link topology. Thus, adaptivity can considerably improve routing throughput, further motivating adaptive routing as a strong model for proving gaps.

Lastly, we use Reed-Solomon coding~\cite{WickerRSCodes} as a black box for our coding schedules throughout this section. Given $k$ input packets, Reed-Solomon coding constructs $\poly(nk)$ coded packets such that any $k$ of the coded packets is sufficient to reconstruct the original $k$ packets.

\subsection{Gaps for Receiver Faults with Adaptive Routing}
In this section we study the coding gaps with receiver faults and adaptive routing.

\subsubsection{Star Topology: $\Theta(\log n)$ Gap for Receiver Faults with Adaptive Routing}
We first show a $\Theta(\log n)$ gap in the receiver fault setting on a star topology. A \textbf{star topology} consists of a node $s$ and $n$ other adjacent nodes.\footnote{We use $n$ instead of $n-1$ other nodes to simplify calculations.} \fullOnly{}\shortOnly{We prove the following in \Cref{subsec:StarLowerRoute}.}

\begin{restatable}{lemma}{StarLowerRoute}
\label{StarLowerRoute}
The adaptive routing throughput of the star topology with receiver faults is $\Theta(1/\log n)$.
\end{restatable}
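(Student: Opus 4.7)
The plan is to prove the upper and lower throughput bounds separately, establishing $\tau = \Theta(1/\log n)$.

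For the upper bound $\tau \geq \Omega(1/\log n)$, I would exhibit a simple non-adaptive schedule: $s$ broadcasts each of the $k$ messages for $T = \Theta(\log(nk))$ consecutive rounds. Each leaf's only neighbor is $s$ and receiver faults are independent, so a fixed leaf misses a fixed message with probability at most $p^T$. A union bound over the $nk$ (leaf, message) pairs gives failure probability at most $nk \cdot p^T \leq 1/k$ for $T$ a sufficiently large multiple of $\log(nk)$. The total round complexity is $O(k\log(nk)) = O(k\log n)$ for $k = \poly(n)$, giving throughput $\Omega(1/\log n)$.

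For the lower bound $\tau \leq O(1/\log n)$, the crucial structural observation is that each leaf is adjacent only to $s$; in particular, no leaf can relay a message to another leaf, so a leaf can only receive $m$ via a broadcast of $m$ by $s$ in which the leaf does not fault. Define $X_v^{(m)}$ as the number of broadcasts of $m$ required for leaf $v$ to first not fault; since fault bits are fresh Bernoulli($p$) coins independent of all prior history, the $X_v^{(m)}$ are iid Geometric($1-p$) across $v$, and $\tau_m := \max_v X_v^{(m)}$ satisfies $E[\tau_m] = \Theta(\log n)$ with $E[\tau_m^2] = O(\log^2 n)$ by standard extreme-value analysis. A schedule of length $R$ succeeds at $m$ precisely on the event $\{N_m \geq \tau_m\}$, where $N_m$ is the random number of broadcasts of $m$; as $s$ broadcasts at most one message per round, $\sum_m N_m \leq R$ deterministically.

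The $1 - 1/k$ overall success probability forces $P[N_m < \tau_m] \leq 1/k$ for every $m$. I would then write $E[N_m] \geq E[\tau_m \mathbf{1}[N_m \geq \tau_m]] = E[\tau_m] - E[\tau_m \mathbf{1}[N_m < \tau_m]]$, and Cauchy--Schwarz bounds the second term by $\sqrt{E[\tau_m^2] \cdot P[N_m < \tau_m]} \leq O(\log n / \sqrt{k})$. For sufficiently large $k$ this gives $E[N_m] \geq \Omega(\log n)$, and summing over $m$ yields $R \geq \sum_m E[N_m] \geq \Omega(k \log n)$, whence $\tau \leq O(1/\log n)$.

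The main obstacle is handling adaptivity cleanly: the quantity $N_m$ can depend on the entire history of fault outcomes across all leaves, inducing a priori complex correlations with $\tau_m$. The decisive fact that rescues the argument is the principle of deferred decisions: each broadcast of $m$ queries a fresh, independent Bernoulli($p$) fault coin at each leaf, so the $X_v^{(m)}$'s are genuinely iid Geometric regardless of how adaptively the schedule chooses $N_m$. Once this is settled, the moment inequality above is essentially immediate, and one obtains the needed lower bound on $E[N_m]$ without ever having to track the joint distribution of $(N_m, \tau_m)$ directly.
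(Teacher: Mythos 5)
Your converse direction (the $\Omega(k\log n)$ round lower bound, hence $\tau = O(1/\log n)$) is correct and in fact takes a cleaner route than the paper: the paper lower-bounds $\E[X_i]$ via a Markov/Chernoff contradiction and then runs a second Chernoff argument over messages, whereas your observation that success at $m$ is exactly the event $\{N_m \geq \tau_m\}$ with $\tau_m$ a maximum of i.i.d.\ geometrics, combined with the truncation $\E[\tau_m \mathbf{1}[N_m<\tau_m]] \leq \sqrt{\E[\tau_m^2]\Pr[N_m<\tau_m]}$ and $\sum_m N_m \leq R$, handles adaptivity more transparently. You should still spell out why the $X_v^{(m)}$ are i.i.d.\ geometric under an adaptive schedule --- the event that round $r$ is the $j$-th broadcast of $m$ is measurable with respect to the fault coins of rounds strictly before $r$, so the coins revealed at the broadcast rounds of $m$ form a fresh i.i.d.\ sequence across rounds and across leaves --- but you flag this point and the justification is routine.

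The achievability direction, however, has a genuine gap. Your non-adaptive schedule broadcasts each message $T = \Theta(\log(nk))$ times, for $\Theta(k(\log n + \log k))$ rounds in total, and the $\log k$ term is not removable for a non-adaptive schedule: to drive the overall failure probability below $1/k$ you need $n p^{T_i} = O(1/k)$ for essentially every message, i.e.\ $T_i = \Omega(\log(nk))$. Since throughput is defined as $\limsup_{k\to\infty} k/|\mcS_k|$ with $n$ fixed, your schedule certifies only a throughput of $O(1/\log k)$, which tends to $0$, not $\Omega(1/\log n)$; restricting attention to $k = \poly(n)$ is not permitted by that definition (indeed, the appendix shows that the non-adaptive routing throughput of even the single-link topology is only $\Theta(1/\log k)$ for exactly this reason). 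The fix is to use adaptivity precisely here: have $s$ broadcast $m_1$ until every leaf has received it, then $m_2$, and so on, capping the schedule at $13k + k\log n$ rounds. The number of broadcasts of each message is then $\log n + O(1)$ plus an overshoot with geometric tails, and the Chernoff bound for sums of geometric random variables (\Cref{thm:geomChernoffBound}) shows the total overshoot exceeds $12k$ with probability at most $1/k$. This is the paper's argument for this direction, and it is the step your write-up is missing.
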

\gdef\ProofStarLowerRoute{
\begin{proof}
For concreteness assume that $p=\frac{1}{2}$.
We prove a throughput of $\Omega \left(\frac{1}{\log n}\right)$ by providing the following schedule. The schedule works by broadcasting message $m_1$ from $s$ until it is received by all nodes, then broadcasting $m_2$ and so forth up to message $m_k$. The schedule stops after $13k + k \log n$ rounds. It remains to show that the schedule fails with a probability of at most $1/k$.

Let $X_i$ be the random variable that stands for the number of rounds required until message $m_i$ is received and let $X = \sum_i X_i$. Moreover, define $Y_i = X_i - \log n - 1$ and $Y = \sum_i Y_i$. It is not hard to show that for all $i$, $\E[X_i] \leq \log n + 1$, hence $Y_i$ roughly counts the number of rounds in which $X_i$ goes past its expectation. We now bound the tail distribution of the $Y_i$ variables, thereby bounding it for the $X_i$ variables and for $X$.

Let $Z_t^i$ be the random variable of the number of nodes that do not receive message $m_i$ after $s$ broadcasts $m_i$ for $t$ rounds. It holds that $\E[Z_t^i] = n(\frac{1}{2})^{t}$. The following shows that $\Pr(X_i \geq t) \leq  n \left(\frac{1}{2}\right)^{t}$.

\begin{align*}
n\left(\frac{1}{2}\right)^{t} &= \E[Z_t^i] = \sum_{j=1}^n j\Pr(Z_t^i = j)\\
&\geq \sum_{j=1}^n \Pr(Z_t^i = j)\\
&= \Pr(X_i \geq t),
\end{align*}

and thus, we have $\Pr(X_i \geq \log n + 1 + t) \leq (\frac{1}{2})^{t+1}$ and hence by definition of $Y_i$, we have $\Pr(Y_i \geq t) \leq (\frac{1}{2})^t$.

We now use Chernoff bounds for geometric random variables (see \Cref{thm:geomChernoffBound}) to bound the tail distributions of $Y$ and $X$. For every $1\leq i \leq k$, let $H_i$ be a geometric random variable with a probability of success of $\frac{1}{2}$ and let $H = \sum_i H_i$. Note that $\E[H] = 2k$ and recall that $\Pr(H_i \geq t) = (\frac{1}{2})^t$. Since $\Pr(Y_i \geq t) \leq (\frac{1}{2})^t = \Pr(H_i \geq t)$, we also have $\Pr(Y \geq t) \leq \Pr(H \geq t)$ and therefore applying Chernoff bounds for geometric random variables gives that for any $\delta > 0$,

\begin{align*}
\Pr(Y \geq (1+\delta)\E[H]) &\leq \Pr(H \geq (1+\delta)\E[H])\\
&\leq \exp\left(-\frac{\delta^2(k-1)}{2(1+\delta)}\right).
\end{align*}

Letting $\delta = 5$ gives that $\Pr(Y \geq 12k) \leq \frac{1}{k}$. Lastly, we note that $\Pr(Y \geq 12k) = \Pr(X \geq 13k + k \log n)$. Thus, our schedule sends all $k$ messages in $13k + k \log n$ rounds with a probability of failure no greater than $1/k$.

To finish the proof, we need to show that our schedule achieves a throughput of $\Omega(1/\log n)$. To do so we must show that for any $\eps$ and for any $k'$, there exist a sufficiently large $k_0 > k$ such that $\left|\frac{k_0}{13k_0 + k_0\log n} - \frac{1}{\log n} \right| \leq \eps$. For sufficiently large $n$ this is clearly satisfied.




We now prove an upper bound of $O(\frac{1}{\log n})$ on throughput by proving that $\Omega(k \log n)$ rounds are necessary to broadcast $k$ messages with a probability of failure of at most $1/k$. Because every node other than $s$ is connected only to $s$, we know that any successful schedule routes $k$ messages from $s$ to each of the $n$ nodes directly. Broadcasting from any node other than $s$ clearly does not help the schedule and so, without loss of generality, we assume that only $s$ broadcasts. Similarly, it clearly does not help a schedule to broadcast a message after it is received by all nodes and so, without loss of generality, the schedule only broadcasts a message if some node has still not received that message. Therefore, without loss of generality, an adaptive routing schedule on a star can be thought of as broadcasting each message from $s$ until it is received by all nodes (potentially a given message is not always broadcast in contiguous rounds).

As in the proof of the lower bound, let $X_i$ stand for the number of rounds in which $m_i$ is broadcast and let $X = \sum_{i}X_i$.

We now show that $\E[X_i] \geq (\log n)/4$. Suppose, for the sake of contradiction, that $\E[X_i] = (\log n)/4$. It follows by Markov's inequality that it is possible to send $m_i$ by broadcasting it for $(\log n)/2$ rounds with probability of failure at most $1/2$. However, the probability that a given node does not receive $m_i$ in $(\log n)/2$ rounds is $(\frac{1}{2})^{\log n / 2} = 1/\sqrt{n}$ and so the probability that all nodes receive the message is $\left(1-\frac{1}{\sqrt{n}} \right)^n \leq \exp(-\sqrt{n})$, by the inequality $1-x \leq e^{-x}$. Thus, broadcasting for $(\log n)/2$ rounds succeeds with probability at most $\exp(-\sqrt{n})$, which contradicts that it is possible to broadcast for $(\log n)/2$ rounds with a probability of failure at most $1/2$. Thus, $\E[X_i] \geq (\log n)/4$.

%
%
%
%

We now show that $(k \log n )/ 4$ rounds are insufficient to broadcast $k$ messages with a failure probability of at most $1/k$. Suppose, for the sake of contradiction, that $(k \log n )/ 4$ rounds are sufficient to broadcast $k$ messages with a probability of failure no greater than $1/k$. Let $S_i$ stand for the random variable that is 1 if $X_i \leq (\log n)/16$ and 0 otherwise and let $S = \sum_i S_i$. Every message is broadcast on average $(\log n)/16$ times and so by Markov's inequality there are at least $k/2$ messages each of which is sent fewer than $(\log n)/8$ times. Since our schedule succeeds with probability $1-1/k$, it follows that with probability at least $1-1/k$ there are at least $k/2$ messages that are successfully sent in fewer than $(\log n)/16$ rounds. Otherwise stated, $ \Pr(S \geq k/2) \geq 1- \frac{1}{k}$. In what follows, we contradict this fact.

Since $\E[X_i] \leq (\log n)/4$, we have $\E[S_i] = \Pr(X_i \leq (\log n) / 16) \leq \Pr(X_i \leq \E[X_i](1-.75))$. By Chernoff bounds, $\Pr(X_i \leq \E[X_i](1-.75)) \leq \exp(-\Omega(E[X_i])) \leq \exp(-\Omega(\log n)) = 1/n^c$, for appropriate $c$. Thus, $\E[S] \leq k/n^c$. We now use this bound on $\E[S]$ to bound the probability that $S$ is large. It holds that $\Pr(S \geq k/2) \leq \Pr(S \geq \E[S](1+\frac{n^c-2}{2}))$ and so by another Chernoff bound we have $\Pr(S \geq k/2) \leq \exp(-\Omega(\E[S])) \leq \exp(-\Omega(n^c))$.

Therefore, for sufficiently large $k$, $n$ and appropriate $c$, this contradicts the above fact that $ \Pr(S \geq k/2) \geq 1- \frac{1}{k}$. Thus, we conclude that for large $k$ and $n$, $(k \log n )/ 4$ rounds are insufficient to broadcast $k$ messages with a failure probability of at most $1/k$ and so the throughput is $O(1/\log n)$.
\end{proof}
}

\fullOnly{\ProofStarLowerRoute}
\shortOnly{}

\begin{restatable}{lemma}{StarCode}
\label{StarCode}
The coding throughput of the star topology with receiver faults is $\Theta(1)$.
\end{restatable}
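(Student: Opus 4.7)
\begin{proofsketch}
The plan is to prove both a matching upper bound and lower bound of $\Theta(1)$.

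For the upper bound, I would argue that the throughput of the star is trivially at most $1$, even without faults. Fix any leaf $v$: since $v$ is adjacent only to $s$, every round it receives at most one packet of size $O(\log(nk))$, which conveys at most $O(1)$ messages' worth of information. Any schedule that succeeds with probability at least $1 - 1/k$ must, in particular, deliver all $k$ messages to $v$, so $|\mcS_k| \geq k / O(1)$. Taking the limsup in the definition of throughput yields $\tau^{\NC}(G,s) \leq O(1)$.

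For the lower bound, I would use Reed-Solomon coding as a black box, exactly as allowed by the paper: given the $k$ messages, $s$ precomputes $\poly(nk)$ coded packets, each of which can be transmitted in a single round of size $O(\log(nk))$, such that \emph{any} $k$ of them suffice to decode. The schedule is then: $s$ broadcasts coded packets (a new one each round), and every other node listens. Fix $\eps > 0$ and set $T := \lceil (1+\eps) k / (1-p) \rceil$. For each leaf $v$, the number of packets $v$ successfully receives after $T$ rounds is a sum of $T$ independent $\mathrm{Bernoulli}(1-p)$ indicators, with expectation $(1+\eps)k$. By a standard multiplicative Chernoff bound, the probability that $v$ receives fewer than $k$ packets is at most $\exp(-\Omega(\eps^2 k))$, with the constant depending only on $p$. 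A union bound over the $n$ leaves bounds the failure probability of the entire schedule by $n \exp(-\Omega(\eps^2 k))$, which is at most $1/k$ whenever $k \geq C \log n$ for a sufficiently large constant $C = C(\eps, p)$.

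To conclude that this yields throughput $\Omega(1)$, I would invoke the reformulation from the preliminaries: for any target $\eps' > 0$ and any $k$, it suffices to exhibit some $k_0 > k$ and a schedule that succeeds with probability $\geq 1 - 1/k_0$ on $k_0$ messages in $k_0 X$ rounds, with $|X^{-1} - \tau| \leq \eps'$. Picking $\eps$ small enough that $(1-p)/(1+\eps)$ is within $\eps'$ of $(1-p)$, and then $k_0 \geq \max(k, C \log n)$, the construction above gives such a schedule with $X = (1+\eps)/(1-p)$, so the coding throughput is at least $(1-p)/(1+\eps)$ and hence $\Theta(1)$ since $p$ is a constant.

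The main obstacle is really a book-keeping one: ensuring that the Chernoff tail is strong enough (exponentially small in $k$, not just in $\eps k$) so that a single union bound over $n$ nodes is swallowed once $k = \Omega(\log n)$, and then matching this up with the asymptotic throughput definition. Both are routine, and there is no combinatorial difficulty beyond invoking Reed-Solomon.
\end{proofsketch}
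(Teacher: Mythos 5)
Your proposal is correct and follows essentially the same route as the paper: the $O(1)$ upper bound is the trivial $\Omega(k)$-rounds observation, and the $\Omega(1)$ lower bound broadcasts Reed--Solomon coded packets from $s$ and applies a Chernoff bound plus a union bound over the leaves. The only (immaterial) difference is bookkeeping: the paper sends $100k + 100\log n$ packets so the per-node failure probability is $\exp(-\Omega(k+\log n))$ for every $k$, whereas you send $\approx(1+\eps)k/(1-p)$ packets and instead take $k_0 \geq C\log n$, which the limsup definition of throughput permits.
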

\gdef\ProofStarCode{
\begin{proof}
A throughput of $O(1)$ trivially follows because $\Omega(k)$ rounds are strictly necessary.

We show a throughput of $\Omega(1)$ by providing a schedule that uses Reed-Solomon coding to send $k$ messages in $O(k)$ rounds. Using Reed-Solomon coding it is possible to generate and send $100k + 100\log n$ packets such that reception of any $k$ of the packets by the receiver suffices for the receiver to reconstruct the original $k$ messages. Our schedule creates these packets and then broadcasts them. By Chernoff bounds, the probability that a given node does not receive $k$ packets is $\exp (-c(k + \log n)) \leq 1/(n2^k)^c$ for some positive constant $c \ge 2$. By a union bound over the nodes, we have that the probability that some node does not receive $k$ packets is at most $1/(n2^k)^{c-1} < 1/k$. For sufficiently large $k$, we have $100k + 100\log n = O(k)$. Lastly, to show a throughput of $\Omega(1)$ we note that for any $\eps > 0$ there exists a sufficiently large $k_0$ such that $|100 - \frac{k_0}{100k_0 + 100\log n}| \leq \eps$.
\end{proof}
}

\fullOnly{\ProofStarCode}
\shortOnly{\ProofStarCode}

\Cref{StarLowerRoute} and~\Cref{StarCode} give the claimed gap on the star topology.
\begin{theorem}
\label{cor:starGap}
In the receiver faults setting there is a $\Theta(\log n)$ coding gap for the star topology with adaptive routing. A shared topology gap of $\Omega(\log n)$ in the receiver faults setting follows.
\end{theorem}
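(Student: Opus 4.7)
The plan is straightforward: this theorem is a direct corollary of the two preceding lemmas, so the proof amounts to assembling them and then invoking the definition of the shared topology gap.

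First, I would combine \Cref{StarLowerRoute} and \Cref{StarCode}. The former establishes that any adaptive routing schedule on the star achieves throughput $\Theta(1/\log n)$ under receiver faults, while the latter shows that the coding throughput on the star under receiver faults is $\Theta(1)$. Dividing these two bounds yields a coding gap of $\tau^\NC(G_{\text{star}}, s)/\tau^R(G_{\text{star}}, s) = \Theta(1)/\Theta(1/\log n) = \Theta(\log n)$, which establishes the first claim.

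Next, for the shared topology gap statement, I would recall the definition $\mcG_s = \max_{G,s} \tau^\NC(G,s)/\tau^R(G,s)$. Since the star topology is one particular choice of $(G,s)$, the maximum is at least as large as the coding gap we just computed for the star. Hence $\mcG_s \geq \Theta(\log n)$, giving the $\Omega(\log n)$ shared topology gap lower bound.

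There is no real obstacle here beyond correctly quoting the two lemmas and observing the definitional consequence; all the technical work lies in \Cref{StarLowerRoute} (whose lower bound argument using Markov plus a Chernoff bound on geometric random variables is the non-trivial component) and \Cref{StarCode} (where Reed--Solomon coding plus a Chernoff-plus-union-bound argument handles the coding upper bound). Since both are assumed proven, the theorem itself is essentially a one-line consequence, and I would keep the write-up correspondingly short.
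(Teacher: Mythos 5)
Your proposal matches the paper exactly: the theorem is stated there as an immediate consequence of \Cref{StarLowerRoute} and \Cref{StarCode}, with the $\Omega(\log n)$ bound on $\mcG_s$ following from the observation (made in the preliminaries) that any coding gap on a fixed topology lower-bounds the shared topology gap. Your write-up is correct and no further detail is needed.
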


\subsubsection{Worst Case Topology: $\Theta(\log n)$ Gap for Receiver Faults with Adaptive Routing}
Having shown a shared topology gap of $\Omega(\log n)$ in the receiver fault setting we now show a worst case topology gap of $\Theta(\log n)$ with receiver faults. Recall that we define the worst case topology gap as $\min_{G, s} \tau^\NC(G, s) / \min_{G, s} \tau^R(G, s)$. To prove the $\Theta(\log n)$ gap, we show that with receiver faults and adaptive routing, $\min_{G, s} \tau^\NC(G, s) = \Theta(1/\log n)$ and $\min_{G, s} \tau^R(G, s) = \Theta(1/\log^2 n)$.

We first describe the topology, $\WCT$, that has minimal throughput for both coding and adaptive routing with receiver faults. The $\WCT$ topology is based on a construction of Ghaffari et al.~\cite{ghaffari2013bound}; Ghaffari et al.~\cite{ghaffari2013bound} demonstrate the existence of a bipartite network of radius 2 with $\tilde{\Theta}(\sqrt{n})$ nodes. These $\tilde{\Theta}(\sqrt{n})$ nodes consist of a source node, $\tilde{\Theta}(\sqrt{n})$ receiver nodes, $\Theta(\sqrt{n})$ sender nodes. Every receiver node is connected to a subset of the sender nodes by a probabilistic construction. See \Cref{fig:throughputLowerBound} for a sketch of the network. We duplicate each of the $\tilde{\Theta}(\sqrt{n})$ receiver nodes to induce $\tilde{\Theta}(\sqrt{n})$ star-like topologies which we term clusters. More formally,  $\WCT$ is as follows.

\newpage

\vspace{0.1cm}
\begin{mdframed}[roundcorner=4pt, backgroundcolor=white]
\textbf{Worst case topology ($\WCT$) for receiver faults:}
We begin with the construction of Ghaffari et al.~\cite{ghaffari2013bound}. Instead of each receiver node, $r$, we construct a \textbf{cluster} of $\tilde{\Theta}(\sqrt{n})$ nodes. Each node in the cluster that replaces receiver $r$ has an edge to sender node $s$ if and only if $\{s, r\}$ is an edge of the original network of Ghaffari et al.~\cite{ghaffari2013bound}. We use the symbol $\WCT$ for the resulting network. See \Cref{fig:throughputLowerBoundMod} for an illustration of $\WCT$.
\end{mdframed}
\vspace{0.1cm}

Since nodes in a cluster are connected to the same sender nodes, a node in a cluster is sent a packet without collision if and only if all other nodes in the cluster are sent the same packet without collision. Moreover, Ghaffari et al.~\cite{ghaffari2013bound} prove that at most $O(1 / \log n)$ of all receiver nodes receive a packet without collision in any one round in the original topology, so we conclude the following.
\begin{lemma}\label{prop:numClustersHit}
At most $O(1/ \log n)$ of clusters of $\WCT$ receive a packet without collision per round.
\end{lemma}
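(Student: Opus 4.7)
The plan is to reduce the statement directly to the analogous bound for the original Ghaffari et al.\ network, by showing that a cluster of $\WCT$ behaves as a single receiver from the point of view of collisions. First, I would observe that every node in the cluster $C_r$ corresponding to an original receiver $r$ has the same neighborhood in the sender set, namely $N(r)$, and has no other neighbors: by construction the only edges added for a cluster node go to those senders $s$ for which $\{s,r\}$ was an edge of the original network, so cluster nodes are neither adjacent to one another, nor to nodes of other clusters, nor directly to the source.

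Second, I would leverage this uniform-neighborhood property to show that the cluster-level collision event is determined entirely by the behavior of the senders. Fix a round and let $B$ be the set of senders broadcasting in that round. Since all nodes of $C_r$ share the neighborhood $N(r)$, all of them see exactly the same set $B \cap N(r)$ of incoming transmissions. Hence, if $|B \cap N(r)| = 1$, then every node of $C_r$ receives the same packet without collision; if $|B \cap N(r)| \neq 1$, then no node of $C_r$ receives a packet without collision. In particular, the event ``$C_r$ receives a packet without collision'' coincides exactly with the event ``$r$ receives a packet without collision in the original Ghaffari et al.\ network''.

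Third, I would invoke the bound of Ghaffari et al.~\cite{ghaffari2013bound}, which states that in any single round at most an $O(1/\log n)$ fraction of the original receivers receive a packet without collision. Since the clusters of $\WCT$ are in bijection with the original receivers, and the per-round collision pattern lifts verbatim through this bijection (by the previous step), the same $O(1/\log n)$ fraction bound carries over to clusters of $\WCT$, which is exactly the claim.

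The main subtle point, and the only thing to verify carefully, is that replicating receivers into clusters cannot create any additional uncollided receptions through intra-cluster or cross-cluster transmissions. This is automatic from the construction: a cluster node has listeners only among senders it is adjacent to, and no cluster node is a listener of any other cluster node, so broadcasts originating inside the cluster layer cannot deliver packets to other cluster nodes. Therefore the collision pattern at each cluster really is determined solely by the broadcasting behavior of the senders, and the reduction to Ghaffari et al.\ goes through with no loss.
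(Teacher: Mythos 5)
Your proof is correct and follows essentially the same route as the paper: you observe that all nodes of a cluster share the sender-neighborhood of the original receiver (and have no other neighbors), so a cluster receives a packet without collision exactly when the corresponding receiver does in the original network of Ghaffari et al., and then you invoke their $O(1/\log n)$ per-round bound. Your explicit check that no intra- or cross-cluster edges can create additional uncollided receptions is a detail the paper leaves implicit, but it does not change the argument.
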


We now use this topology to prove the following claim \fullOnly{}\shortOnly{in \Cref{subsec:AdapRoutWCT}}.
\begin{figure}
\centering
\subfigure[The lower bound network from \cite{ghaffari2013bound}]{\label{fig:throughputLowerBound}\includegraphics[scale=.2]{./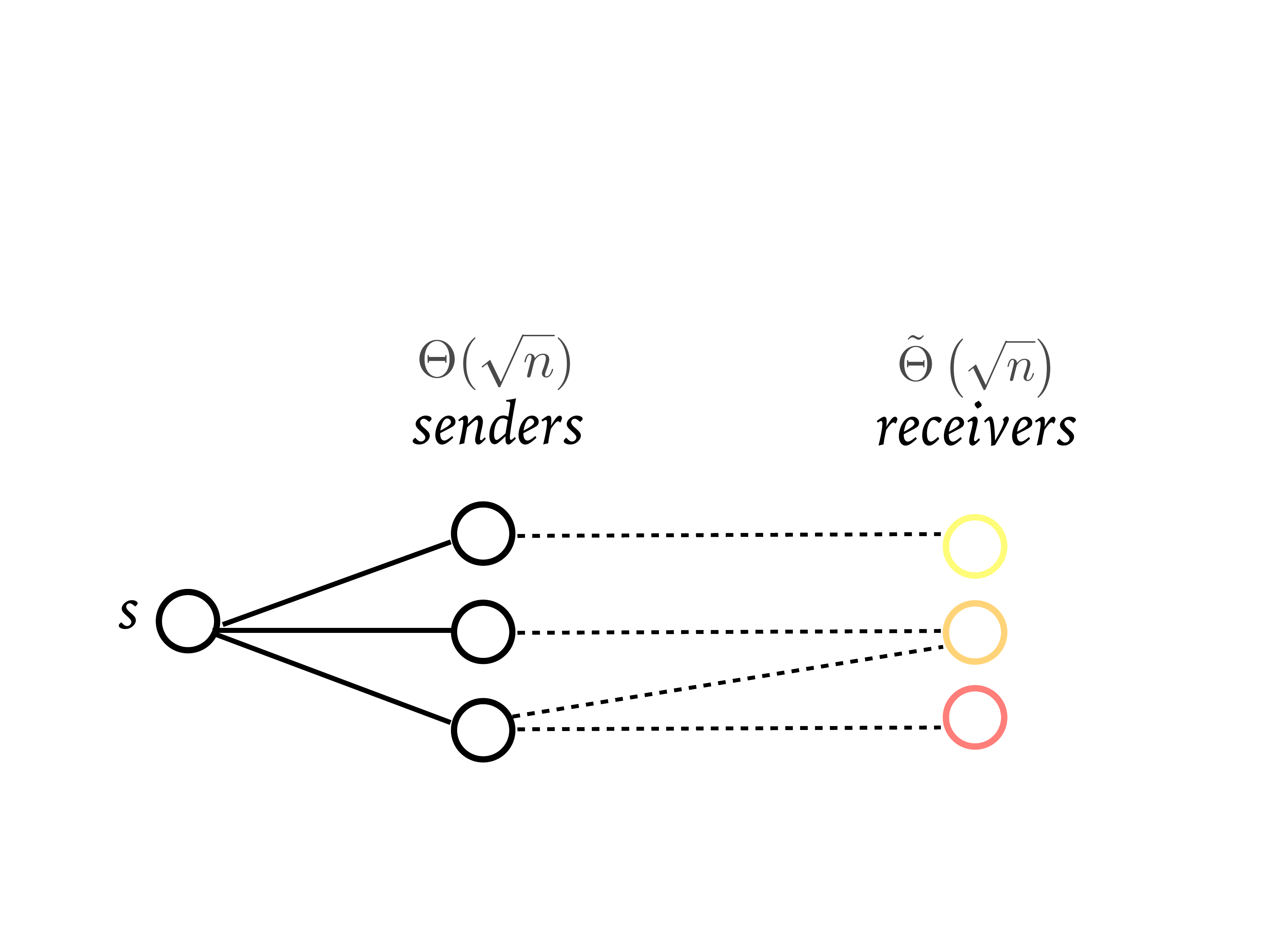}}
\subfigure[The modified lower bound network, $\WCT$]{\label{fig:throughputLowerBoundMod}\includegraphics[scale=.2]{./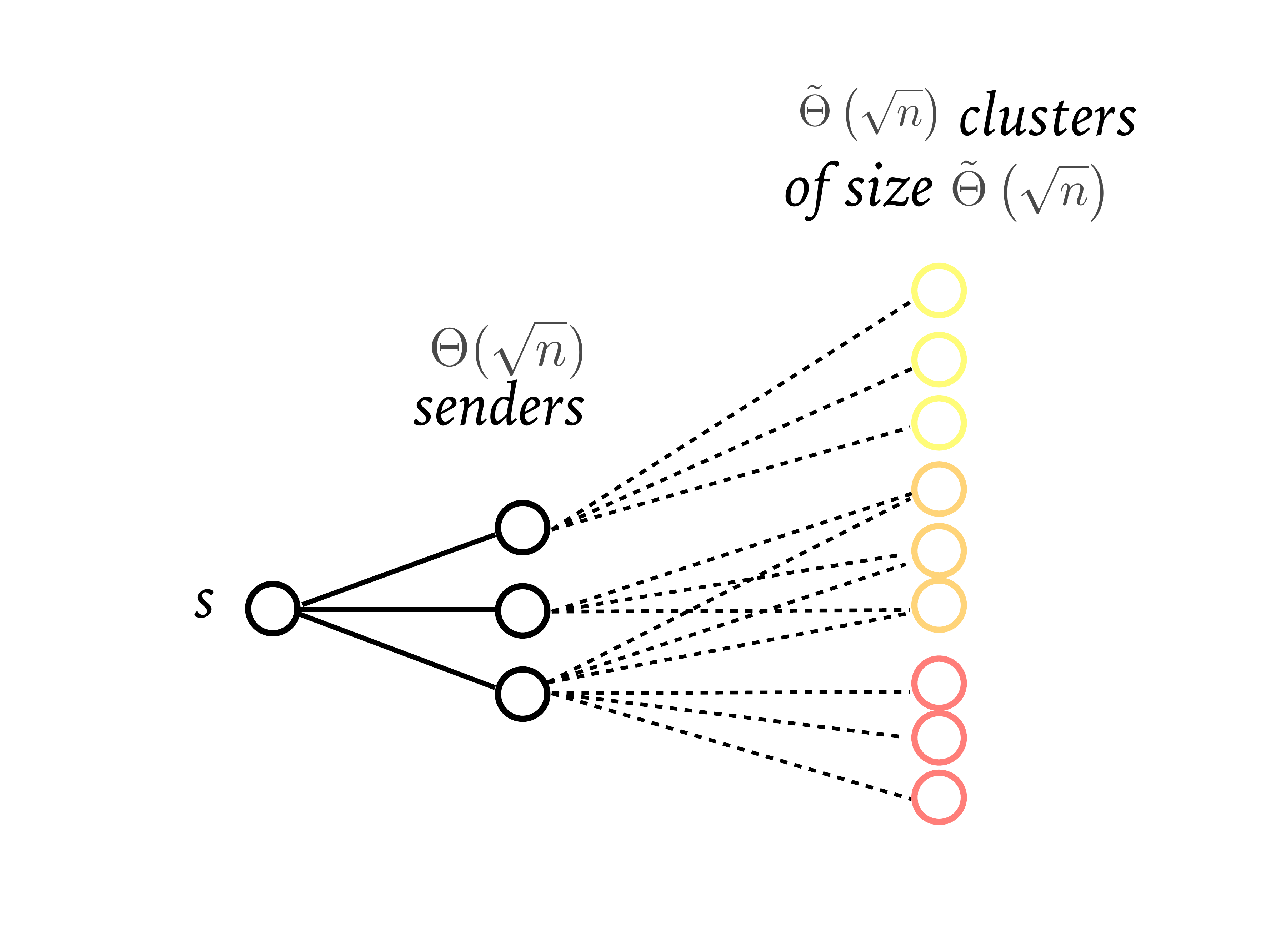}}
\caption{We modify the network from \cite{ghaffari2013bound} to a worst case topology in the receiver fault setting; dotted lines refer to edges from the probabilistic construction, solid lines refer to definite edges and clusters are colored according to the receiver node from which they are derived.}
\label{fig:throughputLowerBounds}
\end{figure}

\begin{restatable}{lemma}{AdapRoutWCT}
\label{AdapRoutWCT}
  Adaptive routing on $\WCT$ has a throughput of $O(1 / \log^2 n)$ in the receiver fault setting.
\end{restatable}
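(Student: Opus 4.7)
The plan is to lower bound the round complexity of any successful adaptive routing schedule on $\WCT$ by $\Omega(k\log^2 n)$, by combining a per-cluster lower bound with the cluster-capacity bound of \Cref{prop:numClustersHit}. The structural feature we exploit is that every cluster $C$ has $N = \tilde{\Theta}(\sqrt{n})$ nodes sharing exactly the same sender-neighbors, so in any round $C$ either receives one common collision-free packet at every node of $C$, or no collision-free packet at all; moreover cluster nodes have no other neighbors, so only broadcasts from senders can ever deliver a packet to $C$.

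For each cluster $C$, let $B_C$ denote the (random) number of rounds in which $C$ receives a collision-free broadcast, and for each message $m$ let $T_{m,C}$ be the number of such rounds in which the delivered packet equals $m$, so that $B_C = \sum_m T_{m,C}$. The first step is to prove $\E[T_{m,C}] = \Omega(\log n)$ for every pair $(m,C)$, which by linearity yields $\E[B_C] = \Omega(k\log n)$. This adapts the routing direction of \Cref{StarLowerRoute}: if one supposes for contradiction that $\E[T_{m,C}] \leq (\log N)/4$, then Markov gives $T_{m,C} \leq T' := (\log N)/2$ with probability at least $1/2$. One then couples the adaptive schedule with a phantom schedule that performs exactly $T'$ broadcasts of $m$ to $C$ using fresh independent receiver faults. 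Since additional broadcasts can only help, the event that the original satisfies $T_{m,C} \leq T'$ \emph{and} delivers $m$ to all of $C$ is contained in the event that the phantom delivers $m$ to all of $C$, whose probability is at most $(1 - p^{T'})^N \leq e^{-\sqrt N}$. Combined with the marginal assumption that the pair $(m,C)$ succeeds with probability at least $1 - 1/k$, this forces $\Pr[T_{m,C} \leq T'] \leq 1/k + e^{-\sqrt N}$, contradicting $\Pr[T_{m,C} \leq T'] \geq 1/2$ for large enough $n$ and $k$.

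Combining this per-cluster bound with \Cref{prop:numClustersHit} finishes the proof. Summing over the $M = \tilde{\Theta}(\sqrt n)$ clusters yields $\E\bigl[\sum_C B_C\bigr] \geq M \cdot \Omega(k\log n)$. On the other hand, \Cref{prop:numClustersHit} guarantees that in every round at most $O(M/\log n)$ clusters receive a collision-free broadcast, so deterministically $\sum_C B_C \leq R \cdot O(M/\log n)$, where $R$ is the schedule length. Chaining the two inequalities gives $R = \Omega(k\log^2 n)$, yielding the desired throughput upper bound of $O(1/\log^2 n)$.

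The main obstacle is justifying the per-cluster step rigorously in the adaptive setting, since $T_{m,C}$ is itself a random variable that depends on receiver-fault outcomes observed anywhere in the network. The key tool is the principle of deferred decisions: conditional on the schedule's decisions, the node-level receiver faults are mutually independent $\mathrm{Bernoulli}(p)$ variables, and padding a run of the schedule with additional broadcasts driven by fresh independent faults only makes it more likely that every node of $C$ eventually receives $m$. This reduces the per-cluster analysis cleanly to the iid single-source, fresh-fault computation already carried out in the proof of \Cref{StarLowerRoute}.
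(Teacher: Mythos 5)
Your proposal is correct and follows essentially the same route as the paper: treat each cluster as a star (all its nodes share the same sender-neighborhood, so they receive identically), invoke the star routing lower bound of \Cref{StarLowerRoute} to get $\Omega(k\log n)$ collision-free receptions per cluster, and combine with the per-round cluster-capacity bound of \Cref{prop:numClustersHit} to conclude $\Omega(k\log^2 n)$ rounds. The only difference is presentational -- you re-derive the per-cluster bound as an expectation statement and chain it against a deterministic capacity bound, with a careful deferred-decisions justification of adaptivity, whereas the paper simply cites \Cref{StarLowerRoute} as a black box.
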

\gdef\ProofAdapRoutWCT{
\begin{proof}
Consider the above described $\WCT$. Since in each round every node in a cluster is sent the same message without collision or no message, we can interpret each as a star of $\tilde{\Theta}(\sqrt{n})$ nodes. By \Cref{StarLowerRoute}, each cluster must receive a packet without collision at least $\Omega(k \log (\sqrt{n})) = \Omega(k \log n)$ times, in order to receive $k$ messages with a probability of failure of at most $1/k$. Thus, in order for every node to receive a message such that the probability of failure is at most $1/k$, every cluster must receive a message without collision at least $\Omega(k \log n)$ times. By \Cref{prop:numClustersHit}, $O\left(\frac{1}{\log n}\right)$ clusters are sent a packet without collision each round and so at least $\Omega((\log n) (k\log n)) = \Omega ( k \log ^2 n)$ rounds are necessary to have a failure probability of at most $1/k$. Thus, the throughput is at most $O(1 / \log^2 n)$.
\end{proof}
}

\fullOnly{\ProofAdapRoutWCT}
\shortOnly{}

We next move to the $\Omega(1 / \log^2 n)$ adaptive routing throughput in the receiver fault setting. To prove this possibility result we first prove an intermediate result for bipartite networks. \fullOnly{}\shortOnly{The proofs for the bipartite case and the general case appear in \Cref{subsec:RoutingBipartite} and \Cref{subsec:AdapRoutingRecFault}, respectively.}

\begin{restatable}{lemma}{RoutingBipartite}
\label{RoutingBipartite}
  Consider bipartite network $V = L \cup R$ where every node in $L$ knows the same $k$ messages. There exists an adaptive routing schedule of length $O(k \log^2 n)$ that broadcasts the $k$ messages to all the nodes in $R$ with probability at least $1 - \exp(-\Omega(k))$ in the receiver fault setting.
\end{restatable}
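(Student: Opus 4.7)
The plan is to broadcast the messages to $R$ sequentially in adaptive epochs, running Decay on the current message until every $R$-node has received it, and to show that the resulting total round count is tightly concentrated around $O(k\log^2 n)$. Concretely, define an adaptive routing schedule that operates in epochs $1,\ldots,k$, where epoch $i$ consists of every node in $L$ participating in the Decay broadcast of message $m_i$ and continuing until the received-packet history indicates that every $v \in R$ has received $m_i$; after epoch $k$ (or once a round budget of $O(k\log^2 n)$ is exhausted) the schedule becomes silent and declares failure if not all $R$-nodes have received all messages. Since every $L$-neighbor of any $R$-node knows $m_i$ from the start of epoch $i$, the faulty analogue of \Cref{prop:decay-spread-one-phase} used in the proof of \Cref{RobustDecay} guarantees that in each Decay phase of $\Theta(\log n)$ rounds, each $v \in R$ receives $m_i$ with at least some constant probability $q = c(1-p) > 0$.

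First I would analyze the length $T_i$ of a single epoch. Let $G_{v,i}$ denote the number of Decay phases until $v$ receives $m_i$; the above gives $\Pr[G_{v,i} > s] \leq (1-q)^s$, and $T_i = \Theta(\log n)\cdot\max_{v \in R} G_{v,i}$. A union bound over the at most $n$ nodes in $R$ yields
\begin{align*}
\Pr\!\bigl[T_i > C\log^2 n + t\log n\bigr] \;\leq\; n(1-q)^{C\log n + t} \;\leq\; \frac{(1-q)^t}{n}
\end{align*}
for $C$ a sufficiently large constant depending on $q$. Writing $W_i := \max\{0,\, (T_i - C\log^2 n)/\log n\}$, the nonnegative random variable $W_i$ satisfies $\Pr[W_i > t] \leq e^{-\Omega(t)}/n$, so integrating its tail gives $\E[e^{\lambda W_i}] \leq 1 + O(\lambda/n)$ for any sufficiently small constant $\lambda > 0$.

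Next I would control $\sum_i T_i = Ck\log^2 n + \log n \cdot \sum_i W_i$. Since the randomness driving distinct epochs (Decay coin flips and independent receiver faults) is fresh across epochs, the $W_i$ are independent, and a Chernoff/Bernstein argument gives
\begin{align*}
\Pr\!\left[\,\sum_{i=1}^k W_i > k\,\right] \;\leq\; e^{-\lambda k}\prod_{i=1}^k \E[e^{\lambda W_i}] \;\leq\; e^{-\lambda k + O(k\lambda/n)} \;=\; e^{-\Omega(k)}.
\end{align*}
Consequently $\sum_i T_i \leq (C+1)k\log^2 n = O(k\log^2 n)$ except with probability $e^{-\Omega(k)}$, matching the claimed bound.

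The hard part is obtaining the $e^{-\Omega(k)}$ concentration rather than the much weaker $k/\poly(n)$ that a naive per-message union bound would give; the latter forces a round budget of order $k\log^2 n + k^2\log n$ once $k = \omega(\log n)$, which exceeds what we can afford. The key quantitative observation making the Chernoff step go through is that the ``anomalous'' tail of $T_i$ beyond $C\log^2 n$ already carries a $1/n$ factor inherited from the inner union bound over $R$, which keeps each moment generating factor within $1 + O(\lambda/n)$ and makes the $k$-fold product remain $e^{O(k\lambda/n)}$. Receiver faults are absorbed into the constant $q$ exactly as in the proof of \Cref{RobustDecay}, so no further modification is needed.
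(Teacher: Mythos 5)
Your proof is correct and follows essentially the same route as the paper: broadcast the messages one at a time, adaptively retrying Decay until all of $R$ has the current message, and show that the sum of the per-message epoch lengths concentrates at $O(k\log^2 n)$ with failure probability $e^{-\Omega(k)}$. The only difference is bookkeeping: the paper packages each retry as a full $O(\log^2 n)$-round Decay run that succeeds (for all of $R$ at once) with probability $1-1/n$, so the per-message repetition count is a geometric random variable and the geometric Chernoff bound (\Cref{thm:geomChernoffBound}) finishes the argument, whereas you work at the granularity of single Decay phases and control the overshoot variables $W_i$ through their moment generating functions.
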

\gdef\ProofRoutingBipartite{
\begin{proof}
 We show that there exists an adaptive routing schedule, $\mcS_j$, that succeeds in broadcasting $k$ messages from all nodes in $L$ to all nodes in $R$ with a probability of failure of at most $\exp(-\Omega(k))$. By \Cref{RobustDecay}, Decay can route all $k$ messages to $R$ in $O(\log^2 n)$ rounds with a failure probability of at most $1/n$. Denote by $\mcS_{m_i}$ this schedule provided by Decay when it is used for message $m_i$. The schedule $\mcS_j$ runs the schedule $\mcS_{m_1}$ repeatedly until it succeeds, then it runs $\mcS_{m_2}$ until it succeeds and so forth until $\mcS_{m_{k}}$. However, $\mcS$ never runs more than $\frac{2k}{1-1/n}$ of these schedules in total.

 Let $C_i$ be the random variable that stands for the number of times $\mcS_{m_i}$ must be run and let $C = \sum_i C_i$. Note that $C_i$ is a geometric random variable with probability of success $1 - 1/n$ and that $\E[C] = k/(1-\frac{1}{n})$. By a Chernoff bound for geometric random variables (see \Cref{thm:geomChernoffBound}), we have that the probability that $\mcS$ does not succeed is

 \begin{align*}
 \Pr\left(C \geq \frac{(1+1)k}{1-1/n}\right) &\leq \exp \left(-\frac{1^2(k - 1)}{2(1+1)}\right)\\
 &= \exp(-\Omega(k)).
 \end{align*}

 Thus, $\mcS_j$ succeeds in sending $k$ messages in $\frac{2k}{1-1/n}(\log ^ 2 n) = \Omega(k \log ^ 2 n)$ rounds with a probability of failure of at most $\exp(-\Omega(k))$.
\end{proof}
}

\fullOnly{\ProofRoutingBipartite}
\shortOnly{}

\begin{restatable}{lemma}{AdapRoutingRecFault}
\label{AdapRoutingRecFault}
  Adaptive routing on any network has a throughput of $\Omega(1 / \log^2 n)$ with receiver faults.
\end{restatable}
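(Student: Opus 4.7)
The plan is to pipeline batches of messages through BFS layers, invoking \Cref{RoutingBipartite} on each pair of adjacent layers.

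First, I would fix a BFS decomposition of $G$ from the source into layers $L_0, L_1, \ldots, L_D$, where $L_i$ contains the nodes at distance $i$ from $s$. In any BFS decomposition edges of $G$ only connect nodes in the same or adjacent layers, so the transmission sets of any two layers $L_i, L_j$ with $|i-j| \ge 3$ are disjoint. Consequently, all layers in any single residue class modulo $3$ may transmit simultaneously without creating collisions at any receiver, and we can round-robin across the three residue classes.

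Next, I would partition the $k$ messages into $k/m$ batches $B_1, \ldots, B_{k/m}$ of size $m = \Theta(\log n + \log k)$. The schedule runs in synchronous \emph{time steps}; each time step is split into three \emph{substeps}, one per residue class. In substep $r \in \{0,1,2\}$ of time step $t$, every layer $L_i$ with $i \equiv r \pmod{3}$ runs the adaptive routing schedule of \Cref{RoutingBipartite} on the bipartite subgraph between $L_i$ and $L_{i+1}$ to deliver the batch $B_{t-i}$ (the batch currently held by $L_i$); each such substep uses $O(m \log^2 n)$ basic rounds. The correctness is maintained by induction on $t-i$: if every prior bipartite transition on this diagonal succeeded, then at the start of time step $t$ every node of $L_i$ knows $B_{t-i}$, which is exactly the hypothesis needed to invoke \Cref{RoutingBipartite}.

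After $D + k/m$ time steps every batch has reached every layer, so the total round count is
\begin{align*}
O\!\left((D + k/m)\cdot m \log^2 n\right) \;=\; O(Dm \log^2 n + k \log^2 n).
\end{align*}
For the failure analysis, there are $O(Dk/m)$ bipartite transitions, each failing with probability $\exp(-\Omega(m))$, so the union bound yields a total failure probability of $O((Dk/m)\exp(-\Omega(m))) \le 1/k$ for the chosen $m$. As $k \to \infty$ (holding $n$ and $D$ fixed), the term $Dm\log^2 n$ is dominated by $k \log^2 n$ (since $m = O(\log n + \log k) = o(k)$), and the throughput ratio $k / |{\cal S}_k|$ tends to $\Omega(1/\log^2 n)$, matching the claim.

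The main obstacle is coordinating the pipeline so that the no-interference guarantee and the inductive ``layer $L_i$ knows batch $B_{t-i}$'' invariant both hold: the former is handled by the mod-$3$ separation of transmitting layers, and the latter requires that \Cref{RoutingBipartite} applies cleanly even though different layer pairs are running their adaptive schedules concurrently. Because the substeps of different residue classes are temporally disjoint and layers in the same residue class cannot interfere, each invocation of \Cref{RoutingBipartite} takes place on an independent bipartite instance, so the lemma's guarantees compose via the union bound above.
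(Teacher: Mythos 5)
Your proposal is correct and follows essentially the same route as the paper's proof: pipeline batches through the BFS layers with a mod-$3$ separation of transmitting layers, invoke \Cref{RoutingBipartite} on each adjacent layer pair, and union-bound over all transitions. The only (immaterial) difference is the batch size — the paper uses $k' = k/D$ batches of size $k/D$ while you use batches of size $\Theta(\log n + \log k)$ — and both choices make the failure probability per transition small enough for the union bound and let the additive overhead vanish as $k \to \infty$.
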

\gdef\ProofAdapRoutingRecFault{
\begin{proof}
  We now prove that the worst case adaptive routing throughput is $\Omega(1 / \log^2n)$ by providing an adaptive routing schedule. Roughly, our schedule works by \textbf{pipelining} schedules given by \Cref{RoutingBipartite}. We first note that any broadcast problem given a source $s$ and graph $G$ can be broken into a series of broadcast problems on bipartite graphs. In particular, it can be broken into the BFS layering of $G$ where the $i^{th}$ layer contains all nodes with distance exactly $i$ from $s$. Let $L_i$ be the set of nodes in the $i^{th}$ layer. Note that each pair of consecutive layers $L_i$, $L_{i+1}$ define a bipartite network. Next, we divide $k$ into $D$ batches of size $k' = \frac{k}{D}$ (assume, without loss of generality, that $D$ divides $k$), which we pipeline.

By \Cref{RoutingBipartite}, we can broadcast batch $j \in [D]$ of $k'$ messages in $O(k'\log^2 n)$ rounds with a failure probability of at most $\exp(-\Omega(k'))$. Let $\mcS_j$ be the schedule that broadcasts batch $j$ in this manner.


We now describe our schedule for broadcasting through the entire network. We divide rounds into meta-rounds of size $\Omega(k' \log^2n)$. The $k$-th layer runs $\mcS_{j}$ in meta-round $3j + k - 3$. In other words, we pipeline batches through the network using the above schedule for bipartite graphs, working layers of $3$ apart so we do not incur extra collisions. We run $4(D+\frac{k}{k'})$ meta-rounds.

We argue that this schedule achieves a throughput of $\Omega(1/\log^2 n)$. A fixed $\mcS_j$ fails in a meta-round with a probability of at most $\exp(-\Omega(k')) = \exp(-\Omega(k/D)) \leq 1/k^c$, for sufficiently large $k$. By a union bound over the diameter, the probability that any $\mcS_j$ fails in a meta-round is at most $1/k^{c-1}$. Moreover, we run $4(D+\frac{k}{k'})$ meta-rounds and so by another union bound over meta-rounds, an $\mcS_j$ fails in any meta-round with a probability of at most $1/k^{c-2} \leq 1/k$, for an appropriate $c$. Lastly, if every $\mcS_j$ in every meta-round succeeds, then every node receives every batch, meaning that every node receives all messages. Since each meta-round is of length $\frac{2k'}{1-1/n}\log^2n$ and we run $4(D+\frac{k}{k'})$ meta-rounds, we use a total of $4(D+\frac{k}{k'})(\frac{2k'}{1-1/n}\log^2n) \leq 24k\log^2n$ rounds to succeed with probability at least $1-1/k$, for $n\geq 2$.

Finally, to conclude a throughput of $\Omega(1/\log^2n)$ we need to show that for any $\eps > 0$ and $k$ there exists a $k_0$ such that $|\frac{k_0}{24k_0\log^2n} - \frac{1}{24\log^2}| \leq \eps$, which trivially holds.
\end{proof}
}

\fullOnly{\ProofAdapRoutingRecFault}
\shortOnly{}

The impossibility result of \Cref{AdapRoutWCT} and the possibility result of \Cref{AdapRoutingRecFault} yield the following.
\begin{lemma}\label{worstCaseRoutingTPut}
  The worst case adaptive routing throughput with receiver faults is $\Theta(1/\log^2 n)$, i.e. $\min_{G, s} \tau^R(G, s) = \Theta(1/\log^2 n)$.
\end{lemma}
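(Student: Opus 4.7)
The plan is to derive the claim by simply combining the two preceding results (\Cref{AdapRoutWCT} and \Cref{AdapRoutingRecFault}) as matching lower and upper bounds on the quantity $\min_{G,s} \tau^R(G, s)$. Neither bound needs to be re-proved; the only content of this lemma is observing that the two bounds meet.

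First I would establish the upper bound $\min_{G, s} \tau^R(G, s) = O(1/\log^2 n)$. Since $\WCT$ is a particular topology, it is a valid assignment in the minimization, so
\begin{align*}
\min_{G, s} \tau^R(G, s) \;\le\; \tau^R(\WCT, s_{\WCT}) \;=\; O\!\left(\tfrac{1}{\log^2 n}\right),
\end{align*}
where the last equality is exactly \Cref{AdapRoutWCT}. Here $s_{\WCT}$ denotes the source of $\WCT$ inherited from the Ghaffari et al.~\cite{ghaffari2013bound} construction.

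Next I would establish the matching lower bound $\min_{G, s} \tau^R(G, s) = \Omega(1/\log^2 n)$. This is immediate from \Cref{AdapRoutingRecFault}, which asserts that \emph{every} network admits an adaptive routing schedule with throughput $\Omega(1/\log^2 n)$ in the receiver fault setting; in particular the minimum over all topologies satisfies the same bound.

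Combining the two inequalities yields $\min_{G, s} \tau^R(G, s) = \Theta(1/\log^2 n)$, which is the claim. There is no genuine obstacle here: the only subtlety to flag is ensuring that the throughput definition used in the two cited lemmas agrees with the definition used in the $\min_{G,s}$ expression (it does, since both are instances of the topology throughput $\tau^R$ for the receiver faults model), so no additional translation step is required.
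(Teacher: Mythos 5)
Your proposal is correct and matches the paper exactly: the paper states this lemma as an immediate consequence of \Cref{AdapRoutWCT} (the $O(1/\log^2 n)$ upper bound witnessed by $\WCT$) and \Cref{AdapRoutingRecFault} (the $\Omega(1/\log^2 n)$ lower bound holding for every topology), with no further argument. Your explicit writing out of the two inequalities is just a slightly more detailed rendering of the same one-line combination.
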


Next, we show that coding in the receiver fault setting can always achieve a throughput of $\Omega(1 / \log n)$, and that no better bound exists. \fullOnly{}\shortOnly{The proof of this appears in \Cref{subsec:WCTCode}.}

\begin{restatable}{lemma}{WCTCode}
\label{WCTCode}
The worst case coding throughput in the receiver fault setting is $\Theta(1/\log n)$, i.e. $\min_{G, s} \tau^\NC(G, s) = \Theta(1/\log n)$.
\end{restatable}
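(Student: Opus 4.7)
The plan is to prove the two matching bounds separately. The lower bound $\min_{G,s}\tau^{\NC}(G,s) = \Omega(1/\log n)$ is essentially free: by \Cref{prop:DecayTPut}, Decay combined with random linear network coding achieves throughput $\Omega(1/\log n)$ on every topology under receiver faults, so in particular the minimum over all topologies is at least $\Omega(1/\log n)$. The substance of the proof is therefore the matching upper bound, which I will establish by exhibiting $\WCT$ as a topology on which no coding scheme can do better.

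For the upper bound, I would use an information-theoretic counting argument on $\WCT$. Fix any coding schedule that succeeds with probability at least $1 - 1/k$, and let $C = \tilde\Theta(\sqrt n)$ denote the number of clusters in $\WCT$. Every node needs to recover all $k$ messages, which total $\Theta(k \log nk)$ bits; since each received packet carries only $O(\log nk)$ bits, any node must successfully receive at least $k$ packets in order to decode. Pick one representative node $v_c$ in each cluster $c$. A representative $v_c$ successfully receives a packet in a given round only if (i) the cluster $c$ receives a packet without collision that round, and (ii) the receiver fault at $v_c$ does not fire; the faults at distinct nodes are independent of each other and of the schedule's randomness.

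By \Cref{prop:numClustersHit}, at most an $O(1/\log n)$ fraction of the $C$ clusters are collision-free in any round. Conditioned on this, an independent coin flip of bias $1-p$ decides whether $v_c$ actually receives. So in any round $t$, the expected number of representatives who successfully receive is $O(C/\log n)$, and since receiver faults are independent across nodes and rounds, a Chernoff bound over the $T$ rounds of the schedule shows that with probability $1 - \exp(-\Omega(C))$ the total number of successful (representative, round) pairs is $O(TC/\log n)$. On the success event, every representative must have received at least $k$ packets, requiring $kC$ successful pairs in total, so $T = \Omega(k \log n)$. This forces throughput $k/T = O(1/\log n)$ on $\WCT$, completing the upper bound.

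The main obstacle is getting the concentration right: the set of clusters that are collision-free in round $t$ is determined by the (possibly adaptive) schedule and by previous faults, so one has to be careful that the receiver fault at $v_c$ in round $t$ is independent of this set. This is fine because the schedule's choices in round $t$ are made before the round-$t$ fault at $v_c$ is realized, allowing a Chernoff bound on the sum of the $T \cdot C$ independent fault indicators (each conditioned on its cluster being collision-free). A secondary subtlety is that the $1 - 1/k$ success probability only needs to hold over the schedule's internal randomness and the faults jointly; combining the low-probability Chernoff failure with the $1/k$ schedule failure still leaves the "$\geq kC$ successful receptions" event with overwhelming probability, which is what the counting step requires.
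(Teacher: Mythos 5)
Your proposal is correct and follows essentially the same route as the paper: the lower bound is cited from \Cref{prop:DecayTPut}, and the upper bound combines \Cref{prop:numClustersHit} with the observation that each cluster (a star, as in \Cref{StarCode}) needs $\Omega(k)$ collision-free receptions to decode, giving $\Omega(k\log n)$ rounds on $\WCT$. The only difference is that your Chernoff argument over receiver faults is unnecessary --- \Cref{prop:numClustersHit} already caps the number of collision-free clusters per round deterministically, and faults can only reduce receptions further, so the $O(TC/\log n)$ counting bound holds without any concentration step.
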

\gdef\ProofWCTCode{
\begin{proof}
We first prove that the worst case coding throughput is $O(1/ \log n)$ with receiver faults. Consider the above worst case topology. In any given round at most $O(1/\log n)$ clusters receive a message. Each cluster forms a star and, similarly to $\Cref{StarCode}$, each star must be sent the message $\Theta(k)$ rounds, in order to decode all $k$ messages. Thus, $\Omega(k\log n)$ rounds are strictly necessary and in particular are  necessary to succeed with probability at least $1-1/k$. We conclude a coding throughput of $O(1/\log n)$ on this topology.

By \Cref{prop:DecayTPut}, the worst case coding throughput is $\Omega(1/ \log n)$ with receiver faults, which completes the proof.
\end{proof}
}
\fullOnly{\ProofWCTCode}
\shortOnly{}

By \Cref{worstCaseRoutingTPut} and \Cref{WCTCode}, we conclude our strong worst case topology gap.
\begin{theorem}
  The worst case topology gap is $\Theta(\log n)$ for receiver faults with adaptive routing.
\end{theorem}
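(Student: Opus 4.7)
The plan is to invoke the two immediately preceding lemmas and substitute directly into the definition of worst case topology gap. By \Cref{def:WCTGap}, we have
\[
\mcG_w = \left[\min_{G,s} \tau^\NC(G,s)\right] \Big/ \left[\min_{G,s} \tau^R(G,s)\right].
\]
\Cref{worstCaseRoutingTPut} supplies the denominator as $\Theta(1/\log^2 n)$, and \Cref{WCTCode} supplies the numerator as $\Theta(1/\log n)$. Dividing yields $\Theta(\log n)$.

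For rigor, I would separate the proof into an upper bound and a matching lower bound on $\mcG_w$. For the upper bound, \Cref{AdapRoutingRecFault} provides an adaptive routing schedule achieving throughput $\Omega(1/\log^2 n)$ on every topology, while the $O(1/\log n)$ piece of \Cref{WCTCode} is witnessed by $\WCT$ (via the cluster/star argument of \Cref{AdapRoutWCT} combined with the $O(1)$-coding-throughput bound on stars from \Cref{StarCode}). Taking the ratio gives $\mcG_w = O(\log n)$. For the lower bound, the $\Omega(1/\log n)$ piece of \Cref{WCTCode} comes from \Cref{prop:DecayTPut} applied in the receiver-faults setting on every topology, while \Cref{AdapRoutWCT} exhibits $\WCT$ as a topology on which adaptive routing throughput is $O(1/\log^2 n)$. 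The ratio of these two bounds gives $\mcG_w = \Omega(\log n)$.

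There is no real obstacle here — the theorem is a bookkeeping combination of the two lemmas it explicitly cites. The only subtlety worth flagging in the write-up is that the numerator and denominator of $\mcG_w$ are minima over possibly different topologies, so one should be careful not to conflate this with the shared topology gap: here the coding minimum happens to be $\Theta(1/\log n)$ (attained, up to constants, on $\WCT$ as well, but the matching lower bound holds for \emph{all} topologies), and the routing minimum is $\Theta(1/\log^2 n)$ (attained on $\WCT$, with matching upper bound for all topologies). Once both sides are pinned down tightly by the preceding lemmas, the theorem follows in one line.
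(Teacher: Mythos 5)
Your proposal is correct and matches the paper's argument exactly: the paper derives this theorem in one line by combining \Cref{worstCaseRoutingTPut} ($\min_{G,s}\tau^R = \Theta(1/\log^2 n)$) with \Cref{WCTCode} ($\min_{G,s}\tau^{\NC} = \Theta(1/\log n)$) and taking the ratio. Your additional decomposition into upper and lower bounds, tracing each piece back to \Cref{AdapRoutingRecFault}, \Cref{AdapRoutWCT}, \Cref{StarCode}, and \Cref{prop:DecayTPut}, is a faithful unpacking of how those two lemmas were themselves established.
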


\subsection{Transformations from the Faultless Setting to the Faulty Setting}
Having shown that there is a strong $\Theta(\log n)$ gap in the receiver fault setting, we now turn our attention to the sender fault setting. We begin by presenting explicit transformations of schedules from the faultless setting into schedules that are robust to faults. \fullOnly{}\shortOnly{The proofs of the following appear in \Cref{subsec:TPutRoutingTransformProof} and \Cref{subsec:TputForNoFaultToFaultCoding}, respectively.}

\begin{restatable}{lemma}{TputForNoFaultToFault}
\label{lem:TputForNoFaultToFault}
A set of routing schedules with throughput $\tau$ in the faultless setting can be transformed into a set of adaptive routing schedules with throughput $\tau(1-p)$ for the sender fault setting.
\end{restatable}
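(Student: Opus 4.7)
The plan is to construct, for each faultless routing schedule $\mathcal{S}_k$ in the family with $R_k = |\mathcal{S}_k|$, an adaptive routing schedule $\mathcal{S}'_k$ in the sender-fault setting whose length is at most $\frac{R_k}{1-p}(1 + o(1))$ with probability at least $1 - 1/k$. Passing to the limit as $k \to \infty$ then yields throughput $\tau(1-p)$ for the transformed family.

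The transformation is a ``simulate-with-retransmission'' scheme: $\mathcal{S}'_k$ maintains a pointer $t$ into the rounds of $\mathcal{S}_k$ (initialized at $t=1$), and in each real round every sender $u$ whose scheduled broadcast in round $t$ of $\mathcal{S}_k$ has not yet been confirmed successful broadcasts its scheduled message $b^t_u$. Because $\mathcal{S}'_k$ is adaptive, the scheduler can detect that $u$'s broadcast transmitted noise-free by checking whether some listener $v$ for which $u$ is the unique broadcasting neighbor in round $t$ of $\mathcal{S}_k$ received $u$'s message in that round; senders with no such listener are vacuously confirmed. Once all effective senders of round $t$ are confirmed, the pointer advances to $t+1$. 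A straightforward induction on $t$ establishes correctness: once the pointer reaches $R_k$, every reception of the faultless execution of $\mathcal{S}_k$ has occurred in $\mathcal{S}'_k$, along with possibly extra receptions that only help a routing schedule.

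The analytical core is bounding $|\mathcal{S}'_k|$. Each individual broadcast requires $1/(1-p)$ attempts in expectation to transmit noise-free, so round $t$ takes $N_t$ real rounds, where $N_t$ is the maximum of at most $|B_t|$ independent geometric random variables with parameter $1-p$. Summing and applying a Chernoff bound for sums of geometric random variables (as used in the proof of \Cref{StarLowerRoute}) yields the $1 - 1/k$ tail bound.

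The main obstacle is achieving the sharp $1/(1-p)$ per-round factor rather than the looser $\log(n)/(1-p)$ that arises naively from $\E[N_t] = O(H_{|B_t|}/(1-p))$. I would resolve this by amortizing over the schedule: the total number of \emph{effective} broadcasts (those causing a new reception in the faultless execution) is at most $(n-1)k$, since each reception corresponds to exactly one effective broadcast. Because $R_k = \Omega(k)$ for any routing schedule, the average size of the effective broadcasting set is bounded by a constant depending only on $n$ and $\tau$; concavity of $H_m$ then gives $\sum_t H_{|B^{\text{eff}}_t|} = O(R_k)$. Combined with a pipelining step in which leftover retries from round $t$ are absorbed into the parallelism slack of round $t+1$ whenever the known topology guarantees no new collision, this yields $|\mathcal{S}'_k| \leq \frac{R_k}{1-p}(1 + o(1))$ with probability at least $1 - 1/k$, giving the claimed throughput $\tau(1-p)$.
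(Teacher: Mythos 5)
There is a genuine gap, and you put your finger on it yourself: the step where you go from ``round $t$ of the simulation takes the maximum of $|B_t|$ independent geometrics, i.e.\ $\Theta(H_{|B_t|}/(1-p))$ expected rounds'' to an overall bound of $\frac{R_k}{1-p}(1+o(1))$ does not work. Your amortization only shows that the \emph{average} effective broadcasting set has size at most $nk/R_k = O(n)$, and a ``constant depending on $n$'' is not a constant: concavity of $H_m$ then yields $\sum_t H_{|B_t^{\text{eff}}|} = O(R_k \log n)$, not $O(R_k)$, so you are back to losing a $\Theta(\log n)$ factor. This loss is real, not an artifact of the bound: take a faultless schedule in which a layer of $\Theta(n)$ senders is matched one-to-one with $\Theta(n)$ receivers and all senders transmit in the same round (no collisions occur); then every simulated round genuinely waits $\Theta(\log n/(1-p))$ real rounds for the slowest sender before your global pointer can advance. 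The proposed escape --- ``absorbing leftover retries into the parallelism slack of round $t+1$'' --- is not an argument; in the matching example there is no slack, and decoupling the pointer per node destroys the induction you rely on for correctness (a node cannot rebroadcast a message it has not yet received). Finally, even where the expectation were controlled, you still need a $1-1/k$ tail bound on a sum of maxima of geometrics, which you do not supply.

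The paper's proof resolves exactly this difficulty by \emph{increasing the number of messages} rather than trying to amortize: it replaces each message $m_i$ by $x = \Theta(\log(nk\tau^{-1})/\eta^2)$ messages $m_{i1},\dots,m_{ix}$ and each round of $\mcS_{k}$ by a meta-round of $x(1-p)^{-1}(1+\eta)$ rounds, during which a node scheduled to send $m_i$ sends $m_{i1}$ until it succeeds, then $m_{i2}$, and so on. Each sender now has to complete a \emph{sum} of $x$ i.i.d.\ geometric trials rather than other senders having to wait for its single trial, and a Chernoff bound shows this sum exceeds $x(1-p)^{-1}(1+\eta)$ with probability only $\exp(-\Omega(x\eta^2)) = 1/\poly(nk\tau^{-1})$; union bounds over nodes and meta-rounds finish the argument. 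Because the throughput is measured on the blown-up instance with $k' = kx$ messages (which is legitimate under the $\limsup$ definition of throughput), the per-message overhead is exactly $(1-p)^{-1}(1+\eta)$, giving $\tau(1-p)$ as $\eta,\eps \to 0$. This message-blow-up is the one idea your proposal is missing, and without it (or some substitute for it) the construction only achieves $\tau(1-p)/\Theta(\log n)$.
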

\gdef\ProofTputForNoFaultToFault{
  \begin{proof}
    We construct a set of routing schedules, $\mcS'$, such that for any $\eps'$ and any $k'$ there exists a $k_0' \geq k'$ such that $\left|\tau (1-p) - \frac{k_0'}{|\mcS_{k_0'}|} \right| \leq \eps'$, where $\mcS'_{k_0}$ is a schedule in $\mcS'$ that broadcasts $k_0'$ messages with probability of success at least $1-1/k_0'$ and $|\mcS_{k_0'}|$ is the number of rounds it uses. Fix $\eps'$ and $k'$. We now show how to construct  $\mcS'_{k_0'}$ that satisfies the above inequality.

    Suppose we have a set of schedules $\mcS$ that achieves a throughput of $\tau$ in the faultless setting. That is, for any $\eps > 0$ and any $k$ there exists sufficiently large $k_0 \geq k$, such that $\left|\frac{k_0}{|\mcS_{k_0} |} - \tau \right| \leq \eps$ for some schedule $\mcS_{k_0}$ in $\mcS$. It follows that $|\mcS_{k_0}| \leq k_0(\tau - \eps)^{-1}$. We use $\mcS_{k_0}$ to construct $\mcS'_{k_0}$, such that the schedule $\mcS'_{k_0}$ sends $k_0' = k_0x$ messages in $(\tau - \eps)^{-1}(1-p)^{-1}(1 + \eta)$ rounds with probability at least $1-1/k_0'$, for a value of $x$ to be chosen later and arbitrarily small $\eps$ and $\eta$. Notice that we can guarantee that $k_0' \geq k'$ by picking $k$ sufficiently large, since $k_0' = k_0x \geq kx$.

    The schedule $\mcS'_{k_0'}$ is constructed as follows. Each round of broadcast of $\mcS_{k_0}$ corresponds to a meta-round of $\mcS'_{k_0'}$ composed of $x(1-p)^{-1}(1 + \eta)$ rounds, for $\eta > 0$ chosen as small as desired. In each meta-round of $\mcS'_{k_0'}$, if a node broadcasts message $m_i$ in the corresponding round of $\mcS_{k_0}$, the node now broadcasts $m_{i1}$ until it succeeds, then it broadcasts $m_{i2}$ until it succeeds and so forth up to $m_{ix}$, until it reaches $x(1-p)^{-1}(1 + \eta)$ rounds. If it succeeds in sending all $x$ messages before the end of the meta-round, the node remains silent for the remainder of the meta-round.

    In each meta-round, in expectation each message requires $(1-p)^{-1}$ rounds to be sent successfully and so all $x$ messages for a meta-round require $(1-p)^{-1}x$ rounds in expectation. By Chernoff bounds, a node fails to send all $x$ messages in a given round with a probability of at most $\exp\left( -\Omega(x \eta^2)\right)$. Thus, letting $x = \Omega((\log (n k_0 \tau^{-1}))/\eta^2)$, the schedule $\mcS'_{k_0'}$ fails with a probability of at most $1/(nk_0 \tau^{-1})^c$, for some positive constant $c$. A union bound over all nodes shows that the probability that any node fails in a meta-round is at most $1/(nk'\tau^{-1})^{c-1}$. Moreover, by another union bound over meta-rounds, the probability that any node fails in any one of the $k_0 (\tau - \eps)^{-1}$ meta-rounds is at most $1/(nk_0 \tau^{-1})^{c-2}$, which is at most $1/k_0'$, for sufficiently large $c$. Thus, the schedule $\mcS_{k_0'}'$ succeeds in sending $k_0'$ messages in $k_0'(\tau - \eps)(1-p)(1 + \eta)^{-1}$ rounds with s probability of success of at least $1-1/k_0'$.

    Lastly, we need to show that $\left|\tau (1-p) - \frac{k_0'}{|\mcS_{k_0'}|}\right| \leq \eps'$. However, notice that $\left|\tau (1-p) - \frac{k_0'}{|\mcS_{k_0'}|} \right| = \left|\tau (1-p) - (\tau - \eps)(1-p)(1 + \eta)^{-1} \right|$, where $\eps$ and $\eta$ are arbitrarily small by our choice and hence the inequality clearly holds.
  \end{proof}
}
\fullOnly{\ProofTputForNoFaultToFault}
\shortOnly{}

\begin{restatable}{lemma}{TputForNoFaultToFaultCoding}
\label{TputForNoFaultToFaultCoding}
  A set of coding schedules from the faultless setting with throughput $\tau$ can be transformed into a set of coding schedules with throughput $\tau (1-p)$ in the sender or receiver fault setting.
\end{restatable}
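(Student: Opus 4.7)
The plan is to mirror the construction from \Cref{lem:TputForNoFaultToFault}, but to replace the adaptive ``broadcast-until-ack'' mechanism (which coding schedules lack, since $b_u^r$ may depend only on received history, not on delivery feedback) with an outer Reed-Solomon redundancy layer. Given $\{\mcS_k\}$ with throughput $\tau$, for each target batch size $k'$ I would choose $k = \lceil k'(1+\eta)/(1-p)\rceil$ for an arbitrarily small $\eta > 0$, encode the $k'$ original messages into $k$ ``outer packets'' via Reed-Solomon (exactly as used in \Cref{StarCode}, so that any $k'$ of the outer packets suffice to decode), and then invoke the faultless schedule $\mcS_k$ in the faulty model with these $k$ outer packets as its input messages.

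The core claim is that, with probability at least $1 - 1/k'$, every node receives at least $k'$ of the $k$ outer packets, after which it can Reed-Solomon decode to recover the originals. Each packet-delivery event that would succeed in the faultless execution of $\mcS_k$ is independently dropped with probability $p$ in the faulty execution (by definition for receiver faults, and because the sender's internal state is preserved when its transmission is noised out under sender faults). Thus each node has $k(1-p) \geq k'(1+\eta)$ successful receptions in expectation, and a standard Chernoff bound combined with a union bound over the $n$ nodes yields the $1/k'$ failure guarantee once $k'$ is chosen sufficiently large, which we are free to do via the $\limsup$ in the throughput definition. The throughput computation is then a direct analog of the routing case: $\mcS'_{k'}$ uses $T_k \leq k/(\tau - \eps)$ rounds, so $k'/T_k = (\tau - \eps)(1-p)/(1+\eta)$, which tends to $\tau(1-p)$ as $\eps, \eta \to 0$.

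The main obstacle is justifying the independence assumption in the Chernoff step, because fault-induced losses can cascade: a node that misses a packet cannot later rebroadcast it, correlating failures across other nodes. I would handle this via a coupling argument showing that the faulty execution stochastically dominates the $\mathrm{Bernoulli}(1-p)$-thinning of the faultless reception events---intuitively, faults can only subtract deliveries from the faultless execution, never add spurious ones, so the $(1-p)$-fraction bound survives as a valid one-sided lower bound even in the presence of cascades. Should this direct coupling prove technically awkward, a robust fallback is to prepend a short constant-factor repetition layer to each round of $\mcS_k$ that forces the per-round effective fault rate below any desired threshold, thereby decoupling the schedule's behavior from the faults and absorbing the overhead into the $(1+\eta)$ slack.
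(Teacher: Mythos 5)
Your construction has a genuine gap, and it lies exactly where you flag it, but the problem is worse than a missing coupling lemma. For a \emph{coding} schedule, the quantity ``number of outer packets received'' is not well defined: in the faulty execution of $\mcS_k$ a node receives coded packets $b_v^r(\cdot)$, not outer packets, and there is no guarantee that receiving a $(1-p)$-fraction of the transmissions of $\mcS_k$ lets a node decode \emph{any} fraction of the $k$ outer messages --- e.g.\ on the single-link topology the faultless schedule could send $m_1, m_1\oplus m_2,\dots,m_{k-1}\oplus m_k$, and losing the first packet alone destroys decodability of everything, so your outer Reed--Solomon layer has nothing to work with. The proposed stochastic-domination coupling also cannot be repaired: since $b_u^r$ depends on the received history, a single missed reception changes what $u$ broadcasts (or whether it broadcasts at all) in every later round, so the faulty execution is not a thinning of the faultless one --- it can create new collisions and transmits different packets, and counting reception events says nothing about what information they carry. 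Your fallback of a constant-factor repetition layer fails too: constant repetition only lowers the per-round fault probability to another constant, and making it small enough to union-bound over all $\mathrm{poly}(n)\cdot k$ rounds costs a $\Theta(\log(nk))$ factor, not $(1+\eta)$.

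The paper's proof applies Reed--Solomon at a different granularity, which is the missing idea. It runs $x$ parallel instances of the faultless schedule (blowing up $k_0$ to $k_0'=k_0x$), and for each single round $r$ of $\mcS_{k_0}$ it has node $u$ Reed--Solomon encode the tuple of $x$ coded packets $f_u^r$ (one per instance) into $x(1-p)^{-1}(1-\eta)^{-1}$ packets sent over a meta-round. Within a meta-round the set of transmitters is fixed, so the collision pattern matches the faultless round and each transmission is lost independently with probability $p$; a Chernoff plus union bound shows every node that should receive in round $r$ recovers at least $x$ packets and hence reconstructs the \emph{entire} faultless reception history for that round. This makes the simulation exact by induction --- no divergence, no cascades --- and the redundancy overhead is $(1-p)^{-1}(1-\eta)^{-1}$ per meta-round with $x=\Theta(\log(nk_0\tau^{-1}))$ absorbing the union bound, yielding throughput $\tau(1-p)$ as claimed.
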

\gdef\ProofTputForNoFaultToFaultCoding{
\begin{proof}
We construct a set of routing schedules, $\mcS'$, such that for any $\eps'$ and any $k'$ there exists a $k_0' \geq k'$ such that $\left|\tau (1-p) - \frac{k}{|\mcS_{k_0'}|}\right| \leq \eps'$, where $\mcS'_{k_0}$ is a schedule in $\mcS'$ that broadcasts $k_0'$ messages with probability of success at least $1-1/k_0$ and $|\mcS_{k_0'}|$ is the number of rounds it uses. Fix $\eps'$ and $k'$. We now show how to construct a schedule $\mcS'_{k_0'}$ that satisfies the above inequality.

Suppose we have a set of schedules $\mcS$ that achieves a throughput of $\tau$ in the faultless setting. That is, for any $\eps > 0$ and any $k$, there exists sufficiently large $k_0 \geq k$, such that $\left|\frac{k_0}{|\mcS_{k_0} |} - \tau \right| \leq \eps$ for some schedule $\mcS_{k_0}$ in $\mcS$. It follows that $|\mcS_{k_0}| \leq k_0(\tau - \eps)^{-1}$.
We use $\mcS_{k_0}$ to construct $\mcS'_{k_0}$, such that $\mcS'_{k_0}$ sends $k_0' = k_0x$ messages in $k_0'(\tau - \eps)^{-1}(1-p)^{-1}(1-\eta)^{-1}$ rounds with probability at least $1-1/k_0'$, for a value of $x$ to be chosen later and arbitrarily small $\eps$ and $\eta$. Notice that we can guarantee that $k_0' \geq k'$ by picking $k$ sufficiently large, since $k_0' = k_0x \geq kx$.

We construct $\mcS'_{k_0'}$ as follows. A message $m_i$ of $\mcS_{k_0}$ corresponds to $x$ messages of $\mcS_{k_0}'$, denoted by $m_{i1}, m_{i2}, \ldots, m_{ix}$. Each round of broadcast of $\mcS_{k_0}$ corresponds to a meta-round of $\mcS_{k_0'}'$ composed of $x(1-p)^{-1}(1-\eta)^{-1}$ rounds, for arbitrarily small $\eta > 0$. Let $f_u^r(m_1, \ldots, m_k)$ stand for the coded packet that $u$ broadcasts in round $r$ of $\mcS_{k_0}$. 
In $\mcS'_{k_0'}$, $u$ broadcasts in the corresponding meta-round as follows: $u$ computes, $F = f_u^r(m_{11}, \ldots, m_{k1}), \ldots, \allowbreak f_u^r(m_{1x}, \ldots, m_{k_0x})$; $u$ then uses Reed-Solomon coding on $F$ to create $x(1-p)^{-1}(1-\eta)^{-1}$ packets such that reception of any $x$ of these packets is sufficient to reconstruct all elements of $F$; $u$ broadcasts these $x(1-p)^{-1}(1-\eta)^{-1}$ packets over the course of the $x(1-p)^{-1}(1-\eta)^{-1}$ rounds of its meta-round.

We now argue that to show that $\mcS'_{k_0'}$ succeeds with probability at least $1- 1/k_0'$, it suffices to show that with probability at least $1-1/k_0'$ every node receives at least $x$ packets in every meta-round corresponding to a round of $\mcS_{k_0}$ in which the node received a broadcasted packet. Node $u$ is able to broadcast in a meta-round corresponding to round $r$ of $\mcS_{k_0}$ if it is able to construct the $x(1-p)^{-1}(1-\eta)^{-1}$ packets. Node $u$ is capable of constructing these packets if it is able to compute all elements of $F$, which it can do if it is able to compute $F' = f_{v_{r'}}^{r'}(m_{11}, \ldots, m_{k1}), \ldots, f_{v_{r'}}^{r'}(m_{1x}, \ldots, m_{k_0x})$ for any $r'<r$ where $r'$ is a round of $\mcS_{k_0}$ in which $u$ receives $f_{v_{r'}}^{r'}(m_{1}, \ldots, m_{k})$ from a neighbor $v_{r'}$. Node $u$ is able to compute $F'$ if in the meta-round corresponding to $r'$, $u$ receives at least $x$ packets. Thus, to show that $\mcS'_{k_0'}$ succeeds with probability at least $1- 1/k_0'$, it suffices to show that with probability at least $1-1/k_0'$ all nodes receive at least $x$ packets in any meta-round corresponding to a round of $\mcS_{k_0}$ in which they receive a coded packet. We show this as follows.

Consider a meta-round where a node is supposed to receive at least $x$ packets. The expected number of packets received by the node over the course of the meta-round is $x(1-\eta)^{-1}$. By a Chernoff bound, the probability that a node receives fewer than $x$ packets over the course of its meta-round is no greater than $\exp(-\Omega(x(1-\eta)))$. Letting $x = \Omega(\log (nk_0(\tau - \eps)^{-1})/(1-\eta))$, the probability that a node receives fewer than $x$ packets in a meta-round is at most $1/(nk_0(\tau - \eps)^{-1})^c$, for a positive constant $c$. By a union bound over the nodes, the probability that any node does not receive at least $x$ packets over the course of a meta-round is no greater than $1/(nk_0(\tau^{-1} - \eps))^{c-1}$. By a union bound over meta-rounds it follows that the probability that any node in any meta-round does not receive at least $x$ packets is at most $1/(nk_0(\tau - \eps)^{-1})^{c-2}$, which is at most $1/k_0'$, for sufficiently large $c$. Thus, the schedule $\mcS'$ succeeds in sending $k_0'$ messages in $k_0'(\tau - \eps)^{-1}(1-p)^{-1}(1-\eta)^{-1}$ rounds with a probability of failure of at most $1/k_0'$.

Lastly, we need to show that $\left|\tau (1-p) - \frac{k_0'}{|\mcS_{k_0'}|}\right| \leq \eps'$. However, notice that $\left|\tau (1-p) - \frac{k_0'}{|\mcS_{k_0'}|} \right| = \left|\tau (1-p) - \tau (1-p)(\tau - \eps)(1-\eta) \right|$, where $\eps$ and $\eta$ are arbitrarily small by our choice and hence the inequality clearly holds.
\end{proof}

}

\fullOnly{\ProofTputForNoFaultToFaultCoding}
\shortOnly{}

\subsection{Gaps for Sender Faults with Adaptive Routing}
We now use our transformations to derive a shared topology gap of $\Omega(\log \log n)$  and a worst case topology gap of $\Theta(1)$ in the \emph{sender fault setting} if routing is adaptive. This is a stark departure from the $\Theta(\log n)$ worst case topology gap of the \emph{receiver fault setting}. \fullOnly{}\shortOnly{The proof of the following appears in \Cref{subsec:SenderTopGap}.}




\begin{restatable}{theorem}{SenderTopGap}
\label{SenderTopGap}
The shared topology gap is $\Omega(\log \log n)$ with senders faults and adaptive routing.
\end{restatable}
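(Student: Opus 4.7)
The plan is to transfer the faultless $\Omega(\log\log n)$ shared topology gap of \citet{haeuplersodap1843} to the sender fault setting by combining our coding transformation with a simple simulation argument on the routing side. The topology and the separation are already proven in the faultless setting; I just need (i) a lower bound on the sender-fault coding throughput that matches its faultless counterpart up to $(1-p)$, and (ii) an upper bound on the sender-fault adaptive routing throughput that matches its faultless counterpart.

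For step (i), I would invoke the topology $(H, s^*)$ of \citet{haeuplersodap1843} on which, in the faultless setting, coding attains throughput $\Omega(1)$ while routing attains throughput $O(1/\log\log n)$. Applying \Cref{TputForNoFaultToFaultCoding} to a faultless coding schedule realizing $\Omega(1)$ throughput on $(H,s^*)$ immediately yields a coding schedule in the sender fault setting with throughput $(1-p)\cdot \Omega(1)=\Omega(1)$, since $p$ is a constant.

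For step (ii), I would upper bound $\tau^R(H, s^*)$ in the sender fault setting by $\tau^R(H, s^*)$ in the faultless setting via the following simulation: given any adaptive routing schedule $\mcS$ that broadcasts $k$ messages on $(H, s^*)$ in the sender fault setting with failure probability at most $1/k$ using $T$ rounds, I run $\mcS$ in the faultless setting and have the (centralized) scheduler internally flip independent Bernoulli$(p)$ coins to designate ``phantom'' faulty senders each round, discarding those packets before the reception functions are computed. Because an adaptive routing schedule is allowed to condition on the topology and on the entire history of reception events, this internally simulated schedule is itself a valid faultless adaptive routing schedule, and by construction the joint distribution of the fault pattern and node states is identical to that of $\mcS$ in the true sender fault model. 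Hence it succeeds with probability at least $1-1/k$ in exactly $T$ rounds, so the faultless throughput on $(H,s^*)$ is at least the sender-fault throughput. Combining (i) and (ii) yields the desired $\Omega(\log\log n)$ ratio.

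The main obstacle is ensuring that the faultless $O(1/\log\log n)$ routing upper bound of \citet{haeuplersodap1843} still holds against \emph{adaptive} routing on $(H, s^*)$; as the star example earlier in this paper demonstrates, adaptivity can improve routing throughput substantially on some networks. I would need to check that the Haeupler--Wajc lower bound is essentially structural (a cut- or bandwidth-type argument involving the number of distinct messages that can cross a bottleneck per round) and thus insensitive to adaptive feedback, or, failing that, to strengthen the construction so the same counting argument applies even when the scheduler observes which messages each node holds.
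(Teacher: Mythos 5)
Your proposal is correct and follows the paper's proof exactly: both steps---applying \Cref{TputForNoFaultToFaultCoding} to the $\Omega(1)$ faultless coding schedule of \citet{haeuplersodap1843}, and transferring the $O(1/\log\log n)$ faultless routing upper bound to the sender-fault setting---are precisely what the paper does, the only difference being that the paper asserts the second transfer as immediate while you supply the phantom-coin simulation justifying it. The ``main obstacle'' you flag at the end dissolves for a generic reason and requires no inspection of the Haeupler--Wajc argument: in the faultless setting reception is a deterministic function of the topology and the prior broadcast directives, so an adaptive routing schedule can be unrolled round by round into an ordinary non-adaptive routing schedule of the same length, and the internal randomness your simulation introduces can be fixed to any outcome on which the virtual execution succeeds, yielding a deterministic faultless schedule delivering all $k$ messages; hence the non-adaptive faultless $O(1/\log\log n)$ bound applies verbatim. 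One small point of care in your simulation: in the sender-fault model a faulty sender still transmits (noise) and still causes collisions, so the phantom-faulty nodes must continue to broadcast in the faultless simulation and only their \emph{reception} events are suppressed in the scheduler's virtual state; this is consistent because each node's actual knowledge is then a superset of its virtual knowledge, so the scheduler never directs a node to broadcast a message it does not hold.
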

\gdef\ProofSenderTopGap{
\begin{proof}
In~\cite{haeuplersodap1843}, a network that has a routing throughput of $O(1/\log \log n)$ in the faultless setting is given. A throughput upper bound of $O(1 / \log \log n)$ in the faultless setting clearly implies a throughput upper bound of $O(1 / \log \log n)$ in the sender fault setting. The work of~\cite{haeuplersodap1843} also provides a set of coding schedules that achieves a throughput of $\Omega(1)$ on the same network in the faultless setting. By \Cref{TputForNoFaultToFaultCoding}, we conclude a coding throughput of $\Omega(1)$ in sender fault setting. Thus, we conclude an $\Omega(\log \log n)$ gap on this topology and so a shared topology gap of $\Omega(\log \log n)$.
\end{proof}
}

\fullOnly{\ProofSenderTopGap}
\shortOnly{}

\begin{restatable}{theorem}{SenderWCTGap}
\label{SenderWCTGap}
The worst case topology gap is $\Theta(1)$ in the sender fault setting with adaptive routing.
\end{restatable}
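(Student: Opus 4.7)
The plan is to show that both $\min_{G,s} \tau^R(G,s) = \Theta(1/\log n)$ and $\min_{G,s} \tau^\NC(G,s) = \Theta(1/\log n)$ hold for the sender fault model with adaptive routing; $\mcG_w = \Theta(1)$ is then immediate from the definition of the worst case topology gap. The lower bound $\mcG_w \geq 1$ needs no work, because any routing schedule is also a coding schedule, so $\tau^\NC(G,s) \geq \tau^R(G,s)$ for every topology and hence $\min \tau^\NC \geq \min \tau^R$.

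For the denominator, I would start from the known faultless-setting bound of \citet{haeuplersodap1843} that $\min_{G,s} \tau^R(G,s) = \Omega(1/\log n)$ in the faultless model. Feeding an optimal faultless routing schedule on each topology through \Cref{lem:TputForNoFaultToFault} yields an adaptive routing schedule of throughput $(1-p) \cdot \Omega(1/\log n) = \Omega(1/\log n)$ in the sender fault setting, so $\min_{G,s} \tau^R(G,s) \geq \Omega(1/\log n)$ there. The matching upper bound $\min_{G,s} \tau^R(G,s) \leq O(1/\log n)$ then falls out of the coding upper bound below, since $\tau^R(G,s) \leq \tau^\NC(G,s)$ pointwise.

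For the numerator, \Cref{prop:DecayTPut} directly delivers $\min_{G,s} \tau^\NC(G,s) \geq \Omega(1/\log n)$ in the sender fault setting. For the matching upper bound I would use the worst case topology $G^*$ from \citet{haeuplersodap1843} whose faultless coding throughput is $O(1/\log n)$. The crux of that bound is purely structural: $G^*$ is built so that in any single round, regardless of which nodes broadcast, only an $O(1/\log n)$ fraction of the receivers can obtain a collision-free transmission. Because each receiver still needs $\Omega(k)$ legitimate packets to decode $k$ coded messages, this forces $\Omega(k \log n)$ rounds and hence $\tau^\NC(G^*) = O(1/\log n)$. Since sender faults can only convert otherwise-successful transmissions into noise, the per-round cap on legitimate receptions is preserved, and the same counting argument gives $\tau^\NC(G^*) \leq O(1/\log n)$ in the sender fault model.

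The main obstacle will be making this last "faults only hurt" transfer rigorous. Concretely, I plan to couple any sender-fault execution on $G^*$ with the faultless execution produced by the same coding schedule but where every faulted broadcast is replaced by silence, and then argue that the set of successfully decoded receptions in the sender-fault execution is a subset of the collision-free receptions in the coupled faultless execution; any per-round cap that works against coding in the faultless setting will then work against coding under sender faults as well. Once that step is nailed down, combining the two matching $\Theta(1/\log n)$ bounds gives $\mcG_w = \Theta(1)$.
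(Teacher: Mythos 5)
Your overall plan is the paper's: pin both $\min_{G,s}\tau^{\NC}(G,s)$ and $\min_{G,s}\tau^{R}(G,s)$ to $\Theta(1/\log n)$ in the sender fault setting and read off the ratio. Three of your four bounds match the paper's proof exactly -- the routing lower bound via \Cref{lem:TputForNoFaultToFault} applied to the $\Omega(1/\log n)$ faultless routing schedules of \citet{haeuplersodap1843}, the coding lower bound via \Cref{prop:DecayTPut}, and the coding upper bound by transferring the faultless $O(1/\log n)$ bound on the hard topology of \citet{haeuplersodap1843}. Your coupling argument for that last transfer (replace faulted broadcasts by silence, so sender-fault receptions are a subset of the coupled faultless receptions) is in fact more careful than the paper, which simply asserts that a faultless throughput upper bound ``clearly implies'' one for the sender fault setting.

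The one step that does not go through as written is deriving $\min_{G,s}\tau^{R}(G,s)\leq O(1/\log n)$ from ``$\tau^{R}(G,s)\leq\tau^{\NC}(G,s)$ pointwise, since any routing schedule is a coding schedule.'' Under the paper's definitions this containment fails syntactically: an adaptive routing schedule $b_r(\cdot)$ is handed the \emph{global} reception history (all tuples $(u,i)$ of who has received what, i.e.\ it effectively observes the realized fault pattern), whereas a coding schedule $b_u^r$ is a static function of only $u$'s \emph{local} packet history. So adaptive routing and coding are incomparable information models, and the pointwise inequality would itself require an argument (the same remark applies to your one-line claim that $\mcG_w\geq 1$ ``needs no work,'' though that claim is not actually needed once both minima are pinned to $\Theta(1/\log n)$). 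The fix is what the paper does: \citet{haeuplersodap1843} also give topologies on which \emph{routing} has faultless throughput $O(1/\log n)$; in the faultless setting adaptivity confers nothing (there are no faults to adapt to), and the same ``faults only remove receptions'' transfer you already set up for coding carries this routing upper bound into the sender fault setting directly.
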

\gdef\ProofSenderWCTGap{
\begin{proof}
Recall that the worst case topology gap is $\min_{G, s} \tau^\NC(G, s) / \min_{G, s} \tau^R(G, s)$.
We first show that $\min_{G, s} \tau^\NC(G, s) = \Theta(1/\log n)$. It is shown in~\cite{haeuplersodap1843} that there exist topologies where coding requires a throughput of $O(1/\log n)$ in the faultless setting. Note that a throughput upper bound in the faultless setting clearly implies one for the sender fault setting. Additionally, coding can achieve a throughput of $\Omega(1/ \log n)$ in the sender fault setting, by \Cref{prop:DecayTPut}.

We now show $\min_{G, s} \tau^R(G, s) = \Theta(1/\log n)$. It is shown in~\cite{haeuplersodap1843} that there exist topologies where routing requires a throughput of $O(1/\log n)$ in the faultless setting, and again we note that a throughput upper bound in the faultless setting clearly implies one for the sender fault setting. The work of~\cite{haeuplersodap1843} also shows that in the faultless setting there exists a set of routing schedules with a throughput of $\Omega(1 / \log n)$. We conclude, by \Cref{lem:TputForNoFaultToFault} it is possible to achieve a routing throughput of $\Omega(1 / \log n)$ in the sender fault setting.

Thus, $\min_{G, s} \tau^\NC(G, s) = \Theta(1/\log n)$ and $\min_{G, s} \tau^R(G, s) = \Theta(1/\log n)$ and so we conclude that with sender faults and adaptive routing the worst case topology gap is $\Theta(1)$.
\end{proof}
}

\fullOnly{\ProofSenderWCTGap}
\shortOnly{\ProofSenderWCTGap}


\newpage

\bibliography{./refs}

\begin{thebibliography}{38}
\providecommand{\natexlab}[1]{#1}
\providecommand{\url}[1]{\texttt{#1}}
\expandafter\ifx\csname urlstyle\endcsname\relax
  \providecommand{\doi}[1]{doi: #1}\else
  \providecommand{\doi}{doi: \begingroup \urlstyle{rm}\Url}\fi

\bibitem[Ahlswede et~al.(2006)Ahlswede, Cai, Li, and Yeung]{Ahlswede2006}
R.~Ahlswede, Ning Cai, S.~Y.R. Li, and R.~W. Yeung.
\newblock Network information flow.
\newblock \emph{IEEE Trans. Inf. Theor.}, 46\penalty0 (4):\penalty0 1204--1216,
  September 2006.
\newblock ISSN 0018-9448.
\newblock \doi{10.1109/18.850663}.
\newblock URL \url{http://dx.doi.org/10.1109/18.850663}.

\bibitem[Alon et~al.(1991)Alon, Bar-Noy, Linial, and Peleg]{alon1991lower}
Noga Alon, Amotz Bar-Noy, Nathan Linial, and David Peleg.
\newblock A lower bound for radio broadcast.
\newblock \emph{Journal of Computer and System Sciences}, 43\penalty0
  (2):\penalty0 290--298, 1991.

\bibitem[Alon et~al.(2014)Alon, Ghaffari, Haeupler, and
  Khabbazian]{haeuplersodap1843}
Noga Alon, Mohsen Ghaffari, Bernhard Haeupler, and Majid Khabbazian.
\newblock Broadcast throughput in radio networks: Routing vs. network coding.
\newblock \emph{ACM-SIAM Symposium on Discrete Algorithms (SODA) 2014}, pages
  1831--1843, 2014.
\newblock \doi{10.1137/1.9781611973402.132}.

\bibitem[Bar-Yehuda and Israeli(1989)]{Yehuda1989}
R.~Bar-Yehuda and A.~Israeli.
\newblock Multiple communication in multi-hop radio networks.
\newblock In \emph{Proceedings of the Eighth Annual ACM Symposium on Principles
  of Distributed Computing}, PODC '89, pages 329--338, New York, NY, USA, 1989.
  ACM.
\newblock ISBN 0-89791-326-4.
\newblock \doi{10.1145/72981.73005}.
\newblock URL \url{http://doi.acm.org/10.1145/72981.73005}.

\bibitem[Bar-Yehuda et~al.(1992)Bar-Yehuda, Goldreich, and Itai]{bar1992time}
Reuven Bar-Yehuda, Oded Goldreich, and Alon Itai.
\newblock On the time-complexity of broadcast in multi-hop radio networks: An
  exponential gap between determinism and randomization.
\newblock \emph{Journal of Computer and System Sciences}, 45\penalty0
  (1):\penalty0 104--126, 1992.

\bibitem[Censor-Hillel et~al.(2014)Censor-Hillel, Ghaffari, and
  Kuhn]{Censor-Hillel2014}
Keren Censor-Hillel, Mohsen Ghaffari, and Fabian Kuhn.
\newblock A new perspective on vertex connectivity.
\newblock In \emph{Proceedings of the Twenty-fifth Annual ACM-SIAM Symposium on
  Discrete Algorithms}, SODA '14, pages 546--561, Philadelphia, PA, USA, 2014.
  Society for Industrial and Applied Mathematics.
\newblock ISBN 978-1-611973-38-9.
\newblock URL \url{http://dl.acm.org/citation.cfm?id=2634074.2634115}.

\bibitem[Censor{-}Hillel et~al.(2014)Censor{-}Hillel, Gilbert, Kuhn, Lynch, and
  Newport]{Censor-HillelGKLN14}
Keren Censor{-}Hillel, Seth Gilbert, Fabian Kuhn, Nancy~A. Lynch, and Calvin~C.
  Newport.
\newblock Structuring unreliable radio networks.
\newblock \emph{Distributed Computing}, 27\penalty0 (1):\penalty0 1--19, 2014.
\newblock \doi{10.1007/s00446-013-0198-8}.
\newblock URL \url{http://dx.doi.org/10.1007/s00446-013-0198-8}.

\bibitem[Chlamtac and Kutten(1985)]{onBroadChlamtac}
I.~Chlamtac and S.~Kutten.
\newblock On broadcasting in radio networks - problem analysis and protocol
  design.
\newblock \emph{IEEE Transactions on Communications}, 33\penalty0
  (12):\penalty0 1240--1246, December 1985.
\newblock ISSN 0090-6778.
\newblock \doi{10.1109/TCOM.1985.1096245}.

\bibitem[Chlebus et~al.(2011)Chlebus, Kowalski, Pelc, and
  Rokicki]{chlebus2011efficient}
Bogdan~S Chlebus, Dariusz~R Kowalski, Andrzej Pelc, and Mariusz~A Rokicki.
\newblock Efficient distributed communication in ad-hoc radio networks.
\newblock In \emph{International Colloquium on Automata, Languages, and
  Programming}, pages 613--624. Springer, 2011.

\bibitem[Czumaj and Rytter(2006)]{Czumaj2006115}
Artur Czumaj and Wojciech Rytter.
\newblock Broadcasting algorithms in radio networks with unknown topology.
\newblock \emph{Journal of Algorithms}, 60\penalty0 (2):\penalty0 115 -- 143,
  2006.
\newblock ISSN 0196-6774.
\newblock \doi{http://dx.doi.org/10.1016/j.jalgor.2004.08.001}.
\newblock URL
  \url{//www.sciencedirect.com/science/article/pii/S0196677404001439}.

\bibitem[Doerr(2011)]{doerr2011analyzing}
Benjamin Doerr.
\newblock Analyzing randomized search heuristics: Tools from probability
  theory.
\newblock \emph{Theory of randomized search heuristics}, pages 1--20, 2011.

\bibitem[Dutta and Radhakrishnan(2008)]{DuttaR08}
Chinmoy Dutta and Jaikumar Radhakrishnan.
\newblock Lower bounds for noisy wireless networks using sampling algorithms.
\newblock In \emph{49th Annual {IEEE} Symposium on Foundations of Computer
  Science, {FOCS} 2008, October 25-28, 2008, Philadelphia, PA, {USA}}, pages
  394--402, 2008.
\newblock \doi{10.1109/FOCS.2008.72}.
\newblock URL \url{http://dx.doi.org/10.1109/FOCS.2008.72}.

\bibitem[Dutta et~al.(2008)Dutta, Kanoria, Manjunath, and
  Radhakrishnan]{DuttaKMR08}
Chinmoy Dutta, Yashodhan Kanoria, D.~Manjunath, and Jaikumar Radhakrishnan.
\newblock A tight lower bound for parity in noisy communication networks.
\newblock In \emph{Proceedings of the Nineteenth Annual {ACM-SIAM} Symposium on
  Discrete Algorithms, {SODA} 2008, San Francisco, California, USA, January
  20-22, 2008}, pages 1056--1065, 2008.
\newblock URL \url{http://dl.acm.org/citation.cfm?id=1347082.1347198}.

\bibitem[El-Gamal.(1984)]{ElGamal1984}
Abbas El-Gamal.
\newblock Open problems presented at the 1984 workshop on specific problems in
  communication and computation.
\newblock \emph{Sponsored by bell communication research}, 1984.

\bibitem[Gaber and Mansour(2003)]{GaberCentralized}
Iris Gaber and Yishay Mansour.
\newblock Centralized broadcast in multihop radio networks.
\newblock \emph{J. Algorithms}, 46\penalty0 (1):\penalty0 1--20, January 2003.
\newblock ISSN 0196-6774.
\newblock \doi{10.1016/S0196-6774(02)00292-4}.
\newblock URL \url{http://dx.doi.org/10.1016/S0196-6774(02)00292-4}.

\bibitem[Gallager(1988)]{Gallager88}
Robert~G. Gallager.
\newblock Finding parity in a simple broadcast network.
\newblock \emph{{IEEE} Trans. Information Theory}, 34\penalty0 (2):\penalty0
  176--180, 1988.
\newblock \doi{10.1109/18.2626}.
\newblock URL \url{http://dx.doi.org/10.1109/18.2626}.

\bibitem[Ghaffari and Haeupler(2013)]{ghaffari2013fast}
Mohsen Ghaffari and Bernhard Haeupler.
\newblock Fast structuring of radio networks large for multi-message
  communications.
\newblock In \emph{International Symposium on Distributed Computing}, pages
  492--506. Springer, 2013.

\bibitem[Ghaffari et~al.(2012)Ghaffari, Haeupler, Lynch, and
  Newport]{GhaffariHLN12}
Mohsen Ghaffari, Bernhard Haeupler, Nancy~A. Lynch, and Calvin~C. Newport.
\newblock Bounds on contention management in radio networks.
\newblock In \emph{Distributed Computing - 26th International Symposium, {DISC}
  2012, Salvador, Brazil, October 16-18, 2012. Proceedings}, pages 223--237,
  2012.
\newblock \doi{10.1007/978-3-642-33651-5_16}.
\newblock URL \url{http://dx.doi.org/10.1007/978-3-642-33651-5_16}.

\bibitem[Ghaffari et~al.(2013{\natexlab{a}})Ghaffari, Haeupler, and
  Khabbazian]{ghaffari2013bound}
Mohsen Ghaffari, Bernhard Haeupler, and Majid Khabbazian.
\newblock A bound on the throughput of radio networks.
\newblock \emph{arXiv preprint arXiv:1302.0264}, 2013{\natexlab{a}}.

\bibitem[Ghaffari et~al.(2013{\natexlab{b}})Ghaffari, Lynch, and
  Newport]{GhaffariLN13}
Mohsen Ghaffari, Nancy~A. Lynch, and Calvin~C. Newport.
\newblock The cost of radio network broadcast for different models of
  unreliable links.
\newblock In \emph{{ACM} Symposium on Principles of Distributed Computing,
  {PODC} '13, Montreal, QC, Canada, July 22-24, 2013}, pages 345--354,
  2013{\natexlab{b}}.
\newblock \doi{10.1145/2484239.2484259}.
\newblock URL \url{http://doi.acm.org/10.1145/2484239.2484259}.

\bibitem[Ghaffari et~al.(2015)Ghaffari, Haeupler, and
  Khabbazian]{ghaffari2015randomized}
Mohsen Ghaffari, Bernhard Haeupler, and Majid Khabbazian.
\newblock Randomized broadcast in radio networks with collision detection.
\newblock \emph{Distributed Computing}, 28\penalty0 (6):\penalty0 407--422,
  2015.

\bibitem[G\k{a}sieniec et~al.(2007)G\k{a}sieniec, Peleg, and
  Xin]{gkasieniec2007faster}
Leszek G\k{a}sieniec, David Peleg, and Qin Xin.
\newblock Faster communication in known topology radio networks.
\newblock \emph{Distributed Computing}, 19\penalty0 (4):\penalty0 289--300,
  2007.

\bibitem[Goyal et~al.(2008)Goyal, Kindler, and Saks]{GoyalKS08}
Navin Goyal, Guy Kindler, and Michael~E. Saks.
\newblock Lower bounds for the noisy broadcast problem.
\newblock \emph{{SIAM} J. Comput.}, 37\penalty0 (6):\penalty0 1806--1841, 2008.
\newblock \doi{10.1137/060654864}.
\newblock URL \url{http://dx.doi.org/10.1137/060654864}.

\bibitem[Haeupler(2011)]{haeupler2011analyzing}
Bernhard Haeupler.
\newblock Analyzing network coding gossip made easy.
\newblock In \emph{Proceedings of the forty-third annual ACM symposium on
  Theory of computing}, pages 293--302. ACM, 2011.

\bibitem[Halperin et~al.(2003)Halperin, Kortsarz, Krauthgamer, Srinivasan, and
  Wang]{Halperin2003}
Eran Halperin, Guy Kortsarz, Robert Krauthgamer, Aravind Srinivasan, and Nan
  Wang.
\newblock Integrality ratio for group steiner trees and directed steiner trees.
\newblock In \emph{Proceedings of the Fourteenth Annual ACM-SIAM Symposium on
  Discrete Algorithms}, SODA '03, pages 275--284, Philadelphia, PA, USA, 2003.
  Society for Industrial and Applied Mathematics.
\newblock ISBN 0-89871-538-5.
\newblock URL \url{http://dl.acm.org/citation.cfm?id=644108.644155}.

\bibitem[Kanoria and Manjunath(2007)]{KanoriaM2007}
Yashodhan Kanoria and D.~Manjunath.
\newblock On distributed computation in noisy random planar networks.
\newblock In \emph{Proceedings of the IEEE International Symposium on
  Information Theory (ISIT)}, pages 626--630, 2007.

\bibitem[Khabbazian and Kowalski(2011)]{Khabbazian2011}
Majid Khabbazian and Dariusz~R. Kowalski.
\newblock Time-efficient randomized multiple-message broadcast in radio
  networks.
\newblock In \emph{Proceedings of the 30th Annual ACM SIGACT-SIGOPS Symposium
  on Principles of Distributed Computing}, PODC '11, pages 373--380, New York,
  NY, USA, 2011. ACM.
\newblock ISBN 978-1-4503-0719-2.
\newblock \doi{10.1145/1993806.1993884}.
\newblock URL \url{http://doi.acm.org/10.1145/1993806.1993884}.

\bibitem[Kowalski and Pelc(2005)]{Kowalski2005}
Dariusz~R. Kowalski and Andrzej Pelc.
\newblock Broadcasting in undirected ad hoc radio networks.
\newblock \emph{Distrib. Comput.}, 18\penalty0 (1):\penalty0 43--57, July 2005.
\newblock ISSN 0178-2770.
\newblock \doi{10.1007/s00446-005-0126-7}.
\newblock URL \url{http://dx.doi.org/10.1007/s00446-005-0126-7}.

\bibitem[Kowalski and Pelc(2007)]{Kowalski2007}
Dariusz~R. Kowalski and Andrzej Pelc.
\newblock Optimal deterministic broadcasting in known topology radio networks.
\newblock \emph{Distributed Computing}, 19\penalty0 (3):\penalty0 185--195,
  2007.

\bibitem[Kuhn et~al.(2010)Kuhn, Lynch, Newport, Oshman, and Richa]{KuhnLNOR10}
Fabian Kuhn, Nancy~A. Lynch, Calvin~C. Newport, Rotem Oshman, and
  Andr{\'{e}}a~W. Richa.
\newblock Broadcasting in unreliable radio networks.
\newblock In \emph{Proceedings of the 29th Annual {ACM} Symposium on Principles
  of Distributed Computing, {PODC} 2010, Zurich, Switzerland, July 25-28,
  2010}, pages 336--345, 2010.
\newblock \doi{10.1145/1835698.1835779}.
\newblock URL \url{http://doi.acm.org/10.1145/1835698.1835779}.

\bibitem[Kuhn et~al.(2011)Kuhn, Lynch, and Newport]{KuhnLN11}
Fabian Kuhn, Nancy~A. Lynch, and Calvin~C. Newport.
\newblock The abstract {MAC} layer.
\newblock \emph{Distributed Computing}, 24\penalty0 (3-4):\penalty0 187--206,
  2011.
\newblock \doi{10.1007/s00446-010-0118-0}.
\newblock URL \url{http://dx.doi.org/10.1007/s00446-010-0118-0}.

\bibitem[Kushilevitz and Mansour(1993)]{Kushilevitz1993}
Eyal Kushilevitz and Yishay Mansour.
\newblock An omega(d log(n/d)) lower bound for broadcast in radio networks.
\newblock In \emph{Proceedings of the Twelfth Annual ACM Symposium on
  Principles of Distributed Computing}, PODC '93, pages 65--74, New York, NY,
  USA, 1993. ACM.
\newblock ISBN 0-89791-613-1.
\newblock \doi{10.1145/164051.164059}.
\newblock URL \url{http://doi.acm.org/10.1145/164051.164059}.

\bibitem[Kushilevitz and Mansour(1998)]{kushilevitz1998computation}
Eyal Kushilevitz and Yishay Mansour.
\newblock Computation in noisy radio networks.
\newblock In \emph{SODA}, volume~98, pages 236--243, 1998.

\bibitem[Li et~al.(2009)Li, Li, and Lau]{Li2009}
Z.~Li, B.~Li, and L.~C. Lau.
\newblock A constant bound on throughput improvement of multicast network
  coding in undirected networks.
\newblock \emph{IEEE Transactions on Information Theory}, 55\penalty0
  (3):\penalty0 1016--1026, March 2009.
\newblock ISSN 0018-9448.
\newblock \doi{10.1109/TIT.2008.2011516}.

\bibitem[Newman(2004)]{Newman04}
Ilan Newman.
\newblock Computing in fault tolerance broadcast networks.
\newblock In \emph{19th Annual {IEEE} Conference on Computational Complexity
  {(CCC} 2004), 21-24 June 2004, Amherst, MA, {USA}}, pages 113--122, 2004.
\newblock \doi{10.1109/CCC.2004.1313813}.
\newblock URL \url{http://dx.doi.org/10.1109/CCC.2004.1313813}.

\bibitem[Peleg(2007)]{Peleg2007}
David Peleg.
\newblock \emph{Time-Efficient Broadcasting in Radio Networks: A Review}, pages
  1--18.
\newblock Springer Berlin Heidelberg, Berlin, Heidelberg, 2007.
\newblock ISBN 978-3-540-77115-9.
\newblock \doi{10.1007/978-3-540-77115-9_1}.
\newblock URL \url{http://dx.doi.org/10.1007/978-3-540-77115-9_1}.

\bibitem[Wicker(1994)]{WickerRSCodes}
Stephen~B. Wicker.
\newblock \emph{Reed-Solomon Codes and Their Applications}.
\newblock IEEE Press, Piscataway, NJ, USA, 1994.

\bibitem[Ying et~al.(2006)Ying, Srikant, and Dullerud]{YingSD06}
Lei Ying, R.~Srikant, and Geir~E. Dullerud.
\newblock Distributed symmetric function computation in noisy wireless sensor
  networks with binary data.
\newblock In \emph{4th International Symposium on Modeling and Optimization in
  Mobile, Ad-Hoc and Wireless Networks (WiOpt 2006), 3-6 April 2006, Boston,
  Massachusetts, {USA}}, pages 336--344, 2006.
\newblock \doi{10.1109/WIOPT.2006.1666455}.
\newblock URL \url{http://dx.doi.org/10.1109/WIOPT.2006.1666455}.

\end{thebibliography}

\appendix

\section{Single-link Topology Gaps}\label{sec:single-link-gap}
We show here that with non-adaptive routing on the trivial single-link topology, which consists of exactly two nodes $s, t$ connected by an edge, a shared topology gap of $\Omega(\log k)$ is easily attainable with sender or receiver faults.

\begin{lemma}
\label{obs:routeSingleLinkNonAdap}
The routing throughput on the single-link topology with constant sender or receiver fault probability but without adaptive schedules is $\Theta\left(\frac{1}{\log k} \right)$.
\end{lemma}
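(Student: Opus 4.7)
The plan is to prove matching upper and lower bounds of $\Theta(k \log k)$ on the number of rounds needed for a non-adaptive routing schedule on the single-link topology to broadcast $k$ messages with failure probability at most $1/k$, which translates to a throughput of $\Theta(1/\log k)$. The core observation is that in both fault models each broadcast across the single edge is an independent Bernoulli trial: $t$ receives the broadcast message with probability $1-p$ and otherwise receives noise. Because the schedule is non-adaptive, $s$'s transmission sequence is fixed in advance, so the number of broadcasts $x_i$ devoted to message $m_i$ is a deterministic quantity.

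For the lower bound (upper bound on throughput), I would first argue that without loss of generality the schedule only ever transmits messages (never stays silent) and transmits each $m_i$ at least once, since silent rounds waste time and any message transmitted zero times is never received. Fix a schedule of length $T = \sum_i x_i$. By pigeonhole, some message $m_i$ is transmitted at most $T/k$ times, and the probability that $t$ fails to ever receive it is exactly $p^{x_i} \ge p^{T/k}$. For the schedule to succeed with probability at least $1 - 1/k$, this lower bound must satisfy $p^{T/k} \le 1/k$, which rearranges to $T \ge k \log_{1/p} k = \Omega(k \log k)$. Hence any successful non-adaptive schedule needs $\Omega(k \log k)$ rounds, yielding throughput $O(1/\log k)$.

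For the upper bound on rounds (lower bound on throughput), I would exhibit the obvious schedule: for each $i$ in turn, $s$ broadcasts $m_i$ for exactly $c \log k$ consecutive rounds, where $c = 2/\log(1/p)$ is a constant depending only on $p$. The probability that $t$ fails to receive a given $m_i$ is $p^{c \log k} = k^{-c \log(1/p)} = k^{-2}$, so a union bound over the $k$ messages gives total failure probability at most $1/k$. The schedule uses $c k \log k$ rounds, so the ratio $k / (c k \log k) = 1/(c \log k)$ demonstrates throughput $\Omega(1/\log k)$ in the sense used elsewhere in the paper.

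The only mildly subtle step is the pigeonhole argument in the lower bound: one might initially try a union bound $\Pr[\text{fail}] \le \sum_i p^{x_i}$ and optimize $\sum p^{x_i}$ subject to $\sum x_i = T$, but the pigeonhole route is cleaner since $\Pr[\text{fail}]$ is at least the probability that the worst-served message is lost. I do not expect any real obstacle beyond making sure the argument applies uniformly to both the sender-fault and receiver-fault variants, which it does because in each case a single scheduled broadcast on the edge $(s,t)$ is received correctly by $t$ with probability exactly $1-p$ independently of all other rounds.
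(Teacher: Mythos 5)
Your proof is correct, and on the impossibility direction it takes a genuinely different (and cleaner) route than the paper. The paper's lower bound first applies Markov's inequality to extract $k/2$ messages each broadcast at most $(\log k)/2$ times, and then uses independence across these messages to show the schedule fails with probability at least $1 - e^{-\sqrt{k}/2}$ — far more than the $1/k$ needed for a contradiction. You instead observe that the failure probability is already at least $p^{x_i}$ for the single least-served message, and pigeonhole gives $x_i \le T/k$, so $T < k\log_{1/p}k$ forces failure probability above $1/k$. Your argument is shorter, avoids the independence step entirely, and cleanly parametrizes the dependence on the target failure probability (a threshold of $\delta$ yields $T \ge k\log_{1/p}(1/\delta)$, matching your repetition-based upper bound); what the paper's argument buys in exchange is the stronger qualitative statement that a too-short schedule fails almost surely, not just with probability exceeding $1/k$. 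Your achievability direction (broadcast each message $\Theta(\log k)$ times and union-bound) is essentially identical to the paper's, and your observation that a single scheduled transmission on the edge succeeds with probability exactly $1-p$ in both fault models is precisely the reduction both proofs rely on. The only nit is that "without loss of generality the schedule never stays silent" is unnecessary for the lower bound — silent rounds only inflate $T$, so the pigeonhole over broadcast rounds goes through regardless — but this does not affect correctness.
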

\begin{proof}
For concreteness, we assume that $p = 1/2$ in this proof. We first show that the non-adaptive routing throughput on the single-link topology is $O(1 / \log k)$. We do so by proving that $\Omega(k\log k)$ rounds are necessary for broadcasting $k$ messages with a probability of failure of at most $1/k$, for sufficiently large $k$.

Suppose, for the sake of contradiction, that $(k \log k)/4$ rounds are sufficient for sending $k$ messages with a probability of failure of at most $1/k$ by some schedule $\mcS$.

Since every message is sent $(k \log k)/4$ times on average, by Markov's inequality there must exist a subset of size $k/2$ for which no message in the subset is sent more than $(\log k)/2$ times. Call this subset $K'$.

Since $\mcS$ succeeds in sending all of the $k$ messages with probability at least $1-\frac{1}{k}$, it must succeed in sending all $k/2$ messages in $K'$ with probability at least $1-\frac{1}{k}$. Moreover, by construction of $K'$, it sends no message in $K'$ more than $(\log k)/2$ times.

We now show that no schedule can succeed in routing $k/2$ messages with probability $1-\frac{1}{k}$ if it sends no one of these $k/2$ messages more than $(\log k) /2$ times. If no message is sent more than $(\log k) /2$ times, the probability that a fixed message is not received is at least $(1/2)^{(\log k)/2} = 1/ \sqrt{k}$. Since each message succeeds independently, the probability that all $k/2$ messages are received successfully is $(1- 1/\sqrt{k})^{k/2}$ and so the probability that some message is not received is $1 - (1- 1/\sqrt{k})^{k/2}  \geq 1 - e^{-\sqrt{k}/2}$, by the inequality $e^{-x} \geq 1-x$. Therefore, we have that the probability that one of the $k/2$ messages is not received is at least $1 - e^{-\sqrt{k}/2}$, implying that the probability that all messages are received is no greater than $e^{-\sqrt{k}/2}$. Since $e^{-\sqrt{k}/2} \leq 1-1/k$ for sufficiently large $k$, we conclude that 
$\Omega(k\log k)$ rounds are necessary for broadcasting $k$ messages with a probability of failure of at most $1/k$, for sufficiently large $k$, implying that the throughput is $O(1/\log k)$.

We now show that a non-adaptive routing throughput of $\Omega(1/\log k)$ is achievable on the single-link topology. We do so by sending each message $100 \log k$ times from the source. The probability that a fixed message is not received is $\frac{1}{2}^{100\log {k}} = 1/k^{100}$. By a union bound the probability that any message is not received is $1/k^{99} \leq 1/k$. Thus, we broadcast $k$ messages in $100k \log k$ rounds with a probability of failure of at most $1/k$, showing a throughput of $\Omega(1/ \log k)$.\end{proof}

\begin{lemma}
\label{obs:codeSingleLink}
The coding throughput on the single-link topology with constant fault probability is $\Theta(1)$.
\end{lemma}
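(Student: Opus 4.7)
The plan is to mirror the proof of \Cref{StarCode} but in the even simpler single-link setting. The $O(1)$ upper bound is immediate: the receiver can decode at most one packet per round, so no schedule (routing or coding) can achieve throughput greater than $1$, since at least $k$ rounds are strictly necessary to convey $k$ messages worth of information to $t$.

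For the $\Omega(1)$ lower bound, I would exhibit an explicit coding schedule based on Reed-Solomon coding, as set up at the end of the ``Throughput Gaps'' section. Pick a constant $C > 1/(1-p)$ (say $C = 4/(1-p)$). Given $k$ input messages, the source $s$ uses Reed-Solomon to generate $Ck$ coded packets with the property that any $k$ of them suffice to reconstruct the original $k$ messages. The schedule simply has $s$ broadcast these $Ck$ packets, one per round, for a total of $Ck$ rounds; $t$ listens every round.

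Each transmitted packet is received correctly by $t$ with probability $1-p$, independently across rounds (in both the sender and receiver fault models, since there is exactly one sender and exactly one listener). The expected number of packets received is $Ck(1-p) \geq 4k$, so by a standard Chernoff bound the number of packets received is at least $k$ except with probability $\exp(-\Omega(k))$. For sufficiently large $k$, this failure probability is at most $1/k$, as required by the throughput definition in the faulty setting.

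To conclude throughput $\Omega(1)$: for any $\eps > 0$ and any $k'$, take $k_0 \geq k'$ large enough that the above schedule sends $k_0$ messages in $Ck_0$ rounds with failure probability $\leq 1/k_0$, so $\left|\frac{k_0}{Ck_0} - \frac{1}{C}\right| = 0 \leq \eps$. Hence the coding throughput is at least $1/C = \Theta(1)$. Combined with the trivial upper bound, this yields $\Theta(1)$. The main (and only nontrivial) step is the Chernoff concentration showing that $Ck$ broadcasts suffice, which is entirely routine given that the faults are independent across rounds; there is no obstacle analogous to the star case (no union bound over many receivers is needed).
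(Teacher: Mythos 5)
Your proposal is correct and matches the paper's proof essentially verbatim: both use Reed--Solomon to generate $O(k)$ packets from which any $k$ suffice to decode, broadcast them all, and apply a Chernoff bound to show at least $k$ are received except with probability $\exp(-\Omega(k)) \leq 1/k$. The only cosmetic difference is that you make the constant's dependence on $p$ explicit ($C = 4/(1-p)$) where the paper just writes $100k$.
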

\begin{proof}
A throughput of $O(1)$ trivially follows from the fact that $\Omega(k)$ rounds are necessary to broadcast.


To show that a throughput of $\Omega(1)$ is achievable, we provide a schedule which uses Reed-Solomon codes. Using Reed-Solomon codes it is possible to generate and send $100k$ packets such that the reception of any $k$ packets by the receiver suffices for the receiver to reconstruct the original $k$ messages. The source $s$ generates and sends these $100k$ packets. By Chernoff bounds, the probability that fewer than $k$ messages are received is bounded from above by $\exp(-ck) \leq 1/k$, for some appropriate constant $c$. Thus, we succeed in broadcasting $k$ messages in $100k$ rounds with a failure probability of at most $1/k$, showing a throughput of $\Omega(1)$.
\end{proof}

Lemmas~\ref{obs:routeSingleLinkNonAdap} and~\ref{obs:codeSingleLink} give the following corollary.
\begin{lemma}
\label{cor:singLinkGap}
The single-link topology has a $\Theta(\log k)$ coding gap in the receiver/sender fault, non-adaptive routing setting, demonstrating an $\Omega(\log k)$ shared topology gap in this setting.
\end{lemma}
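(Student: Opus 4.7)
The plan is to derive Lemma \ref{cor:singLinkGap} directly from the two preceding lemmas in this appendix: Lemma \ref{obs:routeSingleLinkNonAdap} (non-adaptive routing throughput on a single link is $\Theta(1/\log k)$) and Lemma \ref{obs:codeSingleLink} (coding throughput on a single link is $\Theta(1)$). Both lemmas hold with constant sender or receiver fault probability, which matches the setting of the corollary, so no further probabilistic work is needed.

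First I would argue the per-topology coding gap. By definition, the coding gap on $(G,s)$ is $\tau^{\NC}(G,s) / \tau^{R}(G,s)$. Instantiating with the single-link topology $(G^\star, s)$ and plugging in the two lemmas gives
\[
\tau^{\NC}(G^\star, s) \,/\, \tau^{R}(G^\star, s) \;=\; \Theta(1) \,/\, \Theta(1/\log k) \;=\; \Theta(\log k),
\]
establishing the first part of the claim. The direction of the inequality is clean: the lemmas give matching upper and lower bounds on both throughputs, so the ratio is tight.

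Second I would lift this to the shared topology gap. Recall that $\mcG_s = \max_{G,s} \tau^{\NC}(G,s)/\tau^{R}(G,s)$, so any single witness topology provides a lower bound. Since $(G^\star, s)$ exhibits a ratio of $\Theta(\log k)$ in the non-adaptive routing regime, we conclude $\mcG_s = \Omega(\log k)$ in the sender/receiver fault, non-adaptive routing setting, which is exactly the second half of the statement.

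The only subtlety is bookkeeping: one must note that the throughput definition of Definition~\ref{def:WCTGap} takes a $\limsup_{k\to\infty}$, so the $\log k$ factor and the $\log n$ factors elsewhere in the paper should not be conflated. Here $n = 2$ is fixed for the single-link topology, so all asymptotics are genuinely in $k$, and no extra care is required beyond citing the two input lemmas. No single step here is a real obstacle; the content was already done in the two preceding lemmas, and this corollary is just the combination.
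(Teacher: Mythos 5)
Your proposal is correct and matches the paper exactly: the paper gives no separate argument, simply noting that Lemmas~\ref{obs:routeSingleLinkNonAdap} and~\ref{obs:codeSingleLink} combine to give the stated gap, which is precisely your derivation. (Minor nit: the $\limsup$ over $k$ lives in the topology-throughput definition, not in Definition~\ref{def:WCTGap}, but this does not affect the argument.)
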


We now examine a stronger lower bound model - one with adaptive routing. The single-link topology coding gap disappears in this setting.
\begin{lemma}
\label{obs:routeSingleLinkAdap}
The adaptive routing throughput on the single-link topology with receiver/sender faults is $\Theta(1)$.
\end{lemma}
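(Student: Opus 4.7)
The plan is to prove matching $O(1)$ and $\Omega(1)$ bounds on the adaptive routing throughput of the single-link topology $(s, t)$ under either fault model, mirroring the structure of the star-topology argument in \Cref{StarLowerRoute} but with the trivial cluster of one receiver.

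For the upper bound, the observation is immediate: even in the faultless setting, each round can transmit at most one new message across the unique edge, so broadcasting $k$ messages requires at least $k$ rounds, yielding throughput at most $1$. No probabilistic argument is needed.

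For the lower bound I will exhibit an explicit adaptive routing schedule. Because the adaptive model grants the schedule access to the history of tuples $(u, i)$ recording which messages have been received, $s$ knows exactly when $t$ first receives each message. The schedule simply broadcasts $m_1$ from $s$ until $t$ receives it, then switches to $m_2$, and so on through $m_k$, terminating after at most $Ck$ rounds for a suitably large constant $C = C(p)$. Letting $X_i$ be the number of rounds spent on $m_i$, each $X_i$ is an independent geometric random variable with success probability $1-p$ (since the only failure mode is a sender or receiver fault, which occurs independently with probability $p$), so $X = \sum_i X_i$ has $\E[X] = k/(1-p)$. Applying the Chernoff bound for geometric random variables (\Cref{thm:geomChernoffBound}, as used in the proof of \Cref{StarLowerRoute}) with constant deviation parameter gives
\[
\Pr\bigl[X \ge (1+\delta)\, k/(1-p)\bigr] \le \exp(-\Omega(k)) \le 1/k
\]
for sufficiently large $k$. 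Choosing $C = 2/(1-p)$ then ensures the schedule sends $k$ messages in $Ck$ rounds with probability of failure at most $1/k$.

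To conclude a throughput of exactly $\Omega(1)$ under the paper's definition, I will verify the standard limsup condition: for every $\eps > 0$ and every $k$, there exists $k_0 \ge k$ with $\bigl|k_0/(Ck_0) - 1/C\bigr| = 0 \le \eps$, which is trivial. There is no real obstacle here; the only subtlety is emphasizing that adaptivity is crucial, since without it the corresponding throughput is $\Theta(1/\log k)$ by \Cref{obs:routeSingleLinkNonAdap}. Adaptivity lets $s$ stop retransmitting as soon as $t$ succeeds, so the fault probability costs only a constant factor per message rather than a $\log k$ factor needed to drive down union-bound failure across all $k$ messages broadcast in parallel.
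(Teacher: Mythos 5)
Your proposal is correct and follows essentially the same route as the paper's proof: an adaptive schedule that retransmits each message until it is received, modeling the per-message round counts as independent geometric random variables with success probability $1-p$, and applying the Chernoff bound of \Cref{thm:geomChernoffBound} to bound the total at $O(k/(1-p))$ rounds with failure probability at most $1/k$. The only differences are cosmetic (your constant $C=2/(1-p)$ versus the paper's $4k/(1-p)$ budget, and your slightly more explicit treatment of the trivial $O(1)$ upper bound and the limsup condition).
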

\begin{proof}
A throughput of $O(1)$ trivially follows because $\Omega(k)$ rounds are strictly necessary.

We prove a throughput of $\Omega(1)$ by providing the following adaptive routing schedule. The source sends each message until it is received, but for no more than $\frac{4k}{1-p}$ rounds in total. For every $1\leq i\leq k$, let $C_i$ be the random variable that stands for the number of rounds during which the $i$-th message $m_i$ is transmitted and let $C = \sum_i C_i$. Since $C_i$ is a geometric random variable with probability of success $1-p$, it holds that $\E[C] = \frac{k}{1-p}$. By the Chernoff bound for geometric random variables (see \Cref{thm:geomChernoffBound}), we have that the probability of failure is $\Pr(C \geq (1+3)\frac{k}{1-p}) \leq \exp(-\frac{9(k-1)}{2(1+3)})$, which is at most $1/k$ for sufficiently large $k$. Thus, we succeed in sending $k$ messages in $\frac{4k}{1-p}$ rounds with a probability of failure of at most $1/k$, showing a throughput of $\Omega(1)$.
\end{proof}

Lemmas~\ref{obs:codeSingleLink} and \ref{obs:routeSingleLinkAdap} give the following constant gap on the single-link topology.
\begin{lemma}
\label{cor:singLinkGapAdap}
For sender or receiver faults, there is a $\Theta(1)$ coding gap for the single-link topology if routing is done adaptively.
\end{lemma}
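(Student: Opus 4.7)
The plan is to observe that this corollary is immediate from the two preceding lemmas applied to the single-link topology. By definition, the coding gap of a fixed topology $(G,s)$ is the ratio $\tau^{\NC}(G,s)/\tau^{R}(G,s)$, so it suffices to plug in the coding throughput and the adaptive routing throughput already established for the single-link topology.

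First, I would invoke \Cref{obs:codeSingleLink}, which gives $\tau^{\NC}(G,s) = \Theta(1)$ for the single-link topology under constant sender or receiver fault probability. Next, I would invoke \Cref{obs:routeSingleLinkAdap}, which gives the adaptive routing throughput $\tau^{R}(G,s) = \Theta(1)$ on the same topology under either kind of fault. Taking the ratio of the two $\Theta(1)$ expressions yields a coding gap of $\Theta(1)$, which is exactly the claim.

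There is no real obstacle here since both bounds are already in hand; the only thing to check is that the two cited lemmas apply in the same fault regime (sender or receiver) and under the adaptive routing model being compared, both of which are explicit in their statements. Hence the proof is essentially a one-line combination: combine \Cref{obs:codeSingleLink} and \Cref{obs:routeSingleLinkAdap} and conclude.
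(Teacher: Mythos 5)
Your proposal is correct and matches the paper exactly: the paper likewise obtains this gap by combining \Cref{obs:codeSingleLink} (coding throughput $\Theta(1)$) with \Cref{obs:routeSingleLinkAdap} (adaptive routing throughput $\Theta(1)$) and taking the ratio. Nothing further is needed.
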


\shortOnly{
\section{Omitted Proofs}
\label{sec:omProofs}
}

\shortOnly{
\subsection{Proof of \Cref{obs:sharedWorstCaseGapCompare}}
\label{subsec:gapCompareProof}
\sharedWorstCaseGapCompare*
\ProofsharedWorstCaseGapCompare
}

\shortOnly{
\subsection{Proof of \Cref{RobustDecay}}
\label{subsec:RobustDecay}
\RobustDecay*
\ProofRobustDecay
}

\shortOnly{
\subsection{Proof of \Cref{prop:fastbcBadFaults}}
\label{subsec:fastbcBadFaults}
\fastbcBadFaults*
\ProofFastbcBadFaults
}

\shortOnly{
\subsection{Proof of \Cref{RobustFASTBC}}
\label{subsec:RobustFASTBC}
\RobustFASTBC*
\ProofRobustFASTBC
}

\shortOnly{
\subsection{Proof of \Cref{StarLowerRoute}}
\label{subsec:StarLowerRoute}
\StarLowerRoute*
\ProofStarLowerRoute
}

\shortOnly{
\subsection{Proof of \Cref{StarCode}}
\label{subsec:StarCode}
\StarCode*
\ProofStarCode
}

\shortOnly{
\subsection{Proof of \Cref{AdapRoutWCT}}
\label{subsec:AdapRoutWCT}
\AdapRoutWCT*
\ProofAdapRoutWCT
}

\shortOnly{
\subsection{Proof of \Cref{RoutingBipartite}}
\label{subsec:RoutingBipartite}
\RoutingBipartite*
\ProofRoutingBipartite
}

\shortOnly{
\subsection{Proof of \Cref{AdapRoutingRecFault}}
\label{subsec:AdapRoutingRecFault}
\AdapRoutingRecFault*
\ProofAdapRoutingRecFault
}

\shortOnly{
\subsection{Proof of \Cref{WCTCode}}
\label{subsec:WCTCode}
\WCTCode*
\ProofWCTCode
}

\shortOnly{
\subsection{Proof of \Cref{lem:TputForNoFaultToFault}}
\label{subsec:TPutRoutingTransformProof}
\TputForNoFaultToFault*
\ProofTputForNoFaultToFault
}

\shortOnly{
\subsection{Proof of \Cref{TputForNoFaultToFaultCoding}}
\label{subsec:TputForNoFaultToFaultCoding}
\TputForNoFaultToFaultCoding*
\ProofTputForNoFaultToFaultCoding
}

\shortOnly{
\subsection{Proof of \Cref{SenderTopGap}}
\label{subsec:SenderTopGap}
\SenderTopGap*
\ProofSenderTopGap
}

\shortOnly{
\subsection{Proof of \Cref{SenderWCTGap}}
\label{subsec:SenderWCTGap}
\SenderWCTGap*
\ProofSenderWCTGap
}

\section{Other Tools}
We use the following Chernoff bound for geometric random variables.
\begin{theorem}[\citet{doerr2011analyzing}]
\label{thm:geomChernoffBound}
Let $p \in (0,1)$. Let $X_1, \ldots , X_n$ be independent geometric random variables with $\Pr(X_i = j) = (1-p)^{j-1}p$ for all $j \in \mathbb{N}$ and let $X = \Sigma_{i=1}^n X_i$. Then for all $\delta > 0$,
\begin{align*}
\Pr(X \geq (1+\delta)\E[X]) \leq \exp\left(-\frac{\delta^2(n-1)}{2(1+\delta)}\right).
\end{align*}
\end{theorem}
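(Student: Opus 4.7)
\medskip

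\noindent\textbf{Proof proposal for \Cref{thm:geomChernoffBound}.} The plan is to reduce the geometric-sum tail bound to a well-known lower-tail Chernoff bound for a binomial random variable, via the standard negative-binomial coupling. Concretely, view $X = X_1 + \cdots + X_n$ as the first time at which the $n$-th success occurs in an infinite sequence of independent Bernoulli$(p)$ trials: $X_i$ is the number of trials from the $(i-1)$-th success (exclusive) up to and including the $i$-th success. With this interpretation, for any integer $m \ge n$,
\begin{equation*}
    \Pr(X \ge m) \;=\; \Pr\bigl(\text{at most } n-1 \text{ successes occur in the first } m-1 \text{ trials}\bigr) \;=\; \Pr(Y \le n-1),
\end{equation*}
where $Y \sim \mathrm{Binomial}(m-1, p)$. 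Setting $m = \lceil (1+\delta)\E[X] \rceil = \lceil (1+\delta) n/p \rceil$, we have $\E[Y] = (m-1)p \ge (1+\delta)n - p$, and we want to upper-bound the probability that $Y$ falls below $n-1$, i.e.\ well below its mean when $\delta > 0$.

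The next step is to apply the standard multiplicative lower-tail Chernoff bound for a binomial: for any $\gamma \in (0,1)$,
\begin{equation*}
    \Pr(Y \le (1-\gamma)\E[Y]) \;\le\; \exp\!\left(-\gamma^2 \E[Y] / 2\right).
\end{equation*}
We pick $\gamma$ so that $(1-\gamma)\E[Y] = n-1$; a direct calculation gives $\gamma = 1 - (n-1)/\E[Y] \ge \delta/(1+\delta)$, using $\E[Y] \ge (1+\delta)(n-1)$ (one checks this after absorbing the $-p$ correction, which is harmless since $p \in (0,1)$). Substituting back yields
\begin{equation*}
    \Pr(X \ge (1+\delta)\E[X]) \;\le\; \exp\!\left(-\frac{\delta^2}{(1+\delta)^2}\cdot\frac{\E[Y]}{2}\right) \;\le\; \exp\!\left(-\frac{\delta^2 (n-1)}{2(1+\delta)}\right),
\end{equation*}
which is the claimed bound.

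The plan's only delicate part is the bookkeeping around replacing $n$ by $n-1$: the $-1$ in the exponent of the stated bound arises from the fact that $(m-1)p$, rather than $mp$, is the mean of $Y$, together with the requirement $(1-\gamma)\E[Y] = n-1$ (not $n$). All of this is routine once the negative-binomial coupling has been written down. The main obstacle, if any, is purely arithmetic: verifying that the chosen $\gamma$ indeed satisfies $\gamma \ge \delta/(1+\delta)$ even after accounting for the integer rounding $m = \lceil (1+\delta)n/p \rceil$ and the $-p$ slack in $\E[Y]$. Since $\delta^2/(1+\delta)$ is monotone in $\delta$ for $\delta > 0$, any small loss in $\gamma$ can be absorbed into the exponent without changing the functional form of the bound.
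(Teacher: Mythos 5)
The paper does not actually prove this statement---it is imported as-is from the cited reference (Doerr, 2011)---so there is no in-paper argument to compare against. Your proof is correct and is essentially the standard derivation used in that source: the negative-binomial coupling $\Pr(X \ge m) = \Pr(\mathrm{Bin}(m-1,p) \le n-1)$ is exact, the estimates $\E[Y] = (m-1)p \ge (1+\delta)n - p \ge (1+\delta)(n-1)$ and hence $\gamma = 1 - (n-1)/\E[Y] \ge \delta/(1+\delta)$ check out, and the final exponent works because $\gamma^2 \E[Y] = (\E[Y]-(n-1))^2/\E[Y]$ is increasing in $\E[Y]$ on $[\,n-1,\infty)$ and therefore is minimized at $\E[Y]=(1+\delta)(n-1)$, where it equals $\delta^2(n-1)/(1+\delta)$; the $n=1$ case is vacuous since the right-hand side is $1$.
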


%

\end{document}